\documentclass[12pt]{article}

\usepackage{amsmath,amssymb,amsthm}
\usepackage{geometry}
\usepackage{tikz,tkz-graph,tikz-cd,tikz-qtree,youngtab}

\newtheorem{theorem}{Theorem}[section]
\newtheorem{lemma}[theorem]{Lemma}
\newtheorem{corollary}[theorem]{Corollary}

\newtheorem{proposition}[theorem]{Proposition}

\theoremstyle{remark}
\newtheorem{remark}[theorem]{Remark}
\theoremstyle{definition}
\newtheorem{example}[theorem]{Example}
\theoremstyle{definition}

\begin{document}
	
\bibliographystyle{abbrv}
	
\title{BCH and LCD cyclic codes of length $n=\lambda(q^m+1)$ over finite fields}
	
\author{Jinle Liu$^{1}$, Hongfeng Wu$^{1}$ and Li Zhu$^{2}$ \footnote{Corresponding author.}
\setcounter{footnote}{-1}
\footnote{E-Mail addresses:
cohomologyliu@163.com (J. Liu), whfmath@gmail.com (H. Wu), lizhumath@pku.edu.cn (L. Zhu)}
\\
{1.~College of Science, North China University of technology, Beijing, China}
\\
{2.~School of Mathematical Sciences, Guizhou Normal University, Guiyang, China}}
	
\date{}
\maketitle

\thispagestyle{plain}
\setcounter{page}{1}	

\begin{abstract}
BCH and LCD cyclic codes of length $n=\lambda(q^m+1)$ with $\lambda\mid q-1$ are studied. A complete characterization of $q$-cyclotomic cosets modulo $n$ is given: Theorem \ref{th4} provides a necessary and sufficient condition for any $0\le \gamma<n$ to be a coset leader, and for odd $m$, the two largest coset leaders are explicitly determined (Theorem \ref{th9} and Theorem \ref{th14}). Based on these results, the dimensions of several families of BCH codes are determined, and the lower bound on the minimal distance of $\mathcal{C}_{(q,n,2\delta+1,n-\delta+1)}$ is raised to $2(\delta+1)$ (Theorem \ref{th15}--\ref{th5}). Notably, several of these codes are optimal. When $m$ is odd, the necessary and sufficient condition for the BCH code $\mathcal{C}_{(q,n,\delta,0)}$ to be dually-BCH is proved (Theorem \ref{th11}). Finally, an exact enumeration of all LCD cyclic codes of this length is derived (Theorem \ref{th3}). All of the above results extend previous results that were limited to $\lambda=1$.

{\bf Keywords.}  BCH code, Cyclic code, Dually-BCH code, Cyclotomic coset, Finite field.\\

{\bf Mathematics Subject Classification (2000)}  94B05, 94B15, 94B60, 11T71.
\end{abstract}

\section{Introduction}
Cyclic codes, a prominent subclass of error-correcting codes, have been extensively studied and are widely used in practice. 
Their rich algebraic structure enables efficient encoding and decoding, making them indispensable in numerous fields such as 
communications, data storage, and aerospace engineering. Important families of cyclic codes (and their extended 
variants)-including BCH codes, Reed-Solomon codes, Hamming codes, and Reed-Muller codes-form the foundation of modern coding 
theory. A central problem in the study of cyclic codes is the determination of two key parameters: dimension and minimum distance, 
which directly govern a code’s error-correcting capability and practical utility.

BCH codes, as an important class of cyclic codes, are widely employed in satellite communications and other critical systems. 
They were first introduced by Hocquenghem \cite{hoc} and independently developed by Bose and Ray-Chaudhuri \cite{bose}. 
BCH codes offer two key advantages: precise control over the number of correctable symbol errors, and efficient encoding 
and decoding algorithms. These properties have led to their adoption in diverse devices and systems, including DVDs, solid-state drives, disk drives, compact disc players, and satellite networks.

Primitive BCH codes(length $q^m-1$) over $\mathbb{F}_q$ have been 
studied most intensively, with notable contributions found in, e.g., \cite{aly, cherchem, dingcunsheng2, dingcunsheng4, 
gongbinkai, huangxinmei, lichengju, dingcunsheng7, dingcunsheng6, yuming, yuedianwu, zhangyanhui}. In particular, 
the well-known Reed–Solomon codes are BCH codes of length $q-1$. Beyond the primitive case, many results are also available for 
BCH codes of non-primitive lengths. BCH codes of length $\frac{q^m-1}{q-1}$ are also of great interest. Notably, many Hamming 
codes are narrow-sense BCH codes of this length. In general, it is difficult to determine the dimension, minimum distance, 
weight distribution and other parameters of BCH codes. The parameters of BCH codes of length $\frac{q-1}{\lambda}$ for 
$\lambda=q-1$ and $\lambda\mid q-1$ were investigated in \cite{lichengju, dingcunsheng7} and \cite{zhu}. Codes of 
length $n=q^m+1$ were studied in \cite{fuyuqing, lichengju, dingcunsheng7, yanhaode}; and codes of length 
$\frac{2(q^{2m}-1)}{q+1}$ and $\frac{q^{2m}-1}{q+1}$ were considered in 
\cite{pangbinbin, zhuhuan}. This list, though not exhaustive, reflects the breadth of known work in the area.

A linear code $\mathcal{C}$ over $\mathbb{F}_q$ is called linear complementary dual code(LCD code) provided $\mathcal{C}\cap 
\mathcal{C}^{\perp}=\{\textbf{0}\}$.  In earlier studies, cyclic LCD codes over finite fields were sometimes called 
reversible codes and were initially investigated by Massey for applications in data storage systems \cite{mass}. LCD codes 
have attracted considerable attention in recent years because of their relevance in cryptography, where they play an 
important role in protecting against side-channel attacks and fault non-invasive attacks (see \cite{bri, cle} for 
comprehensive discussions). A significant result from \cite{lichengju} states that a cyclic code of length $n$ 
over $\mathbb{F}_q$ is LCD whenever $-1$ can be written as a power of $q$ modulo $n$.

It is well known that the parameters of BCH codes and LCD cyclic codes are closely related to the structure of $q$-cyclotomic cosets. Precise descriptions of these cyclotomic cosets—including their representatives, leaders, sizes, and enumeration—are essential for determining code dimension and minimum distance. Considerable progress has been made in special cases: for example, the representatives and sizes of various families of cyclotomic cosets were determined in \cite{Chen}, \cite{Chen 2}, \cite{Geraci}, \cite{Sharma}, \cite{Liu}, \cite{Sharma 2}, \cite{Liu 2}, \cite{Wu 2},  \cite{Yue}, etc.. Recently, for more 
general settings, Zhu et al. 
\cite{zhuli1, zhuli2}  gave an explicit description of the representatives and sizes of all $q$-cyclotomic cosets modulo $n$. 
In further work, they \cite{zhuli3} introduced the notion of equal-difference cyclotomic cosets, showing that every 
cyclotomic coset can be decomposed into a disjoint union of equal-difference subsets, and within this framework 
they defined the multiple equal-difference property of $q$-cyclotomic cosets.

As far as we know, for $n=\lambda(q^m+1)$, where $\lambda\mid q-1$, only a few papers have investigated BCH codes of 
length $n$ in the case $\lambda=1$. This is primarily due to the extreme complexity of the structure of the $q$-cyclotomic 
cosets modulo $n$. \cite{lichengju} determined the parameters of $\mathcal{C}_{(q,n,\delta,0)}$ for $3\le \delta\le q^{\lfloor\frac{m-1}{2}\rfloor}+3$. Specially, they improved the BCH bound on their minimum distance. Subsequently, \cite{dingcunsheng7} determined the dimensions of narrow-sense BCH code $\mathcal{C}_{(q,n,\delta,1)}$, 
where $2\le \delta\le q^{\frac{m}{2}}$ for even $m\ge 4$ and $2\le \delta\le q^{\frac{m+1}{2}}$ for odd $m\ge 3$, 
respectively. In recent years, to further explore the dual codes of BCH codes, the notion of dually-BCH codes was introduced by authors in \cite{gongbinkai}. Recently, \cite{fuyuqing} developed a lower bound on the minimum distance of the dual code 
$\mathcal{C}_{(q,n,\delta,1)}$ and gave a sufficient and necessary condition for the even-like subcode 
$\mathcal{C}_{(q,n,\delta+1,0)}$ of $\mathcal{C}_{(q,n,\delta,1)}$ being dually-BCH. For more related results, 
we refer the reader to \cite{liuy, liuy2, yanhaode, zhuhongwei}.

In this paper, we always let $n=\lambda(q^m+1)$, where $\lambda\mid q-1$. We investigate the structure of $q$-cyclotomic 
cosets modulo $n$, as well as the properties of BCH codes and LCD cyclic codes of this length. The rest of this paper is organized as follows. In Section 2, we present some preliminary notions and results that will be 
used later. In Section 3, we explore the $q$-cyclotomic cosets modulo $n$. Specifically, we establish a necessary and 
sufficient condition for an integer $0\le \gamma<n$ to be a coset leader. For odd $m$, we further determine the largest and 
second largest coset leaders. In Section 4, we determine the dimensions of several families of BCH codes and improve the 
lower bounds on their minimum distances in certain cases. In particular, some of the constructed BCH codes are shown to be 
optimal. Additionally, we present a necessary and sufficient condition for a BCH code $\mathcal{C}_{(q,n,\delta,0)}$ to be a 
dually-BCH code. In Section 5, we provide the exact enumeration of LCD cyclic codes of length $n$. Finally, Section 6 concludes 
the paper.

\section{Preliminaries}

In this section we recall some basic definitions and results in the theory of finite fields and cyclic codes over finite fields.

Let $q=p^{e}$ be a power of a prime number $p$, and $\mathbb{F}_{q}$ be a finite field containing $q$ elements. Let $n$ be a positive integer not divided by $p$, and $\zeta_{n}$ be a fixed primitive $n$-th root of unity which lies in some finite extension of $\mathbb{F}_{q}$. It is a well-known fact that the irreducible factors of $x^{n}-1$ over $\mathbb{F}_{q}$ are in an one-to-one correspondence with the $q$-cyclotomic cosets modulo $n$.

To be explicit, for any $\gamma \in \mathbb{Z}/n\mathbb{Z}$, the $q$-cyclotomic coset modulo $n$ containing $\gamma$ is defined to be 
$$C_{\gamma} = \{\gamma,\gamma q,\cdots,\gamma q^{\tau-1}\} \subseteq \mathbb{Z}/n\mathbb{Z},$$
where $\tau$ is the smallest positive integer such that $\gamma q^{\tau} = \gamma$ in $\mathbb{Z}/n\mathbb{Z}$, and is called the size of $C_{\gamma}$. Viewing the elements in $C_{\gamma}$ as integers in the range $\{0,1,\cdots,n-1\}$, the smallest one is called the coset leader of $C_{\gamma}$. We denote by $\Gamma = \Gamma_{(n,q)}$ the set of all coset leaders of $q$-cyclotomic cosets modulo $n$. Any $q$-cyclotomic coset modulo $n$ induces an irreducible factor of $x^{n}-1$ via
$$f_{\gamma}(x) = (x-\zeta_{n}^{\gamma})(x-\zeta_{n}^{\gamma q})\cdots(x-\zeta_{n}^{\gamma q^{\tau-1}}).$$
Moreover, all the irreducible factors of $x^{n}-1$ over $\mathbb{F}_{q}$ can be obtained in this way, that is, the irreducible factorization of $x^{n}-1$ over $\mathbb{F}_{q}$ is given by
$$x^{n}-1 = \prod_{\gamma \in \Gamma_{n,q}}f_{\gamma}(x).$$

Let $f(x) = x^{r} + c_{r-1}x^{r-1}+ \cdots +c_{0}$ be a polynomial over $\mathbb{F}_{q}$ with $c_{0} \neq 0$. The reciprocal polynomial $f^{\ast}(x)$ of $f(x)$ is defined to be
$$f^{\ast}(x) = c_{0}^{-1}\cdot x\cdot f(x^{-1}).$$
The polynomial $f(x)$ (or equivalently, $f^{\ast}(x)$) is called self-reciprocal if $f(x) = f^{\ast}(x)$. The following lemmas are classical results on self-reciprocal polynomials.

\begin{lemma}\cite{lichengju}
	The irreducible polynomial $f_{\gamma}(x)$ is self-reciprocal if and only if $n-\gamma\in C_{\gamma}$.
\end{lemma}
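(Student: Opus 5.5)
The plan is to compare the root sets of $f_\gamma$ and $f_\gamma^\ast$ and use that a monic irreducible polynomial is determined by its roots. First I compute the roots of $f_\gamma^\ast$. Write $\tau=|C_\gamma|$, so $\deg f_\gamma=\tau$. The constant term is $c_0=(-1)^{\tau}\prod_{j}\zeta_n^{\gamma q^j}\neq 0$. Reading the definition of the reciprocal polynomial as $c_0^{-1}x^{\tau}f(x^{-1})$ (the intended degree normalization), one gets
\[
x^{\tau}f_\gamma(x^{-1})=\prod_{j=0}^{\tau-1}\bigl(1-\zeta_n^{\gamma q^j}x\bigr)=\prod_{j=0}^{\tau-1}\bigl(-\zeta_n^{\gamma q^j}\bigr)\bigl(x-\zeta_n^{-\gamma q^j}\bigr).
\]
The scalar prefactor equals $c_0$, so
\[
f_\gamma^\ast(x)=\prod_{j=0}^{\tau-1}\bigl(x-\zeta_n^{-\gamma q^j}\bigr).
\]
Hence $f_\gamma^\ast$ is monic, and its root set is $\{\zeta_n^{\delta}:\delta\in -C_\gamma\}$, where $-C_\gamma=\{-\gamma q^j\}=C_{n-\gamma}$ in $\mathbb{Z}/n\mathbb{Z}$.

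Next I use that $p\nmid n$, so $x^n-1$ is separable. The $\zeta_n^{\delta}$ for distinct $\delta\in\mathbb{Z}/n\mathbb{Z}$ are therefore distinct, and both $f_\gamma$ and $f_\gamma^\ast$ have distinct roots. Two monic polynomials with distinct roots are equal if and only if their root sets coincide. So $f_\gamma=f_\gamma^\ast$ holds exactly when $C_\gamma=C_{n-\gamma}$ as subsets of $\mathbb{Z}/n\mathbb{Z}$.

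Finally, cyclotomic cosets partition $\mathbb{Z}/n\mathbb{Z}$. Thus $C_\gamma=C_{n-\gamma}$ if and only if $C_\gamma\cap C_{n-\gamma}\neq\emptyset$, which holds if and only if $n-\gamma\in C_\gamma$. This gives the equivalence, including the case $\gamma=0$, where $n-\gamma\equiv 0$.

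The only delicate point is the normalization of $f^\ast$. As written, the definition has a factor $x$ where the degree $x^{r}$ is needed, and I will read it as $x^{r}$. I must also check that the sign and scalar factors in the second step exactly cancel $c_0$, so that $f_\gamma^\ast$ is monic. Otherwise the identification of roots still works, but I would compare the polynomials only up to a scalar and then argue monicity separately.
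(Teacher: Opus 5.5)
Your proof is correct and complete. You compute $f_\gamma^\ast(x)=\prod_{j}(x-\zeta_n^{-\gamma q^j})=f_{n-\gamma}(x)$, compare root sets using separability, and use that cyclotomic cosets partition $\mathbb{Z}/n\mathbb{Z}$; this is the standard argument behind the lemma, which the paper does not prove but simply quotes from \cite{lichengju}.

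You are also right that the paper's definition of $f^\ast$ should read $c_0^{-1}x^{r}f(x^{-1})$. Your closing worry about monicity is already settled by your own computation: the leading coefficient of $x^{r}f(x^{-1})$ is $c_0$, so $f^\ast$ is automatically monic.
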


\begin{lemma}\cite{lichengju}
	The least common multiple $\mathrm{lcm}(f_{\gamma}(x), f_{n-\gamma}(x))$ is self-reciprocal for every $\gamma\in \mathbb{Z}/n\mathbb{Z}$.
\end{lemma}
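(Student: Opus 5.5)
The plan is to work directly with the roots. Write $g(x)=\mathrm{lcm}(f_{\gamma}(x),f_{n-\gamma}(x))$. Since $\gcd(n,p)=1$, the polynomial $x^n-1$ is separable, so each $f_{\delta}$ has distinct roots. Recall also that two irreducible factors $f_{\gamma}$ and $f_{\delta}$ either coincide or are coprime, according as $C_{\gamma}=C_{\delta}$ or $C_{\gamma}\cap C_{\delta}=\emptyset$. I will show that $g$ is the monic polynomial whose root set is exactly $\{\zeta_n^{i} : i\in C_{\gamma}\cup C_{n-\gamma}\}$, each root simple. Then I will show that the reciprocal of such a polynomial is again monic and has the root set obtained by inverting every root.

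\textbf{Step 1: the root set of $g$.} If $C_{n-\gamma}=C_{\gamma}$, then $f_{n-\gamma}=f_{\gamma}$ and $g=f_{\gamma}$. Otherwise the two cosets are disjoint, so $f_{\gamma}$ and $f_{n-\gamma}$ are coprime and $g=f_{\gamma}f_{n-\gamma}$. In both cases
$$g(x)=\prod_{i\in S}(x-\zeta_n^{i}),\qquad S=C_{\gamma}\cup C_{n-\gamma}\subseteq \mathbb{Z}/n\mathbb{Z}.$$
Its constant term is $c_0=(-1)^{|S|}\prod_{i\in S}\zeta_n^{i}$, which is nonzero, so $g^{\ast}$ is defined.

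\textbf{Step 2: the root set of $g^{\ast}$.} I read the factor $x$ in the definition of $f^{\ast}$ as $x^{\deg f}$, which is clearly what is intended. For $g$ as above, a direct expansion gives
$$x^{|S|}g(x^{-1})=\prod_{i\in S}(1-\zeta_n^{i}x)=\prod_{i\in S}(-\zeta_n^{i})\,\prod_{i\in S}(x-\zeta_n^{-i}).$$
The prefactor $\prod_{i\in S}(-\zeta_n^{i})$ equals $c_0$. Multiplying by $c_0^{-1}$ therefore yields
$$g^{\ast}(x)=\prod_{i\in S}(x-\zeta_n^{-i}),$$
which is monic with root set $\{\zeta_n^{j} : j\in -S\}$.

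\textbf{Step 3: $S$ is closed under negation.} Since $-(\gamma q^{k})=(n-\gamma)q^{k}$ in $\mathbb{Z}/n\mathbb{Z}$, we have $-C_{\gamma}=C_{n-\gamma}$. Likewise $-C_{n-\gamma}=C_{\gamma}$. Hence $-S=S$.

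Combining the three steps, $g^{\ast}$ and $g$ are monic polynomials with the same set of simple roots, so $g=g^{\ast}$, as claimed.

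The main obstacle is only bookkeeping: identifying the lcm correctly in the two cases $C_{n-\gamma}=C_{\gamma}$ and $C_{n-\gamma}\neq C_{\gamma}$, and checking that the normalisation by $c_0^{-1}$ makes $g^{\ast}$ monic. Both follow from separability of $x^n-1$ and the disjointness of distinct cyclotomic cosets.
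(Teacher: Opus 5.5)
Your proof is correct and complete. The paper gives no proof of this lemma; it quotes it from Li and Ding. Your root-set argument is the standard proof of that cited fact: $g^{\ast}$ has as roots the inverses of the roots of $g$, and $-C_{\gamma}=C_{n-\gamma}$ makes $C_{\gamma}\cup C_{n-\gamma}$ closed under negation. You are also right to read the factor $x$ in the paper's definition of $f^{\ast}$ as $x^{\deg f}$, which corrects a typo.
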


A linear code $\mathcal{C}$ of length $n$ over $\mathbb{F}_{q}$ is a subspace of $\mathbb{F}_{q}^{n}$. If $\mathcal{C}$ satisfies the property that for every $(c_{0},c_{1},\cdots,c_{n-1}) \in \mathcal{C}$ the shifted vector $(c_{n-1},c_{0},\cdots,c_{n-2})$ is also contained in $\mathcal{C}$, then $\mathcal{C}$ is called a cyclic code. Let $\mathcal{R}_{n} = \mathbb{F}_{q}[x]/(x^{n}-1)$. Then $\mathbb{F}_{q}^{n}$ is isomorphic to $\mathcal{R}_{n}$ as $\mathbb{F}_{q}$-spaces, via the map
$$(c_{0},c_{1},\cdots,c_{n-1}) \mapsto c_{0} + c_{1}x + \cdots + c_{n-1}x^{n-1}.$$
Under this isomorphism, a linear code of length $n$ over $\mathbb{F}_{q}$ is cyclic if and only if it corresponds to an ideal $(f(x)) \subseteq \mathcal{R}_{n}$, where $f(x)$ is a factor of $x^{n}-1$. If it is this case, then we identify the code $\mathcal{C}$ with the corresponding ideal $(f(x))$, and call $f(x)$ the generator polynomial of $\mathcal{C}$.

Let $\mathcal{C}$ be a linear code of length $n$ over $\mathbb{F}_{q}$. The dual code $\mathcal{C}^{\perp}$ of $\mathcal{C}$ is defined as
$$\mathcal{C}^{\perp} = \{\boldsymbol{x} \in \mathbb{F}_{q}^{n} \ | \ \boldsymbol{x}\cdot \boldsymbol{c}^{T} = 0 \ \mathrm{for} \ \forall \boldsymbol{c} \in \mathcal{C}\},$$
where $\boldsymbol{x}\cdot \boldsymbol{c}^{T}$ denotes the standard inner product of $\boldsymbol{x},\boldsymbol{c} \in \mathbb{F}_{q}$. The code $\mathcal{C}$ is called a LCD code or a reversible code if 
$$\mathcal{C} \cap \mathcal{C}^{\perp} = \{\boldsymbol{0}\}.$$
If $\mathcal{C} = (f(x))$ is further a cyclic code where $f(x) \mid x^{n}-1$, then its dual code $\mathcal{C}^{\perp}$ is also a cyclic code of length $n$, with generator polynomial $g^{\ast}(x)$, where $g(x) = \frac{x^{n}-1}{f(x)}$ and $g^{\ast}(x)$ is the reciprocal polynomial of $g(x)$. The next lemma gives criteria for a cyclic code being reversible.

\begin{lemma}\cite{Yang2}
	Let $\mathcal{C}$ be a cyclic code of length $n$ over $\mathbb{F}_q$ with generator polynomial $f(x)$. Then the following statements are equivalent.
	\begin{itemize}
		\item[(1)] The code $\mathcal{C}$ is a LCD code.
		\item[(2)] The polynomial $f(x)$ is self-reciprocal.
		\item[(3)] For every root $\alpha$ of $f(x)$, $\alpha^{-1}$ is also a root of $f(x)$.
	\end{itemize}
	In particular, if $-1$ is congruent to a power of $q$ modulo $n$, then every cyclic code over $\mathbb{F}_q$ of length $n$ is reversible.
\end{lemma}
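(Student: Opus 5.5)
The plan is to prove the cycle of implications $(1)\Rightarrow(2)\Rightarrow(3)\Rightarrow(1)$. The closing claim about $-1$ will then be a short consequence.

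We work with the roots of $f$. Since $p\nmid n$, $x^n-1$ is separable, so $f$ is determined by its set of roots $Z=\{\zeta_n^{i}: i\in T\}$. Here $T$ is a union of $q$-cyclotomic cosets.

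\textbf{Step $(2)\Leftrightarrow(3)$.} The roots of $f^{\ast}(x)=c_0^{-1}x^{\deg f}f(x^{-1})$ are exactly the inverses $\alpha^{-1}$ of the roots $\alpha$ of $f$. Both $f$ and $f^\ast$ are monic, and both have simple roots, because they divide $x^n-1$ up to units. Hence $f=f^\ast$ if and only if $Z=Z^{-1}$, which is condition (3). Equivalently, $T=-T$ modulo $n$.

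\textbf{Step $(1)\Leftrightarrow(3)$.} Write $g=(x^n-1)/f$. By the description of duals recalled above, $\mathcal{C}^\perp=(g^\ast)$. The intersection of two cyclic codes $(a)$ and $(b)$ is generated by $\mathrm{lcm}(a,b)$. So $\mathcal{C}\cap\mathcal{C}^\perp=\{0\}$ if and only if $\mathrm{lcm}(f,g^\ast)=x^n-1$, i.e.\ the root sets of $f$ and $g^\ast$ together cover all $n$-th roots of unity. The roots of $g^\ast$ are the inverses of the non-roots of $f$, namely $\{\zeta_n^{-i}: i\notin T\}$. The covering condition therefore reads $T\cup(-(\mathbb{Z}_n\setminus T))=\mathbb{Z}_n$, which says $\mathbb{Z}_n\setminus T\subseteq -(\mathbb{Z}_n\setminus T)$. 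Since negation is a bijection of $\mathbb{Z}_n$, this is equivalent to $\mathbb{Z}_n\setminus T=-(\mathbb{Z}_n\setminus T)$, hence to $T=-T$, which is (3).

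The only point requiring care is the lcm fact. It follows because in $\mathcal{R}_n$ the ideals correspond, in an inclusion-reversing way, to divisors of $x^n-1$. Under this correspondence an intersection of ideals corresponds to the lcm of the generators, and the zero ideal corresponds to $x^n-1$ itself.

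\textbf{Final claim.} Suppose $-1\equiv q^k \pmod n$. Then every cyclotomic coset satisfies $C_\gamma=-C_\gamma$, because $-\gamma=\gamma q^k\in C_\gamma$. Any $T$ that is a union of cosets is then closed under negation, so by (3) every cyclic code of length $n$ is LCD. Alternatively, this follows from the first lemma: every $f_\gamma$ is self-reciprocal, and hence so is every product of distinct $f_\gamma$'s.

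I expect no serious obstacle. The step needing the most care is the root-set bookkeeping in $(1)\Leftrightarrow(3)$, in particular the correct handling of complements and inverses.
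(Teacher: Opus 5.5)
Your proof is correct and complete. The paper gives no proof of this lemma; it is quoted from Yang and Massey. Your argument identifies $\mathcal{C}\cap\mathcal{C}^{\perp}$ with the cyclic code generated by $\mathrm{lcm}(f,g^{\ast})$ and then reduces everything to the defining-set condition $T=-T$. This is the standard argument, specialized to the separable case.

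You were right to state the paper's standing hypothesis $p\nmid n$ (Section~2) explicitly. Without separability, $(2)\Rightarrow(1)$ fails: for $q=n=2$ and $f=x+1$ one gets $\mathcal{C}=\mathcal{C}^{\perp}=\{00,11\}$. This is why the general Yang--Massey criterion also requires each irreducible factor of $f$ to have the same multiplicity in $f$ as in $x^n-1$. You were also right to normalize $f^{\ast}$ with $x^{\deg f}$ instead of the paper's misprinted factor $x$.
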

	
BCH codes are a subclass of cyclic codes, which are significant both in the theoretical and the practical aspects. We briefly recall the definition of BCH code and the BCH bound. Let $\mathcal{C}$ be a cyclic code of length $n$ with generator polynomial $f(x)$. Fixing a primitive $n$-th root of unity lying in some finite extension of $\mathbb{F}_{q}$, then $\mathcal{C}$ is uniquely determined by the set
$$T = \{0 \leq i \leq n-1 \ | \ f(\zeta_{n}^{i})=0\},$$
which is referred to as the defining set of $\mathcal{C}$ with respect to $\zeta_{n}$. Let $\delta$ be an integer in the range $2 \leq \delta \leq n-1$ and $b$ be an integer. The cyclic code determined by the defining set
$$T = C_{b} \cup C_{b+1} \cup \cdots \cup C_{b+\delta-2}$$
is called a BCH code of length $n$ and designed distance $\delta$, and is denoted by $\mathcal{C}_{(q,n,\delta,b)}$. Equivalently, the BCH code $\mathcal{C}_{(q,n,\delta,b)}$ has generator polynomial 
$$f_{(q,n,\delta,b)}(x) = \mathrm{lcm}(f_{b}(x),f_{b+1}(x),\cdots,f_{b+\delta-2}(x)).$$
It may happen that $\mathcal{C}_{(q,n,\delta,b)} = \mathcal{C}_{(q,n,\delta^{\prime},b)}$ for some $\delta^{\prime}\ne \delta$. The maximal integer $d_{B}$ such that $\mathcal{C}_{(q,n,\delta,b)} = \mathcal{C}_{(q,n,d_{B},b)}$ is called the Bose distance of $\mathcal{C}_{(q,n,\delta,b)}$. When $b=1$, the code $\mathcal{C}_{(q,n,\delta,b)}$ is referred to as a narrow-sense BCH code. If $n=q^{m}-1$, then $\mathcal{C}_{(q,n,\delta,b)}$ is called a primitive BCH code; if $n = q^{m}+1$, then $\mathcal{C}_{(q,n,\delta,b)}$ is called an antiprimitive BCH code.

Let $\mathcal{C}_{(q,n,\delta,b)}$ be a BCH code, with defining set $T$ and Bose distance $d_{B}$. Then the dimension of $\mathcal{C}_{(q,n,\delta,b)}$ is $n - |T|$, and the minimum distance $d$ of $\mathcal{C}_{(q,n,\delta,b)}$ satisfies
\begin{equation}\label{eq 1}
	d \geq d_{B} \geq \delta.
\end{equation}
The bound \eqref{eq 1} is an immediate consequence of the famous BCH bound, which is stated as follow.

\begin{theorem}
	Let $\mathcal{C}$ be a cyclic code of length $n$ over $\mathbb{F}_{q}$, with defining set $T$ and minimal distance $d$. If $T$ contains $\delta-1$ consecutive integers of some integer $\delta$, then $d \geq \delta$.
\end{theorem}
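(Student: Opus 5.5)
The plan is the classical Vandermonde argument for the BCH bound. Since the code is cyclic, we may shift the consecutive run so that it starts at a convenient exponent. Suppose $T \supseteq \{b, b+1, \dots, b+\delta-2\}$ for some integer $b$, with indices read modulo $n$. We will show that no nonzero codeword has weight at most $\delta-1$. Take a nonzero $c(x)=\sum_{j} c_j x^j\in\mathcal{C}$ and suppose its support is $\{j_1,\dots,j_w\}$ with $1\le w\le \delta-1$. Every $i\in T$ gives $f(\zeta_n^{i})=0$, and $f(x)\mid c(x)$, so $c(\zeta_n^{i})=0$. In particular $c(\zeta_n^{b+k})=0$ for $k=0,1,\dots,w-1$, because $w-1\le \delta-2$.

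Next we turn these $w$ vanishing evaluations into a square linear system in the $w$ unknowns $c_{j_1},\dots,c_{j_w}$:
\[
\sum_{t=1}^{w} c_{j_t}\,\bigl(\zeta_n^{j_t}\bigr)^{b}\,\bigl(\zeta_n^{j_t}\bigr)^{k}=0,\qquad k=0,\dots,w-1 .
\]
Put $\beta_t=\zeta_n^{j_t}$. The coefficient matrix is $V\cdot\mathrm{diag}(\beta_1^{b},\dots,\beta_w^{b})$, where $V=(\beta_t^{k})_{k,t}$ is a Vandermonde matrix.

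The $\beta_t$ are pairwise distinct, since $\zeta_n$ is a primitive $n$-th root of unity and the $j_t$ are distinct in $\{0,\dots,n-1\}$. Hence $\det V=\prod_{s<t}(\beta_t-\beta_s)\neq 0$. Each $\beta_t^{b}$ is a nonzero root of unity, so the diagonal factor is also invertible. The system therefore has only the trivial solution $c_{j_1}=\dots=c_{j_w}=0$. This contradicts the choice of support, so every nonzero codeword has weight at least $\delta$, i.e. $d\ge\delta$.

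No step is really an obstacle. The only points that need care are the reduction to exponents $b,\dots,b+w-1$ when the run wraps around modulo $n$, which is harmless because exponents of $\zeta_n$ are taken mod $n$, and the distinctness of the $\beta_t$, which uses $\gcd(n,p)=1$ so that $\zeta_n$ exists with exact order $n$. The consequence $d\ge d_B\ge\delta$ in \eqref{eq 1} follows by applying the theorem to $\mathcal{C}_{(q,n,d_B,b)}$. Its defining set contains the $d_B-1$ consecutive integers $b,\dots,b+d_B-2$.
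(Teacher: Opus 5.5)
Your proof is correct: it is the standard Vandermonde argument. The paper gives no proof to compare against, since it records this statement in the preliminaries as ``the famous BCH bound'' and uses it as a black box.

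Two small remarks:
\begin{itemize}
\item Your opening sentence about shifting the run is never used. The factor $\mathrm{diag}(\beta_1^{b},\dots,\beta_w^{b})$ already handles an arbitrary starting exponent $b$.
\item The divisibility $f(x)\mid c(x)$ should be read for the representative of $c$ of degree less than $n$. This is legitimate because $f(x)\mid x^n-1$.
\end{itemize}
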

	
We conclude this section by recording the lifting-the-exponent lemma, which will be needed in the following context.

\begin{lemma}\cite{Nezami}
	Let $m$ be an odd integer, $d$ be a positive integer.
	\begin{itemize}
		\item[(1)] If $m \equiv 1 \pmod{4}$, then
		$$v_{2}(m^{d}-1) = v_{2}(m-1) + v_{2}(d), \  v_{2}(m^{d}+1) = 1.$$
		\item[(2)] If $m \equiv 3 \pmod{4}$ and $d$ is odd, then
		$$v_{2}(m^{d}-1) = 1, \  v_{2}(m^{d}+1) = v_{2}(m+1).$$
		\item[(3)] If $m \equiv 3 \pmod{4}$ and $d$ is even, then
		$$v_{2}(m^{d}-1) = v_{2}(m+1) + v_{2}(d), \  v_{2}(m^{d}+1) = 1.$$
	\end{itemize}
\end{lemma}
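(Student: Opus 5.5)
The statement to prove is the lifting-the-exponent lemma for $p=2$: for odd $m$ and $d\ge 1$, the $2$-adic valuations of $m^d\pm1$ are as listed. I will argue by elementary factorizations and induction on $v_2(d)$.

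\textbf{Odd $d$.} Write
$$m^d-1=(m-1)(m^{d-1}+\cdots+1), \qquad m^d+1=(m+1)(m^{d-1}-m^{d-2}+\cdots+1).$$
In each case the second factor is a sum of $d$ odd terms, hence odd. So $v_2(m^d-1)=v_2(m-1)$ and $v_2(m^d+1)=v_2(m+1)$.

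For $m\equiv 3\pmod 4$ we have $v_2(m-1)=1$, which gives part (2). For $m\equiv1\pmod4$ we have $v_2(m+1)=1$; also $v_2(d)=0$, so $v_2(m^d-1)=v_2(m-1)+v_2(d)$. This settles part (1) for odd $d$.

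\textbf{The plus sign when $d$ is even.} Here $m^d$ is an odd square, so $m^d\equiv1\pmod 8$ and $m^d+1\equiv2\pmod 8$, giving $v_2(m^d+1)=1$.

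The same conclusion holds whenever $m\equiv1\pmod4$ and $d$ is arbitrary: then $m^d\equiv1\pmod 4$, so $m^d+1\equiv 2\pmod 4$. This completes the "$+1$" statements in parts (1) and (3).

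\textbf{The minus sign when $d$ is even.} Write $d=2^k t$ with $t$ odd and $k\ge1$. Set $M=m^t$, which is odd and satisfies $M\equiv m\pmod 4$. By the odd case, $v_2(M\pm1)=v_2(m\pm1)$. Now telescope:
$$m^d-1=M^{2^k}-1=(M-1)(M+1)\prod_{j=1}^{k-1}\bigl(M^{2^j}+1\bigr).$$
Each factor $M^{2^j}+1$ with $j\ge1$ has valuation exactly $1$, by the previous step. Therefore
$$v_2(m^d-1)=v_2(m-1)+v_2(m+1)+k-1.$$

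\begin{itemize}
\item If $m\equiv1\pmod 4$, then $v_2(m+1)=1$, so $v_2(m^d-1)=v_2(m-1)+k=v_2(m-1)+v_2(d)$, completing part (1).
\item If $m\equiv3\pmod4$, then $v_2(m-1)=1$, so $v_2(m^d-1)=v_2(m+1)+k=v_2(m+1)+v_2(d)$, which is part (3).
\end{itemize}

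The only delicate point is the bookkeeping in the telescoping product: one must check that the $k-1$ middle factors each contribute exactly one factor of $2$, and that $M=m^t$ has the same residue mod $4$ as $m$. Both follow from the odd-$d$ step and the fact that odd squares are $1 \pmod 8$.
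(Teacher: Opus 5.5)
Your proof is correct and complete. The paper gives no proof of this lemma: it is quoted from \cite{Nezami} and then used as a black box in the proofs of Theorem~\ref{th2} and Lemma~\ref{lem1}, with $q$ playing the role of $m$. So you are not taking a different route so much as supplying a self-contained argument where the paper relies on a citation.

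Your argument is the standard elementary proof of the lifting-the-exponent lemma at $p=2$:
\begin{itemize}
\item the odd-$d$ factorizations give $v_2(m^d\mp1)=v_2(m\mp1)$;
\item the fact that odd squares are $1$ modulo $8$ handles $m^d+1$ for even $d$;
\item the telescoping product $M^{2^k}-1=(M-1)(M+1)\prod_{j=1}^{k-1}(M^{2^j}+1)$ gives the unified formula $v_2(m^d-1)=v_2(m-1)+v_2(m+1)+v_2(d)-1$ for even $d$.
\end{itemize}
Parts (1) and (3) then follow by noting which of $v_2(m-1)$, $v_2(m+1)$ equals $1$. What your version adds over the bare citation is this unified formula, which shows why the statement splits according to the residue of $m$ modulo $4$. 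The citation is, of course, shorter.

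Two minor remarks:
\begin{itemize}
\item The observation $M\equiv m \pmod 4$ is not actually needed. You take $v_2(M\pm1)=v_2(m\pm1)$ directly from the odd case.
\item For the degenerate values $m=\pm1$, the identities hold only with the convention $v_2(0)=\infty$. Your argument goes through verbatim under that convention. This is irrelevant to the paper, which applies the lemma only to odd prime powers $q\ge 3$.
\end{itemize}
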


\section{Coset leaders and sizes of $q$-cyclotomic cosets modulo $n$}

Throughout this section, $n=\lambda(q^m+1)$, where $m$ is a positive integer and $\lambda\mid q-1$. We begin by investigating the reversible properties of $q$-cyclotomic coset modulo $n$, which serves as a fundamental tool for 
subsequent analyses. We then develop a general method for determining coset leaders. Based on this, we establish a necessary and sufficient condition for $0\le \gamma<n$ being a coset leader. Finally, for odd $m$, we obtain the largest and second largest coset leaders along with their corresponding sizes. All of these results play a crucial role in the study of BCH codes of length $n$ in the next section.

The following lemma describe the reversible properties of $q$-cyclotomic cosets modulo $n$, which will be used frequently later.

\begin{lemma}\label{lem2}
For any integer $\gamma$ with $0\le \gamma< n$, write $\gamma\equiv a\pmod{\lambda}$ such that $1\le a\le \lambda$. Then the following hold:
\begin{itemize}
\item[(1)] If $0\le \gamma\le a(q^m+1)$, then $a(q^m+1)-\gamma\in  C_{\gamma}$; 
\item[(2)] If $a(q^m+1)< \gamma< n$, then $(\lambda+a)(q^m+1)-\gamma\in C_{\gamma}$.
\end{itemize}
\end{lemma}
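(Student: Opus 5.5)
We want to show that $-\gamma$ modulo $n$ lies in $C_\gamma$, and then identify its representative in $[0,n)$. Since $q^m\equiv -1 \pmod{q^m+1}$ and $q\equiv 1\pmod{\lambda}$, the natural candidate is $\gamma q^m$.

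The plan is to compute $\gamma q^m \bmod n$ directly. Because $\lambda$ and $q^m+1$ may fail to be coprime (for instance when both are even), I will not use the Chinese remainder theorem. Instead I will work with the explicit identity below.

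Write $\gamma = a + \lambda k$ with integer $k\ge 0$; this is possible because $1\le a\le\lambda$ and $\gamma\equiv a \pmod{\lambda}$. The one exception is $\gamma=0$, which gives $a=\lambda$ and $k=-1$; that case is handled the same way, or trivially, since $0\in C_0$ and $\lambda(q^m+1)\equiv 0$. Then
$$\gamma q^m=(a+\lambda k)q^m = a q^m + k\lambda(q^m+1) - k\lambda \equiv a q^m - \lambda k = a(q^m+1) - (a+\lambda k) = a(q^m+1)-\gamma \pmod n.$$
So $\gamma q^m \equiv a(q^m+1)-\gamma \pmod n$ in all cases, and therefore the residue class of $a(q^m+1)-\gamma$ lies in $C_\gamma$.

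It remains to choose the representative in $\{0,\dots,n-1\}$.

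\emph{Case (1).} If $0\le\gamma\le a(q^m+1)$, then $0\le a(q^m+1)-\gamma\le a(q^m+1)\le n$. The only boundary issue is the value $n$ itself, which occurs when $\gamma=0$ and $a=\lambda$. Then the element equals $n\equiv 0=\gamma$, so it is still in $C_\gamma$ and the statement holds as an element of $\mathbb{Z}/n\mathbb{Z}$.

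\emph{Case (2).} If $a(q^m+1)<\gamma<n$, then $a(q^m+1)-\gamma$ is negative. Adding $n=\lambda(q^m+1)$ gives $(\lambda+a)(q^m+1)-\gamma$, which lies strictly between $a(q^m+1)$ and $n$ because $\gamma<n$ and $\gamma>a(q^m+1)$. This is the stated element.

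The only real step is the congruence $\gamma q^m\equiv a(q^m+1)-\gamma \pmod n$. Everything else is checking ranges, including the edge cases $\gamma=0$ and $a=\lambda$.
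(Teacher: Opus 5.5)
Your proof is correct and takes the same route as the paper: both show that the reflected element is $\gamma q^m \bmod n$. The paper only asserts the congruence $\gamma q^m\equiv a(q^m+1)-\gamma \pmod n$, whereas you verify it (it reduces to $(\gamma-a)(q^m+1)\equiv 0 \pmod{\lambda(q^m+1)}$, which needs only $\lambda\mid \gamma-a$ and not $\lambda\mid q-1$), and you also check the representative ranges and the edge case $\gamma=0$, $a=\lambda$, which the paper leaves implicit.
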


\begin{proof}
Let $\gamma$ be an integer with $0\le \gamma<n$, and write $\gamma\equiv a\pmod{\lambda}$ such that $1\le a\le \lambda$.
\begin{itemize}
\item[(1)] Assume $0\le \gamma\le a(q^m+1)$. There exists a positive integer $m$ satisfies
\begin{align*}
a(q^m+1)-\gamma\equiv \gamma q^m\pmod{n},
\end{align*} 
which is equivalent to $a(q^m+1)-\gamma\in C_{\gamma}$.
\item[(2)] Suppose $a(q^m+1)< \gamma<n$. Similarly, we also have 
\begin{align*}
(\lambda+a)(q^m+1)-\gamma\equiv \gamma  q^{m}\pmod{n},
\end{align*}
that means $(\lambda+a)(q^m+1)-\gamma\in C_{\gamma}$.
\end{itemize}
\end{proof}

\begin{remark}
Let $\gamma$ be an integer such that $0\le \gamma<n$. It makes sense to say $\gamma\equiv a\pmod{\lambda}$ with $1\le a\le \lambda$, as it does not depend on the choice of the representatives of $q$-cyclotomic coset containing $\gamma$.
\end{remark}

Next, we propose a method for determining the coset leaders and sizes of q-cyclotomic cosets modulo n. Leveraging this approach, we establish several necessary and sufficient conditions that characterize the coset leaders and their corresponding sizes for q-cyclotomic cosets modulo n.

In fact, the size of each $q$-cyclotomic coset modulo $n$ is either $1$ or necessarity an even number. By contradiction, suppose there exists such a $q$-cyclotomic coset modulo $n$ containing $\gamma$ satisfies $\gamma\equiv a\pmod{\lambda}$ with odd size $\tau\ge 3$ such that
$$C_{\gamma}=\{\gamma, \gamma q,\cdots ,\gamma q^{\tau-1}\},$$
then according to the Lemma \ref{lem2}, there must exists an element in the set $C_{\gamma}$, we may without loss of generality denote it as $\gamma$ which satisfies $\gamma=\frac{a}{2}(q^m+1)$ or $\gamma=\frac{\lambda+a}{2}(q^m+1)$.

If $\gamma=\frac{a}{2}(q^m+1)$, for any $1\le t\le \tau-1$, we have
$$\gamma q^t\pmod{n}\equiv \frac{a}{2}\frac{q^t-1}{\lambda}\cdot \lambda(q^m+1)+\frac{a}{2}(q^m+1)\pmod{n}.$$ 

If $\gamma=\frac{\lambda+a}{2}(q^m+1)$, for any $1\le t\le \tau-1$, we have
$$\gamma q^t\pmod{n}\equiv \frac{\lambda+a}{2}\frac{q^t-1}{\lambda}\cdot \lambda(q^m+1)+\frac{\lambda+a}{2}(q^m+1)\pmod{n}.$$

It is clearly that the cyclotomic coset $C_{\gamma}$ can take three possible forms: $C_{\gamma}=\{\frac{a}{2}(q^m+1)\}$, $C_{\gamma}=\{\frac{\lambda+a}{2}(q^m+1)\}$ or $C_{\gamma}=\{\frac{a}{2}(q^m+1), \frac{\lambda+a}{2}(q^m+1)\}$. This leads to a contradiction with our assumption, thus we have proven our assertion.

For any $0\le \gamma<n$, we now assume that the size of the cyclotomic coset $C_{\gamma}$ containing $\gamma$ is $\tau$, where $\tau=1$ or $\tau$ is even.

When $\tau=1$, obviously, $\gamma$ itself is a coset leader.

When $\tau\ge 2$, we derive the coset leader of $C_{\gamma}$ based on the definition below. For any positive integer $t$ with $0\le t\le \frac{\tau}{2}-1$, write $\gamma\equiv a\pmod{\lambda}$ such that $1\le a\le \lambda$, define
\begin{equation*}
\overline{\gamma q^{t}}=	\left\{
 \begin{array}{lcl}
a(q^m+1)-(\gamma q^{t}\pmod{n}), \text{if}~\frac{a}{2} (q^m+1)<\gamma q^{t}\pmod{n}< a(q^m+1);\\
(\lambda+a)(q^m+1)-(\gamma q^{t}\pmod{n}), \text{if}~\frac{\lambda+a}{2} (q^m+1)<\gamma q^{t}\pmod{n}<n;\\
\gamma q^{t}\pmod{n}, \text{otherwise}.
 \end{array} \right.
\end{equation*}

It should be emphasized that when $a=\lambda$, the case where $\frac{\lambda+a}{2}(q^m+1)<\gamma q^{t}<n$ does not exist.

Based on the above discussion, we can express 
$$C_{\gamma}=\{\overline{\gamma},\overline{\gamma q},\cdots ,\overline{\gamma q^{\frac{\tau}{2}-1}}, {\overline{\gamma}}^{\prime},{\overline{\gamma q}}^{\prime},\cdots ,{\overline{\gamma q^{\frac{\tau}{2}-1}}}^{\prime}\},$$
where for any $0\le t\le \frac{\tau}{2}-1$,
\begin{equation*}
{\overline{\gamma q^{t}}}^{\prime}=	\left\{
 \begin{array}{lcl}
a(q^m+1)-\overline{\gamma q^{t}}, \text{if}~0\le \gamma q^{t}\pmod{n}< a(q^m+1);\\
\frac{\lambda+a}{2}(q^m+1)-\overline{\gamma q^{t}}, \text{otherwise}.
 \end{array} \right.
\end{equation*}

It is not difficult to find that $\overline{\gamma q^{t}}<\overline{\gamma q^{t}}^{\prime}$ holds for any $0\le t\le \frac{\tau}{2}-1$. Therefore, the coset leader of $C_{\gamma}$ is actually the $\mathrm{min}\{\overline{\gamma q^{t}}\mid 0\le t\le \frac{\tau}{2}-1\}$. Furthermore, for any $0\le \gamma<n$, to establish that $\gamma$ is the coset leader of $C_{\gamma}$, it suffices to show that $\overline{\gamma q^{t}}\ge \gamma$ for all $1\le t\le m-1$.

For $\lambda=1$, Lemma 15 of \cite{lichengju} and Proposition 35, Prooposition 36 of $\cite{dingcunsheng7}$ characterized all $q$-cyclotomic coset leaders modulo $n$ in the range of $1\le \gamma \le q^{\lfloor\frac{m+1}{2}\rfloor}$.  Building upon this discussion, we will consider the case for $\lambda\mid q-1$.

When $m\ge 2$ is even, we have the following conclusions.

\begin{theorem}\label{th12}
Let $m=2$. For any positive integer $\gamma$ with $1\le \gamma\le \lambda q-1$ and $\gamma\not\equiv 0\pmod{q}$, we have
\begin{itemize}
\item[(1)] If $q$ is even or $q$ is odd and $\lambda\le \frac{q-1}{2}$, then $\gamma$ is the coset leader of $C_{\gamma}$ with $\mid C_{\gamma}\mid=4;$
\item[(2)] If $q$ is odd and $\lambda\ge \frac{q+1}{2}$, then $\gamma$ is the coset leader of $C_{\gamma}$ with $\mid C_{\gamma}\mid=4$ except $\mid C_{\frac{q^2+1}{2}}\mid=2$.
\end{itemize}
\end{theorem}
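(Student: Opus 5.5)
The plan is to write the coset out explicitly and compare each of its elements with $\gamma$. Since $m=2$, the coset $C_\gamma$ is generated by $q$ with $q^4\equiv 1\pmod n$: indeed $q^2\equiv -1\pmod{q^2+1}$ and $q\equiv 1\pmod{\lambda}$. So $|C_\gamma|\in\{1,2,4\}$ and
$$C_\gamma=\{\gamma,\ \gamma q^2,\ \gamma q,\ \gamma q^3\}\pmod n.$$
Write $\gamma\equiv a\pmod\lambda$ with $1\le a\le\lambda$. By Lemma \ref{lem2}, $\gamma q^2$ is the partner $a(q^2+1)-\gamma$ of $\gamma$, and $\gamma q^3$ is the partner of $x:=\gamma q\bmod n$. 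That partner is $a(q^2+1)-x$ if $x<a(q^2+1)$ and $(\lambda+a)(q^2+1)-x$ otherwise.

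To make $x$ explicit, write $\gamma=uq+v$ with $0\le u\le\lambda-1$ and $1\le v\le q-1$; this is possible because $1\le\gamma\le\lambda q-1$ and $q\nmid\gamma$. Then
$$\gamma q=u(q^2+1)+r,\qquad r:=vq-u,\qquad 0<r<q^2+1,$$
and since $u<\lambda$ this is already reduced modulo $n$. Also $a\equiv u+v\pmod\lambda$, because $q\equiv1\pmod\lambda$.

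Three of the four comparisons are easy.
\begin{itemize}
\item[(i)] $a(q^2+1)-\gamma\ge q^2+1-\lambda q+1>\lambda q>\gamma$, using $\lambda\le q-1$.
\item[(ii)] $x\ge\gamma$. If $u\ge1$ then $x\ge q^2+1>\gamma$; if $u=0$ then $x=vq>v=\gamma$.
\item[(iii)] Write the partner of $x$ as $x'=k(q^2+1)+(q^2+1-r)$, where $k=a-u-1$ or $k=\lambda+a-u-1$ according to the two cases of Lemma \ref{lem2}. If $k\ge1$, then $x'\ge q^2+1>\gamma$.
\end{itemize}

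The remaining case is $k=0$, and I expect it to be the main obstacle. In the first branch it forces $a=u+1$, hence $v\equiv1\pmod\lambda$. In the second branch it forces $\lambda+a=u+1$, which is impossible since $a\ge1$ and $u\le\lambda-1$. So it remains to show
$$q^2+1-vq+u\ \ge\ uq+v\iff q^2+1\ \ge\ (u+v)q+v-u$$
under the constraints $a=u+1$, $v\equiv1\pmod\lambda$, $1\le v\le q-1$ and $u\le\lambda-1$. I would split on $u+v\le q-1$, where the inequality is immediate, and $u+v\ge q$. In the second case the congruences should force $v$ close to $q-1$ and $u$ large, which is only possible when $\lambda$ is large. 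The expectation is that for $q$ even, or for $q$ odd with $\lambda\le\frac{q-1}{2}$, the inequality always holds. For $q$ odd and $\lambda\ge\frac{q+1}{2}$, equality $x'=\gamma$ should occur exactly at $\gamma=\frac{q^2+1}{2}=\frac{q-1}{2}\,q+\frac{q+1}{2}$, that is, at $u=\frac{q-1}{2}$ and $v=\frac{q+1}{2}$. If the congruence bookkeeping via $a$ turns out to be awkward, I would instead work directly with $\gamma q^3\bmod n=\gamma q^3-\lfloor\gamma q^3/n\rfloor n$ and bound it below.

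Finally, the sizes. $|C_\gamma|=4$ fails only if $\gamma q^2\equiv\gamma$, i.e. $2\gamma\equiv a(q^2+1)\pmod n$, or if $\gamma q\equiv\gamma$. The latter gives $\gamma(q-1)\equiv0\pmod{\lambda(q^2+1)}$, and since $\gcd(q-1,q^2+1)\mid2$ this needs $\gamma\ge\frac{q^2+1}{2}$. For $\gamma\le\lambda q-1$, both conditions force $\gamma=\frac{q^2+1}{2}$, which requires $q$ odd and $\lambda q-1\ge\frac{q^2+1}{2}$, i.e. $\lambda\ge\frac{q+1}{2}$. Moreover, for $\gamma=\frac{q^2+1}{2}$ one checks $\gamma q\equiv\gamma q^3$, so its coset is $\{\gamma,\gamma q\}$ and has size $2$. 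In particular, this exceptional $\gamma$ is still its own coset leader, which gives statement (2); every other $\gamma$ in the range has $|C_\gamma|=4$.
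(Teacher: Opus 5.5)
\textbf{There is a genuine gap, and it is in step (i), which you treated as easy.} The inequality $q^2+2-\lambda q>\lambda q$ is equivalent to $2\lambda q<q^2+2$, and this does not follow from $\lambda\le q-1$. For $\lambda=q-1$ it reads $2q^2-2q<q^2+2$, which is false for every $q\ge 3$.

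Your own exceptional element already shows (i) cannot hold. Since $\frac{q^2+1}{2}-1=(q-1)\frac{q+1}{2}$, the integer $\gamma=\frac{q^2+1}{2}$ has $a=1$, so $a(q^2+1)-\gamma=\gamma$. The size-$2$ exception therefore comes from $\gamma q^2$, not from the $k=0$ case of (iii). There $v=\frac{q+1}{2}\not\equiv 1\pmod{q-1}$, so $k\neq 0$.

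The failure is not an artifact of a crude bound. By Lemma \ref{lem2}, $a(q^2+1)-\gamma$ is exactly $\gamma q^2 \bmod n$. So any $\gamma$ in the range with $a=1$ and $\gamma>\frac{q^2+1}{2}$ is not a coset leader. For $\lambda=q-1$ and $q\ge 4$ such $\gamma$ exist, namely $\gamma=1+j(q-1)$ with $\frac{q+1}{2}<j\le q-1$. Two concrete cases:
\begin{itemize}
\item $q=5$, $\lambda=4$, $n=104$: $9\cdot 25=225=2\cdot 104+17$, so $17\in C_9$. Yet $17\le \lambda q-1=19$, $5\nmid 17$ and $17\neq 13$.
\item $q=4$, $\lambda=3$, $n=51$, which falls under part (1) because $q$ is even: $7\cdot 16=112=2\cdot 51+10$, so $10\in C_7$, with $10\le 11$ and $4\nmid 10$.
\end{itemize}
So the statement is false as written whenever $\lambda=q-1\ge 3$, in both parts, and no repair of your argument can prove it. The paper gives no proof of this theorem; it only defers to the method of Theorem \ref{th6}.

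\textbf{What your argument does prove once (i) is corrected.} Step (ii) is fine. The $k=0$ case you expected to be the main obstacle is actually easy. Since $v\equiv 1\pmod{\lambda}$, $1\le v\le q-1$ and $\lambda\mid q-1$, you get $v\le q-\lambda$. Hence $u+v\le q-1$ and $(u+v)q+v-u\le q^2-1<q^2+1$, so your anticipated subcase $u+v\ge q$ is empty.

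Consequently, for $1\le\gamma\le\lambda q-1$ with $q\nmid\gamma$, $\gamma$ is a coset leader if and only if $2\gamma\le a(q^2+1)$. This is automatic when $a\ge 2$. For $a=1$ it holds throughout the range exactly when $\lambda\le\frac{q-1}{2}$ or $q\le 3$. You therefore obtain part (1) under the extra hypothesis $\lambda\le\frac{q-1}{2}$, for $q$ even as well as odd, and part (2) only for $q=3$. Your size analysis is correct as it stands.
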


\begin{proof}
The proof is straightforward and follows the same method as Theorem \ref{th6}, so it is omitted.
\end{proof}

\begin{theorem}\label{th6}
Let $m\ge 4$ be even. Then every positive integer $\gamma$ satisfies $1\le \gamma\le \lambda q^{\frac{m}{2}}-1$ and $\gamma\not\equiv 0\pmod{q}$ is a coset leader with $\mid C_{\gamma}\mid =2m$. 
\end{theorem}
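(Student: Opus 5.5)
We propose to work directly with the definition of coset leader: for $1\le\gamma\le\lambda q^{m/2}-1$ with $q\nmid\gamma$, we show that every nonzero element of $C_\gamma$ other than $\gamma$ is strictly larger than $\gamma$ as an integer in $[0,n-1]$. Since $q^{2m}\equiv 1\pmod{n}$ (because $q^m\equiv -1\pmod{q^m+1}$ and $\lambda\mid q-1$ gives $q^{m}\equiv 1\pmod \lambda$, so $q^{2m}\equiv1$ mod both, and the combination works via $q^{2m}-1=(q^m-1)(q^m+1)$ with $\lambda\mid q^m-1$), the coset is $\{\gamma q^t \bmod n: 0\le t\le 2m-1\}$. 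We split these into the two halves $0\le t\le m-1$ and $m\le t\le 2m-1$. For the second half we use Lemma \ref{lem2}: $\gamma q^{m+t}\equiv -\gamma q^t + c(q^m+1)$ for the appropriate multiple $c$ (namely $a$ or $\lambda+a$), so each element of the second half equals $a(q^m+1)-x$ or $(\lambda+a)(q^m+1)-x$ with $x=\gamma q^t\bmod n$. Hence it suffices to prove, for every $1\le t\le m-1$, that $x_t:=\gamma q^t \bmod n$ satisfies $\gamma< x_t$ and $\gamma < c(q^m+1)-x_t$, i.e. $x_t$ is not within distance $\gamma$ of any multiple of $q^m+1$ (with the relevant multiples $0,a,\lambda+a,\dots$); in practice we show $x_t$ lies at distance $>\gamma$ from every multiple $j(q^m+1)$, $0\le j\le\lambda$. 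The case $t=0$ second half gives $a(q^m+1)-\gamma$, which exceeds $\gamma$ since $\gamma<\lambda q^{m/2}\le \tfrac12(q^m+1)$ for $m\ge4$.

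For the key estimate, write $\gamma=\gamma_1 q^{m/2}+\gamma_0$ type expansions, or more simply treat two ranges of $t$. For $1\le t\le m/2$: $\gamma q^t<\lambda q^{m/2+t}\le \lambda q^m<n$, so $x_t=\gamma q^t$ with no reduction; then $x_t\ge q\gamma>\gamma$, and we need $j(q^m+1)-\gamma q^t>\gamma$ whenever $\gamma q^t<j(q^m+1)$. Here we use $q\nmid\gamma$: $\gamma q^t\equiv 0\pmod q$ while $j(q^m+1)\equiv j\pmod {q}$, and more precisely $j(q^m+1)-\gamma q^t=(jq^{m-t}-\gamma)q^t+j$, which is $\ge q^t+j>\gamma$ unless $jq^{m-t}=\gamma$... but then $q\mid\gamma$ when $t<m$, contradiction; if $jq^{m-t}<\gamma$ the difference is negative. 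So either the difference is at least $q^t$, which beats $\gamma$ only when $t$ is large — this is where we must be careful. For $m/2< t\le m-1$ we reduce: write $\gamma q^t= Q(q^m+1)+r$, and compute $r$ via $\gamma q^t = \gamma_{\mathrm{hi}}q^{m}+\gamma_{\mathrm{lo}}q^t$ where $\gamma=\gamma_{\mathrm{hi}}q^{m-t}+\gamma_{\mathrm{lo}}$, giving $r\equiv \gamma_{\mathrm{lo}}q^t-\gamma_{\mathrm{hi}}$ modulo $q^m+1$, and analogously for $t\le m/2$ treat the distance $|jq^{m-t}-\gamma|q^t\pm j$.

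The main obstacle is the uniform estimate that $\mathrm{dist}(\gamma q^t, (q^m+1)\mathbb{Z})>\gamma$ for all $1\le t\le m-1$. Our plan: write $\gamma q^t - j(q^m+1)= (\gamma - jq^{m-t})q^t - j$ and choose $j$ the nearest multiple; set $u=\gamma-jq^{m-t}$. If $u=0$ then $q^{m-t}\mid\gamma$, impossible since $q\nmid\gamma$ and $t<m$. So $|u|\ge1$ and the distance is at least $q^t-\lambda$ when $u$ is the small remainder; for $t\ge m/2+1$ this is $\ge q^{m/2+1}-\lambda>\lambda q^{m/2}>\gamma$. For $t\le m/2$ we instead use the symmetric form: multiply by $q^{m-t}$ ($\equiv$ applying the argument to $\gamma q^{t}$ versus $-\gamma q^{t-m}$), i.e. since $q^m\equiv -1$ modulo $q^m+1$, the residue of $\gamma q^t$ is also controlled by $\gamma q^{t}$ being genuinely less than $\lambda(q^m+1)$ without wrap, where $|u|q^t$ with $|u|\ge1$ and $u=\gamma-jq^{m-t}$ large unless $j=0$, in which case the distance is $\gamma q^t>\gamma$; for $j\ge1$, $jq^{m-t}\ge q^{m/2}$ and $\gamma<\lambda q^{m/2}$ with $|u|\ge1$ gives distance $\ge q^t-j$, and we would need the finer bound $|u|\ge \lceil \gamma/q^t\rceil$-type estimate; this case analysis, including the edge $t=m/2$ and the role of $\lambda<q$, is where we expect most of the work. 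Finally, size $2m$ follows because $\gamma q^t\not\equiv\gamma$ for $1\le t<2m$ by the same distance bound (and the $t=m$ case, $\gamma q^m\equiv c(q^m+1)-\gamma\ne\gamma$).
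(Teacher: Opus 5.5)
Your first reduction is correct and is what the paper checks: for $1\le t\le m-1$ you need $x_t>\gamma$ and $c(q^m+1)-x_t>\gamma$. You also correctly identify the relevant multiples as $0$, $a(q^m+1)$ and $(\lambda+a)(q^m+1)$. The gap is that you then replace this by the stronger claim that $x_t$ is at distance $>\gamma$ from \emph{every} multiple $j(q^m+1)$, and make $\mathrm{dist}(\gamma q^t,(q^m+1)\mathbb{Z})>\gamma$ your key estimate. For every $\lambda\ge 2$ that estimate is false at $t=m/2$.

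Write $\gamma=\gamma_1q^{m/2}+\gamma_0$ with $1\le\gamma_0<q^{m/2}$ and $0\le\gamma_1\le\lambda-1$. Then $x_{m/2}=\gamma_1(q^m+1)+\gamma_0q^{m/2}-\gamma_1$, with no reduction mod $n$ needed. If $\gamma_1\ge 1$ and $\gamma_0\le\gamma_1$, this lies within $\gamma$ of $\gamma_1(q^m+1)$. Concretely, take $q=3$, $\lambda=2$, $m=4$, $n=164$, $\gamma=10$. Then $x_2=90$ is at distance $8<10$ from $82$. Yet $C_{10}=\{10,30,90,106,154,134,74,58\}$, so $10$ is a coset leader. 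Hence the ``finer bound $|u|\ge\lceil\gamma/q^t\rceil$'' you hope to prove for $t\le m/2$ cannot hold.

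The missing idea is that only the fold points matter, and locating them requires the congruence $x_t\equiv\gamma\pmod\lambda$ (from $q\equiv1\pmod\lambda$), which never enters your argument. Let $1\le a\le\lambda$ with $a\equiv\gamma\pmod\lambda$. If $x_t>a(q^m+1)$, the partner $(\lambda+a)(q^m+1)-x_t$ automatically exceeds $a(q^m+1)>\gamma$. So what you must show is $x_t\notin[0,\gamma]\cup[a(q^m+1)-\gamma,\,a(q^m+1)]$; nearness to other multiples is harmless.

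At $t=m/2$ this closes the hole:
\begin{itemize}
\item Clearly $x_{m/2}>\gamma$.
\item $x_{m/2}$ lies strictly between $\gamma_1(q^m+1)$ and $(\gamma_1+1)(q^m+1)$, so it can only enter the second interval if $a=\gamma_1+1$.
\item Since $a\equiv\gamma_1+\gamma_0\pmod\lambda$, this forces $\gamma_0\equiv1\pmod\lambda$, hence $\gamma_0\le q^{m/2}-\lambda$.
\item Then $a(q^m+1)-x_{m/2}=q^m+1-\gamma_0q^{m/2}+\gamma_1\ge\lambda q^{m/2}+1+\gamma_1>\gamma$.
\end{itemize}

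With this patch the rest of your plan works. For $t<m/2$ you have $\gamma q^t<\lambda q^{m-1}<q^m+1$, so only $0$ and $q^m+1$ are nearby, and both distances exceed $\gamma$. For $t>m/2$ the uniform distance bound really does hold. There, however, the $j$ in $(\gamma-jq^{m-t})q^t-j$ can be as large as about $\lambda q^{t-m/2}$, not $\lambda$. So your bound should read $q^t-\lambda q^{t-m/2}-1$, which still exceeds $\lambda q^{m/2}$. The paper builds the correct fold points in from the start: it writes $\gamma=k\lambda+a$ and splits according to which of $[0,\tfrac a2(q^m+1)]$, $(\tfrac a2(q^m+1),a(q^m+1))$, $(a(q^m+1),\tfrac{\lambda+a}2(q^m+1)]$, $(\tfrac{\lambda+a}2(q^m+1),n)$ contains $\gamma q^{m-t}\bmod n$.
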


\begin{proof}
For any positive integer $1\le \gamma\le \lambda q^{\frac{m}{2}}-1$, assume that $\gamma\equiv a\pmod{\lambda}$ with $1\le a\le \lambda$. Reviewing that $\gamma$ is a coset leader if and only if $\overline{\gamma q^{m-t}}\ge \gamma$ for all $1\le t\le m-1$.

\begin{itemize}
\item[$\mathbf{Case~1:}$]
When $1\le \gamma\le \lambda$, we clearly have
\begin{align*}
\overline{\gamma q^{m-t}}=\gamma q^{m-t}>\gamma
\end{align*}
for any $1\le t\le m-1$.

\item[$\mathbf{Case~2:}$] When $\lambda+1\le \gamma\le \lambda q^{\frac{m}{2}}-1$, we divide the discussion into two cases: $t=1$ and $2\le t\le m-1$.

\begin{itemize}
\item[$\mathbf{Case~2.1:}$] First, we consider the case for $t=1$. Let $\gamma=k\lambda+a$ and $k=Aq+B$, where $0\le B\le q-1$. Since $\lambda+1\le \gamma\le \lambda q^{\frac{m}{2}}-1$, then we have $1\le k\le q^{\frac{m}{2}}-1$ for any $1\le a\le \lambda-1$ and $1\le k\le q^{\frac{m}{2}}-2$ for $a=\lambda$. Further, 
$$Aq+B\le q^{\frac{m}{2}}-1\Leftrightarrow A\le q^{\frac{m}{2}-1}-1~\text{for any }~0\le B\le q-1.$$

When $1\le \gamma q^{m-1}\pmod{n}\le \frac{a}{2}(q^m+1)$ or $a(q^m+1)<\gamma q^{m-1}\pmod{n}\le \frac{\lambda+a}{2}(q^m+1)$, clearly, we have

\begin{align*}
\overline{\gamma q^{m-1}}=&\gamma q^{m-1}\pmod{n}\\
=～&((Aq+B)\lambda+a) q^{m-1}-A \lambda(q^m+1)\\
=～&q^{m-1} (B\lambda+a)-A\lambda>\lambda q^{\frac{m}{2}}>\gamma.
\end{align*}

When $\frac{a}{2}(q^m+1)<\gamma q^{m-1}\pmod{n}<a(q^m+1)$, on one hand, 
\begin{align*}
&\gamma q^{m-1}\pmod{n}>\frac{a}{2} (q^m+1)\\
\Leftrightarrow～& q^{m-1}(B\lambda+a)-A\lambda>\frac{aq}{2}\cdot q^{m-1}+\frac{a}{2}\\
\Leftrightarrow～&B>\frac{a(q-2)}{2\lambda}.
\end{align*}

On the other hand, 
\begin{align*}
&\gamma q^{m-1}\pmod{n}<a(q^m+1)\\
\Leftrightarrow～&q^{m-1}(B\lambda+a)-A\lambda<aq\cdot q^{m-1}+a\\
\Leftrightarrow～&B\le \frac{a(q-1)}{\lambda},
\end{align*}
but when $B=\frac{a(q-1)}{\lambda}$, $\gamma$ is must not be a coset leader because of
\begin{align*}
\gamma=k\lambda+a=(Aq+\frac{a(q-1)}{\lambda})\lambda+a\equiv 0\pmod{q}.
\end{align*}
Therefore, when $\frac{a}{2}(q^m+1)<\gamma q^{m-1}\pmod{n}<a(q^m+1)$, we have
\begin{align*}
\overline{\gamma q^{m-1}}&=a(q^m+1)-\gamma q^{m-1}\pmod{n}\\
&= q^{m-1}(\lambda(a\cdot \frac{q-1}{\lambda}-B))+a+A\lambda\\
&>\lambda q^{\frac{m}{2}}>\gamma.
\end{align*}
 
When $\frac{\lambda+a}{2}(q^m+1)<\gamma q^{m-1}\pmod{n}<n$, we have
\begin{align*}
&\gamma q^{m-1} \pmod{n}>\frac{\lambda+a}{2} (q^m+1)\Leftrightarrow B>\frac{(\lambda+a)q-2a}{2\lambda}
\end{align*}

and

\begin{align*}
\gamma q^{m-1}\pmod{n}<n
\Leftrightarrow B\le q-1.
\end{align*}

Therefore, when $\frac{\lambda+a}{2}(q^m+1)<\gamma q^{m-1}\pmod{n}<n$, we obtain

\begin{align*}
\overline{\gamma q^{m-1}}&=(\lambda+a)(q^m+1)-(\gamma q^{m-1}\pmod{n})\\
&=q^{m-1}((\lambda+a)q-B\lambda-a)+\lambda+a+A\lambda\\
&\ge q^{m-1}(q-\lambda)+\lambda+a+A\lambda\ge \lambda q^{\frac{m}{2}}>\gamma.
\end{align*}

Combining the above arguments, for this subcase, we conclude that $\overline{\gamma q^{m-1}}>\gamma$.

\item[$\mathbf{Case~2.2:}$] Next, we consider the case for $2\le t\le m-1$. Let $\gamma=k\lambda+a$ and $k=Aq^t+B\cdot \frac{q^t-1}{2\lambda}+C$, where $1\le B\le 2\lambda$, $0\le C<\frac{q^t-1}{2\lambda}$. Notice that $m\ge 4$ in this case and reviewing that $\lambda+1\le \gamma\le \lambda q^{\frac{m}{2}}-1$, then we have 
\begin{align*}
A q^t+B\cdot \frac{q^t-1}{2\lambda}+C\le q^{\frac{m}{2}}-1
\Leftrightarrow 0\le A\le\mathrm{min}\{0, q^{\frac{m}{2}-t}-B\cdot \frac{q^t-1}{2\lambda q^t}-\frac{C+1}{q^t}\}.
\end{align*}
When $1\le \gamma q^{m-t}\pmod{n}\le \frac{a}{2}(q^m+1)$ or $a(q^m+1)<\gamma q^{m-t}\pmod{n}\le \frac{\lambda+a}{2}(q^m+1)$, it is clear to have
\begin{align*}
\gamma q^{m-t}\pmod{n}&=q^{m-t}((Aq^t+B\cdot \frac{q^t-1}{2\lambda}+C)\lambda+a)-A\lambda(q^m+1)\\
&=q^{m-t}(B\cdot \frac{q^t-1}{2}+C\lambda+a)-A\lambda>\lambda q^{\frac{m}{2}}>\gamma.
\end{align*}

When $\frac{a}{2}(q^m+1)<\gamma q^{m-t}\pmod{n}<a(q^m+1)$,  on one hand, 
\begin{align*}
&\gamma q^{m-t}\pmod{n}>\frac{a}{2}(q^m+1)\\
\Leftrightarrow~&q^{m-t}(B\cdot \frac{q^t-1}{2}+C\lambda+a)-A\lambda>\frac{aq^t}{2}\cdot q^{m-t}+\frac{a}{2}\\
\Leftrightarrow~&B>a+\frac{2A\lambda+a-aq^{m-t}}{q^{m-t}q^{t}-1}-\frac{2C\lambda}{q^t-1}\\
\Leftrightarrow~&B\ge a.
\end{align*}
In fact, since $0\le C\le \frac{q^t-1}{2\lambda}-1$, $-\frac{a}{q^t-1}<\frac{2A\lambda+a-aq^{m-t}}{q^{m-t}(q^t-1)}<0$, then 
$$-1+\frac{2\lambda-a}{q^t-1}<\frac{2A\lambda+a-aq^{m-t}}{q^{m-t}(q^t-1)}-\frac{2C\lambda}{q^t-1}<0.$$

On the other hand, 
\begin{align*}
&\gamma q^{m-t}\pmod{n}<a(q^m+1)\\
\Leftrightarrow~&q^{m-t}(B\cdot \frac{q^t-1}{2}+C\lambda+a)-A\lambda<aq^t\cdot q^{m-t}+a\\
\Leftrightarrow~&B<2a+\frac{2A\lambda+2a}{q^{m-t}(q^t-1)}-\frac{2C\lambda}{q^t-1}.
\end{align*}

It is easy to find that $B\le 2a$ if $C=0$ and $B\le 2a-1$ otherwise. But when $C=0, B=2a$, $\gamma$ is not a coset leader because of $\gamma=k\lambda+a\equiv 0\pmod{q}.$

Therefore, when $\frac{a}{2}(q^m+1)<\gamma q^{m-t}\pmod{n}<a(q^m+1)$, we have
\begin{align*}
\overline{\gamma q^{m-t}}&=a(q^m+1)-\gamma q^{m-t}\pmod{n}\\
&=q^{m-t}((a-\frac{B}{2})(q^t-1)-C\lambda)+a+A\lambda\\
&>q^{m-t}\cdot \frac{q^t-1}{2}+a+A\lambda\\
&>\lambda q^{\frac{m}{2}}>\gamma.
\end{align*}

When $\frac{\lambda+a}{2}(q^m+1)< \gamma q^{m-t}\pmod{n}<n$, on one hand,
\begin{align*}
&\gamma q^{m-t}\pmod{n}>\frac{\lambda+a}{2}(q^m+1)\\
\Leftrightarrow~&B\ge \lambda+a+\frac{\lambda+a+2A\lambda}{q^{m-t}(q^t-1)}+\frac{\lambda-a-2C\lambda}{q^t-1}.
\end{align*}

We observe that $B\ge \lambda+a+1$ if $C=0$ and $B\ge \lambda+a$ otherwise.

On the other hand, 
\begin{align*}
&\gamma q^{m-t}\pmod{n}<n\\
\Leftrightarrow~& B<2\lambda+\frac{2\lambda(A+1)}{q^{m-t}(q^t-1)}+\frac{2(\lambda-a-C\lambda)}{q^t-1}.
\end{align*} 
It is not difficult to find that $B\le 2\lambda$ if $C=0$ and $B\le 2\lambda-1$ otherwise.

Therefore, when $\frac{\lambda+a}{2}(q^m+1)< \gamma q^{m-t}\pmod{n}<n$, we have 
\begin{align*}
\overline{\gamma q^{m-t}}&=(\lambda+a)(q^m+1)-\gamma q^{m-t}\pmod{n}\\
&=q^{m} ((\lambda-(\frac{B}{2}-a))+q^{m-t}((\frac{B}{2}-C\lambda)-a)+\lambda(A+1)+a\\
&\ge \lambda q^{\frac{m}{2}}>\gamma.
\end{align*}
\end{itemize}
Putting it all together, for this subcase, we conclude that $\overline{\gamma q^{m-t}}>\gamma$.
\end{itemize}
This completes the proof of the theorem.
\end{proof}

Below we turn to the case when $m\ge 3$ is odd. The proof is analogous to that for even $m$ and is omitted here.

\begin{theorem}\label{th13}
Let $m=3$. For any positive integer $\gamma$ with $1\le \gamma\le 2\lambda q-1$ and $\gamma\not\equiv 0\pmod{q}$, we have 
\begin{itemize}
\item[(1)] If $1\le \lambda<\lceil\frac{q}{2}\rceil$, then $\gamma$ is the coset leader of $C_{\gamma}$ with $\mid C_{\gamma}\mid=6$;
\item[(2)] If $\lambda \ge \lceil\frac{q}{2}\rceil$, then $\gamma$ is the coset leader of $C_{\gamma}$ with $\mid C_{\gamma}\mid=6$ except $\mid C_{q^2-q+1}\mid =2$.
\end{itemize}
\end{theorem}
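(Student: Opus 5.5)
The plan is to follow the method of Theorem \ref{th6}, specialized to $m=3$. By the discussion after Lemma \ref{lem2}, it suffices to show $\overline{\gamma q^{3-t}}\ge\gamma$ for $t=1,2$. I would also determine the size separately, using that it is $1$ or even and divides $2m=6$.

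\textbf{Step 1 (coset leader).} Write $\gamma=k\lambda+a$ with $1\le a\le\lambda$. Then $\gamma\le 2\lambda q-1$ gives $k\le 2q-1$, and $k\le 2q-2$ when $a=\lambda$.

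For $t=1$, write $k=Aq+B$ with $0\le B\le q-1$ and $A\in\{0,1\}$. Then $\gamma q^2 \bmod n = q^2(B\lambda+a)-A\lambda$. I would run through the same three ranges as in Case 2.1 of Theorem \ref{th6}:
\begin{itemize}
\item the unreflected range;
\item the range $\frac a2(q^3+1)<\cdot<a(q^3+1)$, where $B=\frac{a(q-1)}{\lambda}$ is excluded by $q\nmid\gamma$;
\item the range $\frac{\lambda+a}{2}(q^3+1)<\cdot<n$.
\end{itemize}
In each range, $\overline{\gamma q^2}$ has the form $q^2\cdot(\text{positive integer})+O(\lambda)$. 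This already exceeds $2\lambda q>\gamma$ unless the coefficient of $q^2$ is tiny. So only a handful of boundary values of $B$ need direct checking.

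For $t=2$, since $k<q^2$, I would write $k=B\frac{q^2-1}{2\lambda}+C$ (no $A$ term) and treat $\gamma q \bmod n$ exactly as in Case 2.2. The problem reduces to comparing $q\big(B\frac{q^2-1}{2}+C\lambda+a\big)$, or its reflection, against $\gamma$.

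\textbf{Step 2 (the obstacle).} For $m\ge4$ the slack $q^{m-t}\ge q^{m/2}\cdot$const made every inequality automatic. Here $q^{m-t}$ can be as small as $q$ while $\gamma$ reaches $2\lambda q$, so the inequalities $\overline{\gamma q^{3-t}}>\gamma$ become tight. I expect the reflected cases with the $q^2$- or $q$-coefficient equal to $0$ or $1$ to be the main work.

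I would isolate these by solving $\overline{\gamma q}=\gamma$ or $\overline{\gamma q}<\gamma$ explicitly. The claim to verify is that the only failure or collision is $\gamma=q^2-q+1$. Indeed $q(q^2-q+1)=(q^3+1)-(q^2-q+1)$, so that element is its own reflection partner at the next power. The element $q^2-q+1$ lies in the range $\gamma\le 2\lambda q-1$ exactly when $2\lambda q\ge q^2-q+2$, i.e. $\lambda\ge\lceil q/2\rceil$. This accounts for the split between (1) and (2).

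\textbf{Step 3 (sizes).} The size is $1$, $2$ or $6$. Size $\le2$ means $n\mid\gamma(q^2-1)$, given the sign-type relation $\gamma q\equiv \pm\gamma$-reflection. Since $\gcd(q^3+1,q^2-1)=q+1$ and $\lambda\mid q-1$, this forces $(q^2-q+1)\mid\gamma$.

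Within $1\le\gamma\le2\lambda q-1$ and $q\nmid\gamma$, the only candidate is $\gamma=q^2-q+1$; its multiples $2(q^2-q+1)$ and beyond exceed $2\lambda q-1$, since $\lambda\le q-1$. For this $\gamma$ one computes directly:
\begin{align*}
q\gamma &\equiv (q^3+1)-\gamma \pmod n,\\
q^2\gamma &= (q-1)(q^3+1)+\gamma\equiv\gamma \pmod n,
\end{align*}
using $\lambda\mid q-1$. So its coset is $\{\gamma,\,q^3+1-\gamma\}$, of size $2$, and $\gamma$ is the leader. All other $\gamma$ in range have size $6$.
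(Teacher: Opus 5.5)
Your route is the paper's own: the paper only says the proof is analogous to Theorem~\ref{th6} and omits it. Two of your three ingredients are sound. Step~3 is correct and complete, and cleaner than anything the paper offers for the sizes: $\gamma q^2\equiv\gamma\pmod n$ iff $(q^2-q+1)\mid\gamma$, the only such $\gamma$ in range is $q^2-q+1$, and $q(q^2-q+1)=(q^3+1)-(q^2-q+1)$ fixes its coset. The multiplier $q^2$ in Step~1 also goes through, with $x=q^2(B\lambda+a)-A\lambda$. In the unreflected ranges $x-\gamma=(q+1)\bigl[(q-1)(B\lambda+a)-A\lambda\bigr]\ge 0$, with equality only for $A=1$, $B=0$, $a=1$, $\lambda=q-1$, i.e.\ $\gamma=q^2-q+1$. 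In the two reflected ranges the coefficient of $q^2$ in $\overline{\gamma q^2}$ is $a(q-1)-B\lambda\ge\lambda$ (after discarding $B\lambda=a(q-1)$), resp.\ $(\lambda+a)q-B\lambda-a\ge\lambda+a(q-1)$. Hence $\overline{\gamma q^2}\ge\lambda q^2\ge 2\lambda q>\gamma$.

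The gap is the multiplier $q$, the one place where $m=3$ genuinely differs from $m\ge 4$. You announce ``the claim to verify'' but do not verify it. The tool you name would not do it either: the Case~2.2 decomposition $k=B\frac{q^2-1}{2\lambda}+C$ is not integral for even $q$. Its bound $\overline{\gamma q^{m-t}}>q^{m-t}\frac{q^t-1}{2}+a+A\lambda$ also fails for $B=2a-1$, $C\ge 1$, which is exactly the tight subcase here. Drop the decomposition. Since $\gamma q\le 2\lambda q^2-q<\frac{\lambda+1}{2}(q^3+1)\le\frac{\lambda+a}{2}(q^3+1)\le n$, the residue of $\gamma q$ is $\gamma q$ itself and the top reflected range never occurs. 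So $\overline{\gamma q}\le\gamma$ can only happen when $\gamma q<a(q^3+1)$ and $a(q^3+1)-\gamma q\le\gamma$, i.e.
\[
a(q^2-q+1)\le\gamma\le aq^2 .
\]
Because $\gamma<2\lambda q\le 2q(q-1)<2(q^2-q+1)$, this forces $a=1$. Then $q^2-q+1\le 2\lambda q-1$ forces $\lambda>\frac{q-1}{2}$, hence $\lambda=q-1$; in particular case (2) is exactly $\lambda=q-1$. The integers $\equiv 1\pmod{q-1}$ in $[q^2-q+1,q^2]$ are only $q^2-q+1$, where equality holds (the size-$2$ coset), and $q^2$, which is excluded by $q\nmid\gamma$. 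With this inserted, your argument proves the theorem.
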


\begin{theorem}\label{th7}
 Let $m\ge 5$ be an odd integer. Then every positive integer $\gamma$ satisfies $1\le \gamma\le \lambda(q^{\frac{m-1}{2}}+q)-1$ and $\gamma\not\equiv 0\pmod{q}$ is a coset leader with $\mid C_{\gamma}\mid =2m$. 
\end{theorem}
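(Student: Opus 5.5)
We follow the template of Theorem \ref{th6}. The general discussion after Lemma \ref{lem2} reduces the claim to two checks. First, $\gamma$ is the coset leader of $C_\gamma$ as soon as $\overline{\gamma q^{m-t}}\ge\gamma$ for every $1\le t\le m-1$. Second, the size $2m$ will follow once we know that $\gamma q^{j}\not\equiv\gamma\pmod n$ for $1\le j<2m$. Throughout, write $\gamma=k\lambda+a$ with $1\le a\le\lambda$; the hypothesis $\gamma\le\lambda(q^{\frac{m-1}{2}}+q)-1$ gives $k\le q^{\frac{m-1}{2}}+q-1$.

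\textbf{Case 1: $1\le\gamma\le\lambda$.} Here $\gamma q^{m-t}<\frac a2(q^m+1)$ for all $t\ge 1$, so $\overline{\gamma q^{m-t}}=\gamma q^{m-t}>\gamma$ trivially.

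\textbf{Case 2, $t=1$.} We write $k=Aq+B$ with $0\le B\le q-1$, so that
\[
\gamma q^{m-1}\bmod n=q^{m-1}(B\lambda+a)-A\lambda .
\]
We then run through the three ranges exactly as in Case 2.1 of Theorem \ref{th6}:
\begin{itemize}
\item the unreflected range;
\item the reflection about $\frac a2(q^m+1)$, where $B=\frac{a(q-1)}{\lambda}$ is excluded because it forces $q\mid\gamma$;
\item the reflection about $\frac{\lambda+a}{2}(q^m+1)$.
\end{itemize}
In each range the resulting value has the form $q^{m-1}\cdot(\text{positive integer})+O(A\lambda)$. For $m\ge5$ we have $q^{m-1}\ge q^{\frac{m+1}{2}}\cdot q>\lambda(q^{\frac{m-1}{2}}+q)$, so each value exceeds $\gamma$.

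\textbf{Case 2, $2\le t\le m-1$.} We use the decomposition $k=Aq^t+B\frac{q^t-1}{2\lambda}+C$ with $1\le B\le2\lambda$ and $0\le C<\frac{q^t-1}{2\lambda}$, which gives
\[
\gamma q^{m-t}\bmod n=q^{m-t}\Bigl(B\tfrac{q^t-1}{2}+C\lambda+a\Bigr)-A\lambda .
\]
We determine the admissible $B$ in each range as before; the excluded value is $C=0$, $B=2a$, which forces $q\mid\gamma$. After reflection the value is at least roughly $q^{m-t}\cdot\frac{q^t-1}{2}\approx\frac{q^m}{2}$, minus a term of size $A\lambda$ or $q^{m-t}C\lambda$. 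This is where the argument differs from the even case, and it is the main obstacle. Since $k$ may exceed $q^{\frac{m-1}{2}}$, the quantity $A$ need not vanish when $t\le\frac{m-1}{2}$. For $t\ge\frac{m+1}{2}$, on the other hand, we have $k<q^t$, so $A=0$, and only the sizes of $C$ and $B$ matter. We therefore split into the two ranges $t\le\frac{m-1}{2}$ and $t\ge\frac{m+1}{2}$. In the first range we bound $A\le q^{\frac{m-1}{2}-t}+1$, so $A\lambda$ is negligible against $q^{m-t}\ge q^{\frac{m+1}{2}}$. In the second range we bound $C\le k\le q^{\frac{m-1}{2}}+q$, so the subtracted term $q^{m-t}C\lambda$ cannot bring the value below $\lambda(q^{\frac{m-1}{2}}+q)$. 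The delicate subcase is $t=\frac{m+1}{2}$ together with the largest $k$. There we compute directly: with $B=2a-1$ (or $B=\lambda+a$), the leading term $q^{\frac{m-1}{2}}\cdot\frac{q^{\frac{m+1}{2}}-1}{2}$ still dominates $\lambda q^{\frac{m-1}{2}}$ because $\lambda\le q-1$ and $m\ge5$. This is exactly where the restriction $m\ge5$ enters, as opposed to the exception appearing in Theorem \ref{th13} for $m=3$.

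\textbf{Size of $C_\gamma$.} Suppose $\gamma(q^j-1)\equiv0\pmod n$ for some $j<2m$ with $j\mid 2m$. Since $\gcd(q^j-1,q^m+1)$ divides $q^{\gcd(j,m)}+1$ or $2$, this forces $\lambda(q^m+1)\mid \gamma\cdot\gcd(\cdot)$, that is, $\gamma\ge (q^m+1)/(q^{\gcd}+1)$-ish. For $j$ a proper divisor this lower bound exceeds the range $\gamma<\lambda(q^{\frac{m-1}{2}}+q)$ when $m\ge5$. Hence $|C_\gamma|=2m$.
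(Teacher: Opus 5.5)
Your plan is the one the paper intends (its proof is omitted as analogous to Theorem \ref{th6}). The cases $t=1$ and $2\le t\le\frac{m-1}{2}$, and the size count, are fine in substance. The gap is in the range $t\ge\frac{m+1}{2}$, which is where all the difficulty lies, and neither justification you give there is valid.

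In that range $A=0$. When $\frac a2(q^m+1)<\gamma q^{m-t}\bmod n<a(q^m+1)$, the reflected value is
\[
\overline{\gamma q^{m-t}}=q^{m-t}\bigl(a(q^t-1)-k\lambda\bigr)+a .
\]
In your notation the bracket $(a-\frac B2)(q^t-1)-C\lambda$ is tied to $k$ by $C\lambda=k\lambda-B\frac{q^t-1}{2}$.
\begin{itemize}
\item Bounding $C\le k$ on its own only gives $\frac{q^t-1}{2}-k\lambda$. At $t=\frac{m+1}{2}$ this is negative for $k$ near the top of the range as soon as $\lambda>\frac q2$.
\item The leading term at $t=\frac{m+1}{2}$ is not $q^{\frac{m-1}{2}}\cdot\frac{q^{(m+1)/2}-1}{2}$. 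For $m=5$, $\lambda=q-1$, $a=1$ the bracket can be as small as $2(q-1)$.
\end{itemize}
There is no slack here. Take $m=5$, $\lambda=q-1$ and $\gamma=\lambda(q^2+q)+1$, so $k=q^2+q$, which satisfies the bound $k\le q^{\frac{m-1}{2}}+q$ you invoke in this step. Then $\gamma q^2=q^5-q^3+q^2\in\bigl(\frac12(q^5+1),q^5+1\bigr)$ and $\overline{\gamma q^2}=q^3-q^2+1<\gamma$. For $q=3$ this reads $25\cdot 3^7\equiv 19\pmod{488}$.

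So your claim that the subtracted term cannot push the value below $\lambda(q^{\frac{m-1}{2}}+q)$ is false, and the range in the theorem is sharp. Following Theorem \ref{th6} literally does not help, because its Case 2.2 estimate silently drops the $-C\lambda$ term.

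What has to be proved is $q^{m-t}\bigl(a(q^t-1)-k\lambda\bigr)>k\lambda$, i.e.\ $k\lambda<a\,\frac{q^m-q^{m-t}}{q^{m-t}+1}$.
\begin{itemize}
\item The right side increases with $t$, so the worst case is $t=\frac{m+1}{2}$, $a=1$. There it equals $q^{\frac{m+1}{2}}-q-1+\frac{q+1}{q^{(m-1)/2}+1}$.
\item On the other side, $k\lambda\le(q-1)(q^{\frac{m-1}{2}}+q-1)\le q^{\frac{m+1}{2}}-2q+1$, using $q^{\frac{m-1}{2}}\ge q^2$.
\end{itemize}
This comparison, with a margin of only about $q$, is exactly where $m\ge5$ and the precise upper limit $\lambda(q^{\frac{m-1}{2}}+q)-1$ are used. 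Since it gives $k\lambda<a(q^t-1)$, it also shows $\gamma q^{m-t}\bmod n<a(q^m+1)$, so no other reflection occurs for these $t$.

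Separately, the decomposition $k=Aq^t+B\frac{q^t-1}{2\lambda}+C$ has two problems:
\begin{itemize}
\item It presupposes $2\lambda\mid q^t-1$, which fails for every even $q$ and, e.g., for $q=3$, $\lambda=2$, $t=3$.
\item With $B\ge1$ it misses $k<\frac{q^t-1}{2\lambda}$.
\end{itemize}
Writing $k=Aq^t+R$ with $0\le R<q^t$ avoids both.
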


In the following part, we give a necessary and sufficient condition for any $0\le \gamma< n$ being a coset leader. For $\lambda=1$, the case was investigated in \cite{zhuhongwei}, here we consider the case with $\lambda> 1$.

For any integer satisfies $0\le \gamma<n$, set $\gamma\equiv a\pmod{\lambda}$ with $1\le a\le \lambda$. We claim that the coset leader of $C_{\gamma}$ is must in the range $[0,\frac{a}{2}(q^m+1)]\cup (a(q^m+1),\frac{\lambda+a}{2}(q^m+1)]$. If the positive integer $\gamma$ satisfies $\frac{a}{2}(q^m+1)<\gamma\le a(q^m+1)$, then we have $\gamma>a(q^m+1)-\gamma\in C_{\gamma}$. Similarly, if positive integer $\gamma$ satisfies $\frac{\lambda+a}{2}(q^m+1)<\gamma<n$, then we have $\gamma>(\lambda+a)(q^m+1)-\gamma\in C_{\gamma}$. Both cases contradict the fact that $\gamma$ is a coset leader.

We begin by defining the following sets.
\begin{itemize}
	\item[(1)] $E_{1} = \{\gamma = aq^{m-t}-A\lambda \ | \ 1 \leq t \leq m-1, 1\le  A<\frac{a(q^{m-t}-1)}{\lambda(q^t+1)}\}$,
	\item[(2)] $E_{2} = \{\gamma = q^{m-t}(B\lambda+a)-A\lambda \ | \ 1 \leq t \leq m-1, A \ \mathrm{and} \ B \ \mathrm{satisfy} \ \mathrm{one} \ \mathrm{of} \ \mathrm{the} \ \mathrm{following} \ \mathrm{conditions}\}$,
	\begin{itemize}
		\item[(2.1)] $1\le  A<\frac{a(q^{m-t}-1)}{2\lambda}$, and $\frac{A\lambda(q^t+1)-a(q^{m-t}-1)}{\lambda(q^{m-t}-1)}<B\le \frac{a(q^m+1)+2A\lambda-2aq^{m-t}}{2\lambda q^{m-t}}$;
		\item[(2.2)] $1\le A\le \frac{a(q^{m-t}-1)}{\lambda}$, and $\frac{a(q^m+1)+A\lambda-a q^{m-t}}{\lambda q^{m-t}}<B\le \frac{(\lambda+a)(q^m+1)+2A\lambda-2a q^{m-t}}{2\lambda q^{m-t}}$;
		\item[(2.3)] $\frac{a(q^{m-t}-1)}{\lambda}<A<\frac{(\lambda+a)(q^{m-t}-1)}{2\lambda}$, and $\frac{A\lambda(q^t+1)-a(q^{m-t}-1)}{\lambda(q^{m-t}-1)}<B\le \frac{(\lambda+a)(q^m+1)+2A\lambda-2a q^{m-t}}{2\lambda q^{m-t}}$,
	\end{itemize}
	\item[(3)] $E_{3} = \{\gamma = (\lambda+a)(q^m+1)-q^{m-t}(B\lambda+a)+A\lambda \ | \ 1 \leq t \leq m-1, A \ \mathrm{and} \ B \ \mathrm{satisfy} \ \mathrm{one} \ \mathrm{of} \ \mathrm{the} \ \mathrm{following} \\ 
	\mathrm{conditions}\}$,
	\begin{itemize}
		\item[(3.1)] $0\le A<\frac{aq^{m-t}-\lambda}{\lambda}$, and $\frac{(\lambda+a)(q^m+1)+2A\lambda-2a\cdot q^{m-t}}{2\lambda q^{m-t}}<B<\frac{A\lambda+\lambda(q^m+1)-aq^{m-t}}{\lambda q^{m-t}}$;
		\item[(3.2)] $\frac{a q^{m-t}-\lambda}{\lambda}\le A<\frac{(\lambda+a)(q^{m-t}-1)}{2\lambda}$, and $
		\frac{(\lambda+a)(q^m+1)+2A\lambda-2a\cdot q^{m-t}}{2\lambda q^{m-t}}<B<\frac{(\lambda+a)(q^m+1)-a(q^{m-t}+1)-A\lambda(q^t-1)}{\lambda(q^{m-t}+1)}$.
	\end{itemize}
\end{itemize}
Finally, we set $E = E_{1} \cup E_{2} \cup E_{3}$.
	
\begin{theorem}\label{th4}
For any $0 \leq \gamma < n$, let $a$ be the integer in the range $1 \leq a \leq \lambda$ such that $\gamma \equiv a \pmod{\lambda}$. Then $\gamma$ is a coset leader if and only if both of the following conditions hold:
	\begin{itemize}
		\item[(1)] $0 \leq \gamma \leq \frac{a}{2}(q^m+1)$ or $a(q^m+1)< \gamma\le \frac{\lambda+a}{2}(q^m+1)$;
		\item[(2)] $\gamma \notin E$.
	\end{itemize}
\end{theorem}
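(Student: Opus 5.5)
Condition (1) is necessary by the claim just before the definition of $E$: if $\frac{a}{2}(q^m+1)<\gamma\le a(q^m+1)$ then Lemma~\ref{lem2}(1) gives $a(q^m+1)-\gamma<\gamma$ in $C_\gamma$. If $\frac{\lambda+a}{2}(q^m+1)<\gamma<n$, Lemma~\ref{lem2}(2) gives $(\lambda+a)(q^m+1)-\gamma<\gamma$ in $C_\gamma$. So I will assume (1) throughout and show that, under (1), $\gamma$ is a coset leader if and only if $\gamma\notin E$.

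By the discussion preceding Theorem~\ref{th12}, $\gamma$ is a leader if and only if $\overline{\gamma q^{m-t}}\ge\gamma$ for every $1\le t\le m-1$. Here $\overline{\cdot}$ is the folded representative, and the powers $q^{m-t}$, $1\le t\le m-1$, together with their reflections cover all of $C_\gamma$, since $q^m$ acts as the reflection of Lemma~\ref{lem2}. So for each fixed $t$ I will describe the set of $\gamma$ (subject to (1)) with $\overline{\gamma q^{m-t}}<\gamma$, and check that this set is exactly the $t$-slice of $E_1\cup E_2\cup E_3$.

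To do this I parametrize $\gamma q^{m-t}\bmod n$ as in Case 2 of Theorem~\ref{th6}. Write $\gamma=k\lambda+a$ and divide $k$ by $q^t$, so that
\[
\gamma q^{m-t}\equiv q^{m-t}(B\lambda+a)-A\lambda \pmod n ,
\]
with $A$ the quotient and $B$ essentially the remainder. Equivalently, $\gamma=(Aq^t+B)\lambda+a$ up to a shift of $B$. Conversely, the element $\mu=q^{m-t}(B\lambda+a)-A\lambda$ lies in $C_\gamma$. Then, following the four regions in the definition of $\overline{\gamma q^{t}}$, I split into cases.
\begin{itemize}
\item[(i)] The residue $\mu$ lies in $[0,\frac a2(q^m+1)]\cup(a(q^m+1),\frac{\lambda+a}{2}(q^m+1)]$, so $\overline{\gamma q^{m-t}}=\mu$. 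Then $\mu<\gamma$ becomes a linear inequality in $A,B$; combined with the range of $B$ for the region, this gives (2.1) and (2.2). The degenerate case $B=0$ gives $\mu=aq^{m-t}-A\lambda$, which is $E_1$, and its bound $A<\frac{a(q^{m-t}-1)}{\lambda(q^t+1)}$ comes from solving $aq^{m-t}-A\lambda<(Aq^t)\lambda+a$.
\item[(ii)] The residue $\mu$ lies in $(\frac a2(q^m+1),a(q^m+1))$, so $\overline{\gamma q^{m-t}}=a(q^m+1)-\mu$.
\item[(iii)] The residue $\mu$ lies in $(\frac{\lambda+a}2(q^m+1),n)$, so $\overline{\gamma q^{m-t}}=(\lambda+a)(q^m+1)-\mu$. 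This form is exactly $E_3$, and solving $(\lambda+a)(q^m+1)-\mu<\gamma$ together with the region bounds should give (3.1) and (3.2); the split between them is at $A=\frac{aq^{m-t}-\lambda}{\lambda}$, where the binding upper bound on $B$ switches.
\end{itemize}
It is cleaner to phrase everything as $\gamma$ itself being the small element produced: $\gamma\in E$ means $\gamma$ equals some folded image of a larger element of its coset. Under this reading, $\gamma\in E$ is equivalent to the existence of $\gamma'\in C_\gamma$ with $\gamma'<\gamma$, and the conditions in $E_2,E_3$ encode precisely that $\gamma'$ lies in the relevant region and is smaller. Case (ii) then gets absorbed into (2.3) by symmetry.

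The main obstacle is the bookkeeping. For each region I must translate "$\mu$ lies in the interval and $\overline{\mu}<\gamma$" into the exact strict and non-strict integer bounds on $A$ and $B$ stated in (2.1)–(3.2). I also need the ranges for different $t$ and different regions to be exhaustive (every non-leader is caught) and consistent (each membership in $E$ really produces a smaller coset element). I would handle this as in Theorem~\ref{th6}: rewrite each interval condition as an inequality linear in $B$ with $A$ fixed, then clear the fractional parts using $\lambda\mid q-1$ so that $q^{m-t}\equiv1\pmod\lambda$. I expect the boundary cases (equalities at $\frac a2(q^m+1)$ and the case $A=\frac{a(q^{m-t}-1)}{\lambda}$ separating (2.2) and (2.3)) to be where errors are most likely, and I would check them separately.
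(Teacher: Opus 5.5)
There is a genuine gap. Your treatment of condition (1) is fine. The core step, however, reverses the roles of the two coset elements relative to how $E$ is built, so the claimed slice-by-slice match with (2.1)--(3.2) cannot go through.

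In the paper, $t$ and $(A,B)$ parametrize the \emph{smaller} element $\gamma_0=(Aq^t+B)\lambda+a$. One forms $\mu\equiv\gamma_0q^{m-t}\equiv q^{m-t}(B\lambda+a)-A\lambda \pmod n$. The conditions defining $E_1,E_2,E_3$ say exactly that $\mu$ lies in the indicated region and that its folded value exceeds $\gamma_0$. That folded value is $\mu$ itself for $E_1,E_2$, and $(\lambda+a)(q^m+1)-\mu$ for $E_3$. So $E$ is a set of \emph{images}, and membership in $E$ certifies a smaller element of the coset.

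You can see this already in $E_1$:
\begin{itemize}
\item $aq^{m-t}-A\lambda>A\lambda q^t+a$ is equivalent to $A<\frac{a(q^{m-t}-1)}{\lambda(q^t+1)}$;
\item the inequality you propose to solve, $aq^{m-t}-A\lambda<A\lambda q^t+a$, gives $A>\frac{a(q^{m-t}-1)}{\lambda(q^t+1)}$.
\end{itemize}
Likewise, the lower bound on $B$ in (2.1) and (2.3) solves $q^{m-t}(B\lambda+a)-A\lambda>(Aq^t+B)\lambda+a$. The upper bound on $B$ in (3.2) solves $(\lambda+a)(q^m+1)-\mu>(Aq^t+B)\lambda+a$.

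If you fix the candidate $\gamma=(Aq^t+B)\lambda+a$ and solve $\overline{\mu}<\gamma$, as you plan, every one of these inequalities flips. What you then describe is a set of integers of the form $(Aq^t+B)\lambda+a$. These are not of the forms $aq^{m-t}-A\lambda$, $q^{m-t}(B\lambda+a)-A\lambda$, $(\lambda+a)(q^m+1)-q^{m-t}(B\lambda+a)+A\lambda$ that make up $E$. Your sentence that ``$\gamma\in E$ means $\gamma$ equals some folded image of a larger element'' shows the same confusion. Read that way, membership in $E$ would only produce a \emph{larger} element of $C_\gamma$, which says nothing about $\gamma$ failing to be a leader.

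There are two ways to repair this.
\begin{itemize}
\item Adopt the paper's viewpoint. Run the case analysis over the would-be smaller element and record the images that exceed it. For the converse, note that a non-leader $\gamma$ satisfying (1) equals $\overline{\gamma_0q^{m-t}}$ for $\gamma_0=\mathrm{Ld}(\gamma)$ and some $1\le t\le m-1$. Hence $\gamma$ is produced when the analysis is applied to $\gamma_0$.
\item Keep your criterion but add the missing inversion. If $\gamma_1=\overline{\gamma q^{m-t}}<\gamma$, then $\gamma=\overline{\gamma_1 q^{t}}$, using (1). So $\gamma$ must be matched against the $(m-t)$-slice of $E$, after expanding $\gamma_1=(A'q^{m-t}+B')\lambda+a$. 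Your $t$-slices and those of $E$ do not correspond.
\end{itemize}

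Separately, your claim that case (ii) is ``absorbed into (2.3) by symmetry'' is unsupported. In the paper, both (2.2) and (2.3) come from the unfolded region $a(q^m+1)<\mu\le\frac{\lambda+a}{2}(q^m+1)$. They are split at $A=\frac{a(q^{m-t}-1)}{\lambda}$, where the binding lower bound on $B$ changes. Case (ii) instead produces elements $a(q^m+1)-q^{m-t}(B\lambda+a)+A\lambda$, which you would have to place in $E$ by a separate argument. The paper's own proof shows these are non-leaders but does not exhibit them in $E$ either.
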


\begin{proof}
For any positive integer $\gamma$ satisfying $0\le \gamma<n$, suppose $\gamma\equiv a\pmod{\lambda}$ such that $1\le a\le \lambda$. For any integer $t$ with $1\le t\le m-1$, we express $\gamma$ in the form $\gamma=(Aq^t+B)\lambda+a$, where $A,B$ are positive integers satisfies $0\le B\le q^t-1$ for $0\le A\le q^{m-t}-1$ and $B=0$ for $A=q^{m-t}$. We split the proof into two cases.

\begin{itemize}
\item[$\mathbf{Case~1: }$] Let $B=0$. We  divide this case into the following subcases for discussion.
\begin{itemize}
\item[$\mathbf{Case~1.1: }$] When $0\le A\le \frac{aq^{m-t}}{\lambda}$. We derive 
\begin{align*}
\gamma q^{m-t}\pmod{n} =aq^{m-t}-A\lambda\le \frac{a}{2}(q^m+1).
\end{align*}
It follows that
\begin{align*}
\overline{\gamma q^{m-t}}=\gamma q^{m-t}\pmod{n}=aq^{m-t}-A\lambda.
\end{align*}
In addition, from the equivalence relation
\begin{align*}
aq^{m-t}-A\lambda>A\lambda q^t+a\Leftrightarrow A<\frac{a(q^{m-t}-1)}{\lambda(q^t+1)},
\end{align*}
we can conclude that for any integers $1\le a\le \lambda$, $1\le t\le m-1$ and $0\le A<\frac{a(q^{m-t}-1)}{\lambda(q^t+1)}$, the integers $aq^{m-t}-A\lambda$ are not coset leaders.

\item[$\mathbf{Case~1.2: }$] When $\frac{aq^{m-t}}{\lambda}<A\le q^{m-t}$. We derive the congruence relation
\begin{align*}
\gamma q^{m-t}\pmod{n}
=aq^{m-t}-A\lambda+\lambda(q^m+1)>\frac{\lambda+a}{2}(q^m+1).
\end{align*}
Accordingly, we obtain 
\begin{align*}
\overline{\gamma q^{m-t}}&=(\lambda+a)(q^m+1)-(aq^{m-t}-A\lambda+\lambda(q^m+1))\\
&=a(q^m+1)-aq^{m-t}+A\lambda.
\end{align*}
From the equivalence relation
$$a(q^m+1)-aq^{m-t}+A\lambda>A\lambda q^{t}+a\Leftrightarrow A<\frac{aq^{m-t}}{\lambda},$$
we note that this result contradicts the range condition $\frac{aq^{m-t}}{\lambda}<A$ of the current subcase. Thus, for any integers $1\le a\le \lambda-1$, $1\le t\le m-1$ and $\frac{aq^{m-t}}{\lambda}<A\le \frac{(\lambda+a)q^{m}+(\lambda-a)}{2\lambda q^t}$, we cannot draw the conclusion that $a(q^m+1)-aq^{m-t}+A\lambda$ are not coset leaders.
\end{itemize}

\item[$\mathbf{Case~2: }$] Let $B\ne 0$. For any integer $t$ satisfying $1\le t\le m-1$, we derive
\begin{align*}
\gamma q^{m-t}\pmod{n}
=q^{m-t}(B\lambda+a)-A\lambda.
\end{align*}
We further decompose this case into the following four subcases for detailed analysis.

\begin{itemize}
\item[$\mathbf{Case~2.1:}$] Suppose that $0\le\gamma q^{m-t}\pmod{n}\le\frac{a}{2}(q^m+1)$. This condition is equivalent to the inequality
\begin{align*}
1\le B\le \frac{a(q^m+1)+2A\lambda-2aq^{m-t}}{2\lambda q^{m-t}}.
\end{align*}
In this scenario, we have
\begin{align*}
\overline{\gamma q^{m-t}}=q^{m-t}(B\lambda+a)-A\lambda.
\end{align*}
Moreover, the inequality $\overline{\gamma q^{m-t}}>\gamma$ holds if and only if 
\begin{align*}
 B>\frac{A\lambda(q^t+1)-a(q^{m-t}-1)}{\lambda(q^{m-t}-1)}.
\end{align*}
Assume that the upper bound of $B$ is greater than its lower bound derived above, i.e.,
\begin{align*}
\frac{a(q^m+1)+2A\lambda-2aq^{m-t}}{2\lambda q^{m-t}}>\frac{A\lambda(q^t+1)-a(q^{m-t}-1)}{\lambda(q^{m-t}-1)}.
\end{align*}
This assumption can be simplified to the constraint on $A$:
\begin{align*}
A<\frac{a(q^{m-t}-1)}{2\lambda}.
\end{align*}
Furthermore, we get
\begin{align*}
\frac{a(q^m+1)+2A\lambda-2a q^{m-t}}{2\lambda q^{m-t}}<\frac{a(q^t-1)}{2\lambda}<q^t.
\end{align*}
Accordingly, for any integers $A$ and $B$ satisfying $0\le  A<\frac{a(q^{m-t}-1)}{2\lambda}$ and $\frac{A\lambda(q^t+1)-a(q^{m-t}-1)}{\lambda(q^{m-t}-1)}<B\le \frac{a(q^m+1)+2A\lambda-2aq^{m-t}}{2\lambda q^{m-t}}$, the integers $q^{m-t}(B\lambda+a)-A\lambda$ are not coset leaders.

\item[$\mathbf{Case~2.2: }$] Suppose that $\frac{a}{2}(q^m+1)< \gamma q^{m-t}\pmod{n}< a(q^m+1)$. This condition is equvalent to the following inequality for $B$:
\begin{align*}
\frac{a(q^m+1)+2A\lambda-2aq^{m-t}}{2\lambda q^{m-t}}< B<\frac{a(q^m+1)+A\lambda-a q^{m-t}}{\lambda q^{m-t}}.
\end{align*}
And in this subcase, we have
$$\overline{\gamma q^{m-t}}=a(q^m+1)-q^{m-t}(B\lambda+a)+A\lambda.$$

Moreover, the inequality $\overline{\gamma q^{m-t}}>\gamma$ holds if and only if
$$B<\frac{(q^t-1)(aq^{m-t}-A\lambda)}{\lambda(q^{m-t}+1)}.$$

Notice that
$$\frac{(q^t-1)(aq^{m-t}-A\lambda)}{\lambda(q^{m-t}+1)}<\frac{a(q^m+1)+A\lambda-a q^{m-t}}{\lambda q^{m-t}}$$
holds for all positive integer $A$, and 
$$\frac{(q^t-1)(aq^{m-t}-A\lambda)}{\lambda(q^{m-t}+1)}>\frac{a(q^m+1)+2A\lambda-2aq^{m-t}}{2\lambda q^{m-t}}\Leftrightarrow A<\frac{a(q^{m-t}-1)}{2\lambda}.$$

Additionally, 
\begin{align*}
\frac{(q^t-1)(a q^{m-t}-A\lambda)}{\lambda(q^{m-t}+1)}&\le\frac{a (q^t-1)}{\lambda} \frac{q^{m-t}}{q^{m-t}+1}< q^t-1.
\end{align*}

Therefore, for any integers $A, B$ satisfying $0\le A<\frac{a(q^{m-t}-1)}{2\lambda}$ and $\frac{a(q^m+1)+2A\lambda-2a q^{m-t}}{2\lambda q^{m-t}}<B<\frac{(q^t-1)(a q^{m-t}-A\lambda)}{\lambda(q^{m-t}+1)}$, the integers $a(q^m+1)-q^{m-t}(B\lambda+a)+A\lambda$ are not coset leaders.

\item[$\mathbf{Case~2.3: }$] Suppose that $a(q^m+1)<\gamma q^{m-t}\pmod{n}\le  \frac{\lambda+a}{2}(q^m+1)$($1\le a\le \lambda-1$). This condition is equvalent to the following inequality for $B$:
\begin{align*}
\frac{a(q^m+1)+A\lambda-a q^{m-t}}{\lambda q^{m-t}}<B\le \frac{(\lambda+a)(q^m+1)+2A\lambda-2a q^{m-t}}{2\lambda q^{m-t}}.
\end{align*}
And in this subcase, we have
\begin{align*}
\overline{\gamma q^{m-t}}=q^{m-t}(B\lambda+a)-A\lambda.
\end{align*}

We know from Case 2.1 that
\begin{align*}
\overline{\gamma q^{m-t}}>\gamma
\Leftrightarrow B>\frac{A\lambda(q^t+1)-a(q^{m-t}-1)}{\lambda(q^{m-t}-1)}.
\end{align*}
Let
\begin{align*}
\frac{A\lambda(q^t+1)-a(q^{m-t}-1)}{\lambda(q^{m-t}-1)}\le \frac{a(q^m+1)+A\lambda-a q^{m-t}}{\lambda q^{m-t}},
\end{align*}
we have
\begin{align*}
A\le \frac{a (q^{m-t}-1)}{\lambda}.
\end{align*}
We know
\begin{align*}
\frac{a(q^m+1)+A\lambda-a q^{m-t}}{\lambda q^{m-t}}<\frac{(\lambda+a)(q^m+1)+2A\lambda-2a q^{m-t}}{2\lambda q^{m-t}}
\end{align*}
at any time. Then we consider the magnitude relationship between $\frac{(\lambda+a)(q^m+1)+2A\lambda-2a q^{m-t}}{2\lambda q^{m-t}}$ and $\frac{A\lambda(q^t+1)-a(q^{m-t}-1)}{\lambda(q^{m-t}-1)}$. Clearly, we have 
\begin{align*}
\frac{A\lambda(q^t+1)-a(q^{m-t}-1)}{\lambda(q^{m-t}-1)}<\frac{(\lambda+a)(q^m+1)+2A\lambda-2a q^{m-t}}{2\lambda q^{m-t}}
\Leftrightarrow A<\frac{(\lambda+a)(q^{m-t}-1)}{2\lambda},
\end{align*}
and  
\begin{align*}
\frac{a(q^{m-t}-1)}{\lambda}<\frac{(\lambda+a)(q^{m-t}-1)}{2\lambda}.
\end{align*}

Therefore, for any $1\le a\le \lambda-1$, and any positive integers satisfying  $0\le A\le \frac{a(q^{m-t}-1)}{\lambda}$, $\frac{a(q^m+1)+A\lambda-a q^{m-t}}{\lambda q^{m-t}}<B\le \frac{(\lambda+a)(q^m+1)+2A\lambda-2a q^{m-t}}{2\lambda q^{m-t}}$ or $\frac{a(q^{m-t}-1)}{\lambda}<A<\frac{(\lambda+a)(q^{m-t}-1)}{2\lambda}$, $\frac{A\lambda(q^t+1)-a(q^{m-t}-1)}{\lambda(q^{m-t}-1)}<B\le \frac{(\lambda+a)(q^m+1)+2A\lambda-2a q^{m-t}}{2\lambda q^{m-t}}$, the integers $q^{m-t}(B\lambda+a)-A\lambda$ are not coset leaders.

\item[$\mathbf{Case~2.4: }$] Suppose that $\frac{\lambda+a}{2}(q^m+1)<\gamma q^{m-t}< \lambda(q^m+1), (1\le a\le \lambda-1)$. This condition is equvalent to the following inequality for $B$:
$$\frac{(\lambda+a)(q^m+1)+2A\lambda-2a q^{m-t}}{2\lambda q^{m-t}}<B<\frac{A\lambda+\lambda(q^m+1)-aq^{m-t}}{\lambda q^{m-t}}.$$
And in this subcase, we have 

\begin{align*}
\overline{\gamma q^{m-t}}=(\lambda+a)(q^m+1)-q^{m-t}(B\lambda+a)+A\lambda.
\end{align*}
Moreover, the inequality $\overline{\gamma q^{m-t}}>\gamma$ holds if and only if
$$B<\frac{(\lambda+a)(q^m+1)-a(q^{m-t}+1)-A\lambda(q^t-1)}{\lambda(q^{m-t}+1)}.
$$
If we let 
\begin{align*}
\frac{(\lambda+a)(q^m+1)-a(q^{m-t}+1)-A\lambda(q^t-1)}{\lambda(q^{m-t}+1)}\le \frac{A\lambda+\lambda(q^m+1)-aq^{m-t}}{\lambda q^{m-t}},
\end{align*}
we have 
\begin{align*}
A\ge \frac{a q^{m-t}-\lambda}{\lambda}.
\end{align*}
We know
\begin{align*}
\frac{(\lambda+a)(q^m+1)+2A\lambda-2a q^{m-t}}{2\lambda q^{m-t}}<\frac{A\lambda+\lambda(q^m+1)-aq^{m-t}}{\lambda q^{m-t}}
\end{align*}
at any time, then we consider the magnitude relationship between $\frac{(\lambda+a)(q^m+1)-a(q^{m-t}+1)-A\lambda(q^t-1)}{\lambda(q^{m-t}+1)}$ and $\frac{(\lambda+a)(q^m+1)+2A\lambda-2a q^{m-t}}{2\lambda q^{m-t}}$. Obviously, we have 
\begin{align*}
&\frac{(\lambda+a)(q^m+1)-a(q^{m-t}+1)-A\lambda(q^t-1)}{\lambda(q^{m-t}+1)}>\frac{(\lambda+a)(q^m+1)+2A\lambda-2a q^{m-t}}{2\lambda q^{m-t}}\\
\Leftrightarrow& A<\frac{(\lambda+a)(q^{m-t}-1)}{2\lambda}\le q^{m-t}-1
\end{align*}
and
\begin{align*}
\frac{aq^{m-t}-\lambda}{\lambda}<\frac{(\lambda+a)(q^{m-t}-1)}{2\lambda}.
\end{align*}
Therefore, for any $1\le a\le \lambda-1$, and any positive integers satisfying $0\le A<\frac{aq^{m-t}-\lambda}{\lambda}$, $\frac{(\lambda+a)(q^m+1)+2A\lambda-2a\cdot q^{m-t}}{2\lambda q^{m-t}}<B<\frac{A\lambda+\lambda(q^m+1)-aq^{m-t}}{\lambda q^{m-t}}$ or $\frac{a q^{m-t}-\lambda}{\lambda}\le A<\frac{(\lambda+a)(q^{m-t}-1)}{2\lambda}$, $
\frac{(\lambda+a)(q^m+1)+2A\lambda-2a\cdot q^{m-t}}{2\lambda q^{m-t}}<B<\frac{(\lambda+a)(q^m+1)-a(q^{m-t}+1)-A\lambda(q^t-1)}{\lambda(q^{m-t}+1)}$, the integers $(\lambda+a)(q^m+1)-q^{m-t}(B\lambda+a)+A\lambda$ are not coset leaders.
\end{itemize}
\end{itemize}
Summarizing the discussion above, the desired theorem then follows.
\end{proof}

\begin{example}
Let $q=5$ and $m=2$, let $0\le \gamma< 103$, from Theorem \ref{th4}, we have $\gamma$ is a coset leader if and only if 
\begin{align*}
\gamma\in \{0,1,2,3,4,6,7,8,9,13,14,16,26,27,29,39\}.
\end{align*}
The result is verified by Magma program.
\end{example}

For $\lambda=1$, partial results regarding the largest coset leaders can be found in \cite{liuy, yanhaode, zhuhongwei}. In the following, let $\delta_1$ and $\delta_2$ denote the largest and second largest $q$-cyclotomic coset leaders modulo $n$, respectively. We proceed to determine the values of them for $\lambda\mid q-1$.

\begin{lemma}\label{lem5}
Let $n=\lambda(q^m+1)$, where $\lambda\mid q-1$. Then the largest coset leader of $q$-cyclotomic coset modulo $n$ satisfies $\delta_1\equiv \lambda-1\pmod{\lambda}$. Specially, $(\lambda-1)(q^m+1)+1$ is a coset leader with $\mid C_{(\lambda-1)(q^m+1)+1}\mid =2m$.
\end{lemma}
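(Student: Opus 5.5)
Two things need proof: that $\gamma_0=(\lambda-1)(q^m+1)+1$ is a coset leader with $|C_{\gamma_0}|=2m$, and that every coset leader exceeding all leaders of the other residue classes satisfies $\delta_1\equiv\lambda-1\pmod{\lambda}$.

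\emph{Residue class of each coset.} Since $\lambda\mid q-1$, we have $\gamma q\equiv\gamma\pmod{\lambda}$, so the residue of $\gamma$ modulo $\lambda$ is constant on $C_\gamma$. Now $\gamma_0=(\lambda-1)q^m+\lambda\equiv(\lambda-1)+\lambda\equiv\lambda-1\pmod\lambda$, using $q^m\equiv1$. For $\lambda\ge2$ its residue is $a=\lambda-1$; the case $\lambda=1$ is degenerate ($\gamma_0=1$, $a=\lambda$) and is handled by the same argument.

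\emph{Upper bound from Theorem \ref{th4}.} By Theorem \ref{th4}(1), every leader with residue $a$ lies in $[0,\frac a2(q^m+1)]\cup(a(q^m+1),\frac{\lambda+a}2(q^m+1)]$. Its maximum possible value is therefore at most $\frac{\lambda+a}{2}(q^m+1)$ for $a\le\lambda-1$, and at most $\frac{\lambda}{2}(q^m+1)$ for $a=\lambda$. This bound increases with $a$ on $1\le a\le\lambda-1$, and is largest at $a=\lambda-1$, where it equals $\frac{2\lambda-1}{2}(q^m+1)$.

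\emph{Leader property of $\gamma_0$.} I would show that $\gamma_0$ itself is a leader. Since $\gamma_0$ lies in the interval $(a(q^m+1),\frac{\lambda+a}{2}(q^m+1)]$ for $a=\lambda-1$, it is enough to check $\overline{\gamma_0q^{m-t}}>\gamma_0$ for $1\le t\le m-1$.

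Write $\gamma_0=(\lambda-1)(q^m+1)+1$. Then
$$\gamma_0q^{j}\equiv (\lambda-1)(q^m+1)q^j+q^j\pmod n.$$
Reducing modulo $n=\lambda(q^m+1)$, the factor $(\lambda-1)(q^m+1)q^j$ becomes $((\lambda-1)q^j \bmod \lambda)(q^m+1)$, which is $(\lambda-1)(q^m+1)$ because $q^j\equiv1\pmod\lambda$. Hence
$$\gamma_0q^j\equiv(\lambda-1)(q^m+1)+q^j\pmod n\qquad(0\le j\le m).$$
For $1\le j\le m-1$ this value lies strictly between $(\lambda-1)(q^m+1)$ and $\frac{2\lambda-1}{2}(q^m+1)$. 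So $\overline{\gamma_0q^j}=(\lambda-1)(q^m+1)+q^j>\gamma_0$.

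For $j=m$, Lemma \ref{lem2}(2) gives the reflection $(2\lambda-1)(q^m+1)-\gamma_0=(q^m+1)q^0\cdot\ldots$, i.e. the element $(\lambda-1)(q^m+1)+q^m$. The remaining elements, for $m<j<2m$, are $(2\lambda-1)(q^m+1)-\big((\lambda-1)(q^m+1)+q^{j-m}\big)=\lambda(q^m+1)-q^{j-m}+\ldots$; more concretely, they have the form $(\lambda-1)(q^m+1)+q^m+1-q^{j-m}$, all of which exceed $\gamma_0$.

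These $2m$ values are pairwise distinct, since $q^j$ for $0\le j\le m$ and $q^m+1-q^i$ for $1\le i\le m-1$ are distinct residues. Therefore $|C_{\gamma_0}|=2m$, and $\gamma_0$ is its smallest element.

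\emph{Congruence $\delta_1\equiv\lambda-1\pmod\lambda$.} This is the main obstacle. Every leader with residue $a\le\lambda-2$ is at most $\frac{2\lambda-2}{2}(q^m+1)=(\lambda-1)(q^m+1)<\gamma_0$. Every leader with residue $a=\lambda$ is at most $\frac{\lambda}{2}(q^m+1)\le(\lambda-1)(q^m+1)$ when $\lambda\ge2$. Hence $\delta_1\ge\gamma_0$ forces $\delta_1\equiv\lambda-1\pmod\lambda$. The delicate points are keeping this comparison strict and treating the boundary case $\lambda=2$ separately; for $\lambda=1$ the claim is vacuous.
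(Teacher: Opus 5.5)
Your proof is correct and follows the paper's route. The residue modulo $\lambda$ is constant on cosets, the range condition from Lemma \ref{lem2} (Theorem \ref{th4}(1)) bounds every leader with residue $a\neq\lambda-1$ by $(\lambda-1)(q^m+1)<\gamma_0$, and the congruence $\gamma_0q^j\equiv(\lambda-1)(q^m+1)+q^j \pmod{n}$ shows that $\gamma_0$ is a leader.

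The only real difference is the size: you list the full orbit to get $|C_{\gamma_0}|=2m$, whereas the paper uses $\gcd(\gamma_0,n)=1$ and $\mathrm{ord}_n(q)=2m$. Distinctness of your $2m$ elements only gives $|C_{\gamma_0}|\ge 2m$, so state $n\mid q^{2m}-1$ for the upper bound. The garbled $j=m$ line can be dropped, since $j=m$ is already covered by your formula for $0\le j\le m$. The closing ``delicate points'' remark can also go: your strict inequality already covers $\lambda=2$, because $\frac{\lambda}{2}(q^m+1)\le(\lambda-1)(q^m+1)<\gamma_0$.
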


\begin{proof}

As the conclusion holds trivially when $\lambda=1$, we focus on the case $\lambda>1$ in the following.

For any positive integer $0\le \gamma<n$, set $\gamma\equiv a\pmod{\lambda}$, where $1\le a\le \lambda$. If $\gamma$ is a coset leader and $a\ne \lambda-1$, then $\gamma$ satisfies $0\le \gamma\le (\lambda-1)(q^m+1)$. However, we know that there may exist coset leaders $\gamma$ with $\gamma\equiv \lambda-1\pmod{\lambda}$ such that $(\lambda-1)(q^m+1)<\gamma\le \frac{2\lambda-1}{2}(q^m+1)$. Therefore, if we find one coset leader $\gamma$ satisfying $(\lambda-1)(q^m+1)<\gamma\le \frac{2\lambda-1}{2}(q^m+1)$, we can conclude that $\delta_1$ satisfies $\delta_1\equiv \lambda-1\pmod{\lambda}$. Below, we will show that $(\lambda-1)(q^m+1)+1$ is a coset leader. 

For any integer $t$ with $1\le t\le m-1$, we have 
\begin{align*}
&((\lambda-1)(q^m+1)+1)q^t\\
=~&(\frac{q^t-1}{\lambda}\cdot \lambda+1)\cdot (\lambda-1)(q^m+1)+q^t\\
=~&(q^t-\frac{q^t-1}{\lambda})\cdot \lambda(q^m+1)-(q^m+1)+q^t\\
\equiv ~&(\lambda-1)(q^m+1)+q^t\pmod{n}.
\end{align*}
And for all $1\le t\le m-1$, it follows that 
$$(\lambda-1)(q^m+1)+1<(\lambda-1)(q^m+1)+q^t<\frac{2\lambda-1}{2}(q^m+1),$$
this implies that
$$\overline{((\lambda-1)(q^m+1)+1)q^t}>(\lambda-1)(q^m+1)+1,$$
and hence $(\lambda-1)(q^m+1)+1$ is a coset leader. Moreover, since 
$$\mathrm{ord}_{\frac{\lambda(q^m+1)}{\mathrm{gcd}(\lambda(q^m+1),(\lambda-1)(q^m+1)+1)}}(q)=\mathrm{ord}_{\lambda(q^m+1)}(q)=2m,$$
we conclude that $\mid C_{(\lambda-1)(q^m+1)+1}\mid =2m$. 

Therefore, we finish the proof of this lemme.
\end{proof}

\begin{theorem}\label{th9}
Let $n=\lambda(q^m+1)$, where $m\ge 1$ is an odd positive integer. Then 
\begin{equation*}
\delta_1=	\left\{
	\begin{array}{lcl}
	\frac{q^m+1}{q+1}(\frac{2\lambda q+\lambda-q-1}{2}), \ \mathrm{if}~ \frac{q-1}{\lambda}~\mathrm{is}~odd;\\
\frac{2\lambda-1}{2}(q^m+1), \ \mathrm{if}~\frac{q-1}{\lambda}~\mathrm{is}~even.\\
	\end{array} \right.
\end{equation*}
and 
\begin{equation*}
	\mid C_{\delta_1}\mid=	\left\{
	\begin{array}{lcl}
	2, \ \mathrm{if}~ \frac{q-1}{\lambda}~\mathrm{is~odd};\\
		1, \ \mathrm{if}~\frac{q-1}{\lambda}~\mathrm{is}~even.
	\end{array} \right.
\end{equation*}
\end{theorem}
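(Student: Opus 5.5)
The plan is to use Lemma \ref{lem5} to locate $\delta_1$, then argue separately according to the parity of $\frac{q-1}{\lambda}$. Write $q-1=\lambda s$.

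By Lemma \ref{lem5}, $\delta_1\equiv \lambda-1\pmod{\lambda}$. By the range claim proved just before Theorem \ref{th4} (applied with $a=\lambda-1$), $\delta_1$ then lies in the interval
$$(\lambda-1)(q^m+1)<\delta_1\le \tfrac{2\lambda-1}{2}(q^m+1).$$
The key structural observation comes from the computation in the proof of Lemma \ref{lem5}. Since $\lambda\mid q^t-1$, we have $(\lambda-1)(q^m+1)q^t\equiv(\lambda-1)(q^m+1)\pmod n$. Hence for $\gamma=(\lambda-1)(q^m+1)+x$ with $1\le x\le \frac{q^m+1}{2}$,
$$\gamma q^t\equiv(\lambda-1)(q^m+1)+xq^t\pmod n .$$
So the action of $q$ on this top block is governed by $x\mapsto xq^t$ modulo $q^m+1$, with possible shifts by $\pm\lambda(q^m+1)$ that I will track explicitly. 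This reduces the problem to an antiprimitive ($\lambda=1$) type problem on the offset $x$.

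\emph{Case $s$ even.} Put $\gamma_0=\frac{2\lambda-1}{2}(q^m+1)$. If $q$ is odd, $q^m+1$ is even and $\gamma_0$ is an integer. If $q$ is even, then $s=\frac{q-1}{\lambda}$ is odd, so this case forces $q$ odd. I will check that
$$\gamma_0(q-1)=(2\lambda-1)\frac{s}{2}\cdot n\equiv 0\pmod n,$$
so $C_{\gamma_0}=\{\gamma_0\}$. Next I check that $\gamma_0\equiv\lambda-1\pmod\lambda$, by writing $\frac{q^m+1}{2}\equiv 1\pmod\lambda$, which uses $s$ even. Since $\gamma_0$ equals the upper end of the admissible interval, it is the largest coset leader, and $|C_{\delta_1}|=1$.

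\emph{Case $s$ odd.} There are three steps.

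First, $\gamma_0$ is not a leader. If $q$ is even it is not an integer. If $q$ is odd, then $\gamma_0 q\equiv\gamma_0+\frac n2$, so $\gamma_0$ is paired with the smaller element $\gamma_0-\frac n2$.

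Second, let $\delta=\frac{q^m+1}{q+1}\cdot\frac{2\lambda q+\lambda-q-1}{2}$; this is an integer because $m$ is odd. I verify $\delta(q^2-1)\equiv 0\pmod n$. This amounts to $s\bigl(\lambda(2q+1)-(q+1)\bigr)$ being even. Using $q+1=\lambda s+2$, the quantity $\lambda(2q+1)-(q+1)$ equals $\lambda(2\lambda s+3-s)-2$, which is even when $s$ is odd. So $|C_\delta|\le 2$, and $|C_\delta|=2$ because $\delta q\not\equiv\delta$. I then compute $\delta q\bmod n$ explicitly and use Lemma \ref{lem2} to identify it as $(2\lambda-1)(q^m+1)-\delta$. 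I also confirm $\delta\equiv\lambda-1\pmod\lambda$ and $\delta\le\frac{2\lambda-1}{2}(q^m+1)$. Together these show that $\delta$ is the smaller element of its coset, hence a coset leader.

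Third, and this is the main obstacle, I must show that no $\gamma\equiv\lambda-1\pmod\lambda$ with $\delta<\gamma\le\frac{2\lambda-1}{2}(q^m+1)$ is a coset leader. By Theorem \ref{th4} it suffices to place every such $\gamma$ in $E$, and in practice in $E_2$ (case (2.3)) or in $E_3$. I would attempt this directly, via the offset reduction above, rather than by unpacking the inequalities defining $E$. Write $x=\gamma-(\lambda-1)(q^m+1)$ in base $q$. The target $\delta$ corresponds to an offset whose base-$q$ digits alternate, reflecting $\frac{q^m+1}{q+1}=q^{m-1}-q^{m-2}+\cdots+1$. For $x$ beyond this offset, I choose $t$ equal to the position of the first digit at which $x$ exceeds the alternating pattern, typically $t=1$ or $t=2$. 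With this choice, $\overline{\gamma q^{t}}$ is computed by the reflection in Lemma \ref{lem2}, and I show that its offset is strictly smaller than $x$. This mirrors the $\lambda=1$ argument for the largest antiprimitive coset leader in \cite{zhuhongwei,yanhaode}. The delicate point is the bookkeeping of the extra $\lambda(q^m+1)$ wrap-arounds and of the role of $\lambda$ in the digit $\frac{2\lambda q+\lambda-q-1}{2}$. I would handle this by splitting according to whether $\gamma q^t\bmod n$ falls in the band $\bigl(\frac{2\lambda-1}{2}(q^m+1),n\bigr)$ or below it.
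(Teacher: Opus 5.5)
Your even case and the first two steps of the odd case are correct and agree with the paper. The paper simply asserts that $C_{\frac{2\lambda-1}{2}(q^m+1)}$ is a singleton and that $C_\delta=\{\delta,\delta'\}$ with $\delta'=(2\lambda-1)(q^m+1)-\delta$; your parity checks fill in details it omits.

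The gap is the third step, which is the real content of the odd case. You never prove that no $\gamma\equiv\lambda-1\pmod{\lambda}$ in $(\delta,\frac{2\lambda-1}{2}(q^m+1)]$ is a leader. You only describe a digit-comparison heuristic in which $t$ is never pinned down, and the natural reading can fail. Take ``typically $t=1$ or $t=2$'' to mean positions counted from the top, and let $q=4$, $\lambda=1$, $m=3$. Then $n=65$ and $\delta=26=(122)_4$, and $\gamma=27=(123)_4$ first exceeds the pattern in the units digit, which gives $t=m$. But for every $\gamma$ in the top block, $\gamma q^m\equiv(2\lambda-1)(q^m+1)-\gamma$. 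So $t=m$ only returns the reflection of $\gamma$ itself: $27\cdot 64\equiv 38=65-27$.

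Your offset reduction is also misstated. The offset evolves as $x\mapsto xq^t\bmod n$, not modulo $q^m+1$. An orbit element leaves the top block, and so lies below $\gamma$, exactly when its offset is at least $q^m+1$. The relevant shifts are multiples of $q^m+1$; shifts by $\lambda(q^m+1)=n$ are invisible. Reducing modulo $q^m+1$ discards exactly this wrap-around, which is one of the two mechanisms that disqualify candidates.

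No digit analysis is needed, because $t=1$ works for every candidate.
\begin{itemize}
\item Since $\frac{2\lambda-1}{2}(q^m+1)-\delta=\frac{\lambda}{2}\cdot\frac{q^m+1}{q+1}$, every candidate has the form $\gamma=\delta+k\lambda$ with $1\le k\le\frac{q^m+1}{2(q+1)}$. Then $\gamma q\equiv\delta'+k\lambda q\pmod n$.
\item If $\delta'+k\lambda q<n$, this is the residue, and it exceeds $\delta'-k\lambda=(2\lambda-1)(q^m+1)-\gamma$. Lemma \ref{lem2} then puts $(2\lambda-1)(q^m+1)-(\delta'+k\lambda q)<\gamma$ into $C_\gamma$.
\item Otherwise the residue is $\delta'+k\lambda q-n$; there is only one wrap, since $k\lambda q<n$. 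This residue is less than $\gamma$ if and only if $k\lambda(q-1)<n+\delta-\delta'=\frac{\lambda q(q^m+1)}{q+1}$, which holds because $k<\frac{q^m+1}{q+1}$.
\end{itemize}
This is the paper's argument, and it closes your third step.
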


\begin{proof}
According to Lemma \ref{lem5}, we have the largest coset leader $\delta_1$ satisfies $\delta_1\equiv \lambda-1\pmod{\lambda}$, and reviewing 
that $(\lambda-1)(q^m+1)<\delta_1\le \frac{2\lambda-1}{2}(q^m+1)$. We now prove the result by considering the following two cases.

When $\frac{q-1}{\lambda}$ is even, since 
$$C_{\frac{2\lambda-1}{2}(q^m+1)}=\left\{\frac{2\lambda-1}{2}(q^m+1)\right\},$$
then we can conclude $\delta_1=\frac{2\lambda-1}{2}(q^m+1)$ directly. \\

When $\frac{q-1}{\lambda}$ is odd, since
$$C_{\frac{q^m+1}{q+1}(\frac{2\lambda q+\lambda-q-1}{2})}=\left\{\frac{q^m+1}{q+1}(\frac{2\lambda q+\lambda-q-1}{2} ),\frac{q^m+1}{q+1}(\frac{2\lambda q+3\lambda-q-1}{2})\right\},$$
then it is obvious to have $\frac{q^m+1}{q+1}(\frac{2\lambda q+\lambda-q-1}{2})$ is a coset leader. Below, we need to show that any integer $\gamma$ satisfying  
$$\frac{q^m+1}{q+1}(\frac{2\lambda q+\lambda-q-1}{2})<\gamma\le \frac{2\lambda-1}{2}(q^m+1)$$
and $\gamma\equiv \lambda-1\pmod{\lambda}$ cannot be a coset leader. Let $\gamma=\frac{q^m+1}{q+1}(\frac{2\lambda q+\lambda-q-1}{2})+k\lambda$, $\gamma^{\prime}=(2\lambda-1)(q^m+1)-\gamma$, where $k$ is a positive integer. Recall that $\gamma$ is not a coset leader if and only if there exists a positive integer $t$ such that $\overline{\gamma q^{t}}<\gamma$. We will proof that $\overline{\gamma q}<\gamma$. Considering the congruence
$$\frac{q^m+1}{q+1}(\frac{2\lambda q+3\lambda-q-1}{2})\equiv \frac{q^m+1}{q+1}(\frac{2\lambda q+\lambda-q-1}{2})q\pmod{n}.$$
This can be written as
$$\frac{q^m+1}{q+1}(\frac{2\lambda q+\lambda-q-1}{2})q=\frac{q^m+1}{q+1}(\frac{2\lambda q+3\lambda-q-1}{2})-an$$
for some positive integer $a$. Thus, showing $\overline{\gamma q}<\gamma$ is equvalent to show
$$(2\lambda-1)(q^m+1)-\gamma<\gamma q-an<n+\gamma.$$
If we assume
$$\frac{q^m+1}{q+1}(\frac{2\lambda q+3\lambda-q-1}{2})<\gamma q-an<n+\frac{q^m+1}{q+1}(\frac{2\lambda q+\lambda-q-1}{2}),$$
we obtain
$$1\le k<\frac{n(a+1)-\delta_1(q-1)}{\lambda q}=\frac{q^m+1}{q+1}.$$
This implies that for any 
$$\frac{q^m+1}{q+1}(\frac{2\lambda q+\lambda-q-1}{2})+\lambda\le \gamma\le \frac{2\lambda-1}{2}(q^m+1)<\frac{q^m+1}{q+1}(\frac{2\lambda q+3\lambda-q-1}{2}),$$
we can always find $\overline{\gamma q}<\gamma$. Therefore, no such $\gamma$ can be a coset leader. It follows that $\delta_1=\frac{q^m+1}{q+1}(\frac{2\lambda q+\lambda-q-1}{2})$.

Combining both cases, the proof of the theorem is complete.
\end{proof}

\begin{theorem}\label{th14}
Let $n=\lambda(q^m+1)$, where $m\ge 3$ is an odd positive integer. Then 
\begin{equation*}
\delta_2=	\left\{
	\begin{array}{lcl}
\delta_1-\lambda q\frac{q^{m-2}+1}{q+1}, \ \mathrm{if}~ \frac{q-1}{\lambda}~\mathrm{is}~odd;\\
\delta_1-\frac{\lambda(q^m+1)}{q+1}, \ \mathrm{if}~\frac{q-1}{\lambda}~\mathrm{is}~even.\\
	\end{array} \right.
\end{equation*}
and 
\begin{equation*}
\mid C_{\delta_2}\mid=	\left\{
	\begin{array}{lcl}
2m, \ \mathrm{if}~ \frac{q-1}{\lambda}~\mathrm{is}~odd;\\
2,, \ \mathrm{if}~\frac{q-1}{\lambda}~\mathrm{is}~even.\\
	\end{array} \right.
\end{equation*}
\end{theorem}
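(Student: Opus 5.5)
The plan is to follow the strategy of Theorem \ref{th9}. By Lemma \ref{lem5} and the claim preceding Theorem \ref{th4}, every coset leader $\gamma$ with $\gamma\not\equiv\lambda-1\pmod{\lambda}$ satisfies $\gamma\le(\lambda-1)(q^m+1)$. Both proposed values of $\delta_2$ differ from $\delta_1$ by a multiple of $\lambda$, so they also lie in the class $a=\lambda-1$. The first step is to check that they lie in the window $\bigl((\lambda-1)(q^m+1),\frac{2\lambda-1}{2}(q^m+1)\bigr]$.
\begin{itemize}
\item In the even case, $\delta_2=(q^m+1)\bigl(\frac{2\lambda-1}{2}-\frac{\lambda}{q+1}\bigr)$, and $\frac{\lambda}{q+1}<\frac12$ because $\lambda\le q-1$.
\item In the odd case, $\delta_2=\delta_1-\lambda q\frac{q^{m-2}+1}{q+1}$, and a similar comparison with the explicit $\delta_1$ of Theorem \ref{th9} shows it stays above $(\lambda-1)(q^m+1)$.
\end{itemize}
Once this is done, it suffices to work inside the single residue class $a=\lambda-1$ and within this window.

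\textbf{Step 1: the candidate is a coset leader, and its coset size.} In the even case, $\frac{q^m+1}{q+1}$ divides $\delta_2$, so $\delta_2 q$ can be reduced modulo $n$ exactly as in Theorem \ref{th9}. I expect this to give $C_{\delta_2}=\{\delta_2,\ \delta_2+\frac{2\lambda(q^m+1)}{q+1}\}$ or a similar partner that is reflected by Lemma \ref{lem2}(2). This shows that $\delta_2$ is the smaller element and that $|C_{\delta_2}|=2$.

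In the odd case, write $\delta_2=\delta_1-\lambda q\frac{q^{m-2}+1}{q+1}$ and compute $\delta_2q^t \bmod n$ for $1\le t\le m-1$. Using $q\equiv 1\pmod{\lambda}$ and $q^m\equiv -1 \pmod{q^m+1}$, each residue should be expressible in closed form. I would then verify that $\overline{\delta_2q^t}>\delta_2$ for every such $t$, which is the criterion stated before Theorem \ref{th6}; equivalently, $\delta_2\notin E$ in Theorem \ref{th4}. The size $2m$ follows by showing that $\gcd(n,\delta_2)$ is small enough that $\mathrm{ord}_{n/\gcd(n,\delta_2)}(q)=2m$, as in Lemma \ref{lem5}.

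\textbf{Step 2: no coset leaders strictly between $\delta_2$ and $\delta_1$.} Write $\gamma=\delta_2+k\lambda$ with $1\le k<(\delta_1-\delta_2)/\lambda$. For each such $\gamma$, I will exhibit some $t$ with $\overline{\gamma q^t}<\gamma$.

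For $t=1$, the computation is linear in $k$. As in the proof of Theorem \ref{th9}, one has $\gamma q=\delta_2q+k\lambda q$. Since $\delta_2q \bmod n$ is known from Step 1, I would locate $\gamma q \bmod n$ in one of the four intervals used to define $\overline{\cdot}$ and compare its reflection with $\gamma$. I expect $t=1$ to dispose of all $k$ in the even case, mirroring the bound $k<\frac{q^m+1}{q+1}$ obtained in Theorem \ref{th9}.

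In the odd case, the gap $\delta_1-\delta_2=\lambda q\frac{q^{m-2}+1}{q+1}$ is larger. I would split the range of $k$ according to the $q$-adic shape of $k$: use $t=1$ for most values, and $t=2$ (or $t=m-1$, via the $E_2$ and $E_3$ descriptions in Theorem \ref{th4}) for the remaining ones.

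\textbf{Main obstacle.} The hard part is the odd case of Step 2. There, a single multiplier $q^t$ does not uniformly push $\gamma$ below itself, and the intervals for $k$ must be matched carefully against the conditions (2.1)–(3.2) defining $E$. My first attempt will be to show that every such $\gamma$ lies in $E_3$ with $t=1$ or $t=2$, choosing $A$ and $B$ from the division $k=Aq+B$.
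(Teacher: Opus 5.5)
\textbf{Overall.} Your architecture matches the paper's proof. You work in the class $\gamma\equiv\lambda-1\pmod{\lambda}$ inside $\bigl((\lambda-1)(q^m+1),\frac{2\lambda-1}{2}(q^m+1)\bigr]$, show the candidate is a leader via closed forms for $\delta_2q^t \bmod n$, and then eliminate each $\gamma=\delta_2+k\lambda<\delta_1$. The even case works exactly as you predict: the partner is $\delta_2+\frac{2\lambda(q^m+1)}{q+1}=(2\lambda-1)(q^m+1)-\delta_2$, and $t=1$ disposes of every $1\le k<\frac{q^m+1}{q+1}$. Your Step 1 in the odd case is the paper's odd/even-$t$ computation. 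One small slip: $\frac{\lambda}{q+1}<\frac12$ does not follow from $\lambda\le q-1$. It follows from $\lambda\le\frac{q-1}{2}$, which holds because $\frac{q-1}{\lambda}$ is even.

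\textbf{The gap.} The odd case of Step 2 is the heart of the theorem, and you leave it open; the strategy you sketch for it cannot work.

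First, $t=1$ eliminates nothing. Since $\delta_2q\equiv\delta_1-\lambda(q-1)\pmod n$, one gets $\gamma q \bmod n=\delta_1-\lambda(q-1)+k\lambda q$, with no wrap-around. This value lies strictly between $\gamma$ and $(2\lambda-1)(q^m+1)-\gamma$ unless $k(q+1)>q^{m-1}+q$, i.e.\ unless $k>q\frac{q^{m-2}+1}{q+1}$, which is outside your range.

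Second, no fixed small set of exponents suffices; the working exponent must be adapted to the size of $k$. Take $q=2$, $\lambda=1$, $m=7$, so $n=129$, $\delta_1=43$, $\delta_2=21$. For $\gamma=23$ the residues $23\cdot 2^t \bmod 129$, $t=1,\dots,6$, are $46,92,55,110,91,53$. Only $t=4=m-3$ leaves $[23,106]$, giving $129-110=19<23$. So neither $t\in\{1,2,m-1\}$ nor membership in $E_3$ with parameter $1$ or $2$ (multipliers $q^{m-1},q^{m-2}$) handles it. For $m=9$ ($n=513$, $\delta_2=85$), the elements $87$, $91$, $107$ are eliminated only by $t=6$, $t=4$, $t=2$ respectively, so the number of indispensable exponents grows with $m$.

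\textbf{Missing idea.} You need the paper's multi-scale split. Use $t=m-1$ for $1\le k\le q-1$. For larger $k$, choose $h\in\{0,\dots,m-5\}$ according to the size of $k$, take $t=m-(h+3)$, and show that $\gamma q^{t}$ has a representative in $\bigl((2\lambda-1)(q^m+1)-\gamma,\ n+\gamma\bigr)$. Without such a $k$-dependent family of exponents, the odd case of the formula for $\delta_2$ remains unproved.
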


\begin{proof}
We divide the proof into two cases: $\frac{q-1}{\lambda}$ is odd and $\frac{q-1}{\lambda}$ is even.

\begin{itemize}
\item[$\mathbf{Case~1:}$] When $\frac{q-1}{\lambda}$ is even. Let $\gamma=\frac{q^m+1}{q+1}(\frac{2\lambda q-q-1}{2})$, the coset is given by
$$C_{\gamma}=\left\{\frac{q^m+1}{q+1}(\frac{2\lambda q-q-1}{2}),\frac{q^m+1}{q+1}(\frac{2\lambda q+4\lambda-q-1}{2})\right\}.$$
It is obvious that $\gamma=\frac{q^m+1}{q+1}(\frac{2\lambda q-q-1}{2})$ is a coset leader. We next prove thet for any $\gamma^{\prime}$ satisfying  $\gamma<\gamma^{\prime}<\delta_1$ and $\gamma^{\prime}\equiv \lambda-1\pmod{\lambda}$, $\gamma^{\prime}$ is not a coset leader. Let $\gamma^{\prime}=\gamma+k\lambda$, where $1\le k<\frac{q^m+1}{q+1}$, consider the element $\overline{\gamma}^{\prime}=\gamma q-an$, where $a$ is a positive integer. We now show that for any $1\le k<\frac{q^m+1}{q+1}$, we all have
$$\overline{\gamma}^{\prime}<\gamma^{\prime}q-an<n+\gamma^{\prime}.$$
On one hand, 
$$\overline{\gamma}^{\prime}<\gamma^{\prime}q-an\Leftrightarrow k\ge 1.$$
On the other hand, 
\begin{align*}
&\gamma^{\prime} q-an<n+\gamma^{\prime}\\
\Leftrightarrow~&(\gamma+k\lambda)q-an<n+\gamma=k\lambda\\
\Leftrightarrow~&k<\frac{q^m+1}{q+1},
\end{align*}
which means that for any $\gamma<\gamma^{\prime}<\delta_1$ and $\gamma^{\prime}\equiv \lambda-1\pmod{\lambda}$, we all can find $\overline{\gamma q}<\gamma$. Then we conclude that $\delta_2=\frac{q^m+1}{q+1}(\frac{2\lambda q-q-1}{2})$.

\item[$\mathbf{Case~2:}$] When $\frac{q-1}{\lambda}$ is odd. Let $\gamma=\delta_1-\lambda q\frac{q^{m-2}+1}{q+1}$, we first proof that $\gamma$ is a coset leader, which is equvalent to proof for any $1\le t\le m-1$, we all have $\overline{\gamma q^t}>\gamma$, that is $\gamma<\gamma q^t\pmod{n}<(2\lambda-1)(q^m+1)-\gamma$. We consider the cases $t$ odd and $t$ even separately.

For odd $t$. We have 
\begin{align*}
\gamma q^t\pmod{n}&\equiv (\delta_1-\lambda q\frac{q^{m-2}+1}{q+1})\cdot q^t\pmod{n}\\
&\equiv \overline{\delta_1}^{\prime}-\lambda(q^{m+t-2}-q^{m+t-3}+\cdots +q^{m+1}-q^m)-\lambda(q^{m-1}-q^{m-2}+\cdots +q^{t+1})\\
&\equiv \overline{\delta_1}^{\prime}-\lambda q^m(q^{t-2}-q^{t-3}+\cdots +q-1)-\lambda q^{t+1}\frac{q^{m-t-1}+1}{q+1}\\
&\equiv \overline{\delta_1}^{\prime}+\lambda\frac{q^{t-1}-1}{q+1}-\lambda q^{t+1}\frac{q^{m-t-1}+1}{q+1}\\
&\equiv \delta_1-\lambda q^{t-1}(q-1)\pmod{n}.
\end{align*}
It is straightforward to see that
$$\gamma q^t\pmod{n}=\delta_1-\lambda q^{t-1}(q-1)\ge \delta_1-\lambda q^{m-3}(q-1)>\gamma.$$
Next, we show that $\gamma q^t\pmod{n}<(2\lambda-1)(q^m+1)-\gamma$. We compute
\begin{align*}
&\gamma q^t\pmod{n}-((2\lambda-1)(q^m+1)-\gamma)\\
=& \frac{q^m+1}{q+1}(2\lambda q+\lambda-q-1)-\lambda q^{t-1}(q-1)-(2\lambda-1)(q^m+1)-\lambda q\frac{q^{m-2}+1}{q+1}\\
=&-\lambda\frac{q^m+1+q^{t-1}(q^2-1)+q(q^{m-2}+1)}{q+1}<0.
\end{align*}
Therefore, for any odd $t$ with $1\le t\le m-1$, we have $\overline{\gamma q^t}>\gamma$.

For even $t$. By the same way, we have
\begin{align*}
\gamma q^t\pmod{n}\equiv &(\delta_1-\lambda q\frac{q^{m-2}}{q+1})q^t\\
\equiv &\delta_1+\lambda\frac{q^{t-1}+1}{q+1}+\lambda q^{t+1}\frac{q^{m-t-1}-1}{q+1}\pmod{n}.
\end{align*}
On one hand, It is obvious to have $\gamma q^t\pmod{n}>\gamma$. On the other hand, 
\begin{align*}
&\gamma q^t\pmod{n}-((2\lambda-1)(q^m+1)-\gamma)\\
 =~&\delta_1+\lambda\frac{q^{t-1}+1}{q+1}+\lambda q^{t+1}\frac{q^{m-t-1}-1}{q+1}-(2\lambda-1)(q^m+1)+\delta_1-\lambda q\frac{q^{m-2}+1}{q+1}\\
=~&-\lambda\frac{q^{t-1}(q^2-1)+q(q^{m-2}+1)}{q+1}<0.
\end{align*}
Therefore, for any even $t$ with $1\le t\le m-1$, we have $\overline{\gamma q^t}>\gamma$. This shows that $\gamma$ is a coset leader.

We now show that $\gamma$ is the second largest coset leader. This is equvalent to proving that for any $\gamma<\gamma^{\prime}<\delta_1$ and $\gamma^{\prime}\equiv \lambda-1\pmod{\lambda}$, there exists at least one $t$ with $1\le t\le m-1$ such that $\overline{\gamma^{\prime}q^t}<\gamma^{\prime}$. In other words, we need to show that $\gamma^{\prime} q^t\pmod{n}>(2\lambda-1)(q^m+1)-\gamma^{\prime}$ or $\gamma^{\prime} q^t\pmod{n}<\gamma^{\prime}$. Let $\gamma^{\prime}=\gamma+k\lambda$, where $1\le k< q\frac{q^{m-2}+1}{q+1}$. Notice that 
$$q\frac{q^{m-2}+1}{q+1}=(q-1)(q^{m-3}+q^{m-5}+\cdots +q^2+1)+1.$$
We then split the proof into two parts: $1\le k\le q-1$ and $q-1<k<q\frac{q^{m-2}+1}{q+1}$.

For $1\le k\le q-1$. We consider $\gamma^{\prime}q^{m-1}\pmod{n}$, since
\begin{align*}
\gamma q^{m-1}&= (\delta_1-\lambda q\frac{q^{m-2}+1}{q+1})q^{m-1}\\
&\equiv \delta_1-\lambda\frac{q^{m-2}+1}{q+1}q^m\\
&\equiv \delta_1-\frac{q^{m-2}+1}{q+1}\cdot \lambda(q^m+1)+\lambda\frac{q^{m-2}+1}{q+1}\\
&\equiv \delta_1+\lambda\frac{q^{m-2}+1}{q+1}\pmod{n},
\end{align*}
it follows that 
$$\gamma^{\prime} q^{m-1}\equiv \delta_1+\lambda\frac{q^{m-2}+1}{q+1}+k\lambda q^{m-1}\pmod{n}.$$
We proceed to show that
$$(2\lambda-1)(q^m+1)-\gamma^{\prime}<\delta_1+\lambda\frac{q^{m-2}+1}{q+1}+k\lambda q^{m-1}<n+\gamma^{\prime}.$$
On one hand, 
\begin{align*}
&(2\lambda-1)(q^m+1)-\gamma^{\prime}<\delta_1+\lambda\frac{q^{m-2}+1}{q+1}+k\lambda q^{m-1}\\
\Leftrightarrow ~&k\lambda (q^m+1)>(2\lambda-1)(q^m+1)-\gamma+\delta_1-\lambda\frac{q^{m-2}+1}{q+1}.
\end{align*}
Notice that this inequality holds for $k=1$, as
$$\lambda(q^{m-1}+1)>(2\lambda-1)(q^m+1)-\gamma+\delta_1-\lambda\frac{q^{m-2}+1}{q+1},$$
Thus, the inequality is satisfied for all $1\le k\le q-1$. On the other hand, 
\begin{align*}
&\delta_1+\lambda\frac{q^{m-2}+1}{q+1}+k\lambda q^{m-1}<n+\gamma^{\prime}\\
\Leftrightarrow ~&k\lambda(q^{m-1}-1)<n+\gamma-\delta_1-\lambda\frac{q^{m-2}+1}{q+1}\\
\Leftrightarrow~& k<\frac{q^{m-2}(q^2-1)}{q^{m-1}-1}=q-\frac{q^{m-3}-1}{q^{m-1}-1}\\
\Leftrightarrow~& k\le q-1.
\end{align*}
Therefore, for any $1\le k\le q-1$, we can take $t=m-1$ such that $\overline{\gamma^{\prime} q^t}<\gamma^{\prime}$. This implies that $\gamma^{\prime}$ is not a coset leader.

For $q-1<k<q\frac{q^{m-2}+1}{q+1}$. For such $k$, we can always find an integer $h$ with $0\le h\le m-5$ such that 
$$(q-1)(q+q^2+\cdots +q^h)<k\le (q-1)(1+q^2+\cdots +q^{h+2}).$$
Let $k^{\prime}=k-(q-1)(q+q^2+\cdots q^h)$. We consider $\gamma^{\prime} q^{m-(h+3)}\pmod{n}$. First, we write $\gamma^{\prime}$
\begin{align*}
\gamma^{\prime}&=\gamma+k\lambda\\
&=\delta_1-\lambda q\frac{q^{m-2}+1}{q+1}+\lambda((q-1)(1+q^2+\cdots +q^h))+k^{\prime}\lambda\\
&=\delta_1-\lambda q^{h+3}\frac{q^{m-h-4}+1}{q+1}+\lambda(q^{h+2}-1)+k^{\prime}\lambda,
\end{align*}
then 
\begin{align*}
\gamma^{\prime} q^{m-(h+3)}\equiv &(\delta_1-\lambda q^{h+3}\frac{q^{m-h-4}+1}{q+1}+\lambda(q^{h+2}-1)+k^{\prime}\lambda)q^{m-(h+3)}\\
\equiv &\delta_1+\lambda\frac{q^{m-t-4}+1}{q+1}+\lambda(q^{t+2}-1)q^{m-(t+3)}+k^{\prime}\lambda q^{m-(t+3)}\pmod{n}.
\end{align*}
We proceed to show that this element lies in the desired range:
$$(2\lambda-1)(q^m+1)-\gamma^{\prime}<\delta_1+\lambda\frac{q^{m-h-4}+1}{q+1}+\lambda(q^{h+2}-1)q^{m-(h+3)}+k^{\prime}\lambda q^{m-(h+3)}<n+\gamma^{\prime}.$$
On one hand, consider the left inequality:
\begin{align*}
&\delta_1+\lambda\frac{q^{m-h-4}+1}{q+1}+\lambda(q^{h+2}-1)q^{m-(h+3)}+k^{\prime}\lambda q^{m-(h+3)}>(2\lambda-1)(q^m+1)-\gamma^{\prime}\\
\Leftrightarrow &k^{\prime}(q+1)(q^{m-(h+3)}+1)>q^{m-h-2}+q^{m-h-3}-q^{m-h-4}-q^{h+2}+q+1.
\end{align*}
Notice that when $k^{\prime}=1$, we have 
\begin{align*}
(q+1)(q^{m-(h+3)}+1)&=q^{m-h-2}+q^{m-h-3}+q+1\\
&>q^{m-h-2}+q^{m-h-3}+q+1-(q^{m-h-4}+q^{h+2}), 
\end{align*}
then for any $1\le k^{\prime}\le (q-1)q^{h+2}$, we all have $(2\lambda-1)(q^m+1)-\gamma^{\prime}<\delta_1+\lambda\frac{q^{m-h-4}+1}{q+1}+\lambda(q^{h+2}-1)q^{m-(h+3)}+k^{\prime}\lambda q^{m-(h+3)}$. On the other hand, 
\begin{align*}
& \delta_1+\lambda\frac{q^{m-h-4}+1}{q+1}+\lambda(q^{h+2}-1)q^{m-(h+3)}+k^{\prime}\lambda q^{m-(h+3)}<n+\gamma^{\prime}\\
\Leftrightarrow &k^{\prime}<\frac{q^{m+1}+q^m-q^{m-1}+q-q^{m-h-4}-q^{h+3}}{(q+1)(q^{m-(h+3)}-1)},
\end{align*}
it is not difficult to obtain
\begin{align*}
k^{\prime}\le (q-1)q^{h+2}<\frac{q^{m+1}+q^m-q^{m-1}+q-q^{m-h-4}-q^{h+3}}{(q+1)(q^{m-(h+3)}-1)}.
\end{align*}
Therefore, for any $q-1<k<q\frac{q^{m-2}+1}{q+1}$, we can find a $t$ with $1\le t\le m-1$ such that $\overline{\gamma q^t}<\gamma^{\prime}$, which means $\gamma^{\prime}$ is not a coset leader.

Above all, we conclude that $\delta_2=\delta_1-\lambda q\frac{q^{m-2}+1}{q+1}$.
\end{itemize}
Combining the above results, we complete the proof of the theorem.
\end{proof}

\section{BCH codes and their dual}

Throughout this section, we let $n=\lambda(q^m+1)$, where $\lambda\mid q-1$. \cite{gongbinkai} proposed the concept of dually-BCH code: A BCH code of length $n$ over $\mathbb{F}_q$ with respect to an $n$-th primitive root of unity $\zeta_n$ is called a dually-BCH code if its dual code is also a BCH code respect to the same $\zeta_n$. We determine the dimensions of several families of BCH codes and improve the lower bounds on their minimum distances in some cases. In particular, some of the BCH codes we constructed are optimal. Additionally, we present a necessary and sufficient condition for a BCH code $\mathcal{C}_{(q,n,\delta,0)}$ to be a dually-BCH code.

\subsection{Parameters of BCH codes of length $\lambda(q^m+1)$}

With the results on the cyclotomic cosets above, we have the following conclusions on dimensions of BCH codes with $n=\lambda(q^m+1)$. Their proofs follow directly from Theorem \ref{th12}--\ref{th7} and are omitted.

\begin{theorem}\label{th15}
Let $m=2$. Then for $3\le \delta\le \lambda q+1$, we have 
\begin{itemize}
\item[(1)] If $q$ is even or $q$ is odd and $\lambda\le \frac{q-1}{2}$, then
$$\mathrm{dim}(\mathcal{C}_{(q,n,\delta,0)})=\lambda(q^2+1)+7-4\delta+4\lfloor\frac{\delta-2}{q}\rfloor;$$
\item[(2)] If $q$ is odd and $\lambda\ge \frac{q+1}{2}$, then
\begin{equation*}
	\mathrm{dim}(\mathcal{C}_{(q,n,\delta,0)})=	\left\{
	\begin{array}{lcl}
	\lambda(q^2+1)+7-4\delta+4\lfloor\frac{\delta-2}{q}\rfloor	, \ \mathrm{if}~ 3\le \delta\le \frac{q^2+3}{2};\\
	\lambda(q^2+1)+9-4\delta-4\lfloor\frac{\delta-2}{q}\rfloor, \ \mathrm{if}~ \frac{q^2+5}{2}\le \delta\le \lambda q+1.
	\end{array} \right.
\end{equation*}
\end{itemize}
\end{theorem}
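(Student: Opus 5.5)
The plan is to compute $|T|$ for the defining set $T=C_0\cup C_1\cup\cdots\cup C_{\delta-2}$ of $\mathcal{C}_{(q,n,\delta,0)}$ and then use $\mathrm{dim}(\mathcal{C}_{(q,n,\delta,0)})=n-|T|$. The hypothesis $3\le\delta\le\lambda q+1$ means that every integer $\gamma$ with $1\le\gamma\le\delta-2$ lies in the range $1\le\gamma\le\lambda q-1$ covered by Theorem \ref{th12}. The coset $C_0=\{0\}$ contributes exactly $1$.

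\textbf{Step 1: reduce $T$ to a disjoint union of cosets.} I split $\{1,\dots,\delta-2\}$ according to divisibility by $q$.
\begin{itemize}
\item[(a)] If $q\nmid\gamma$, Theorem \ref{th12} says $\gamma$ is a coset leader. Distinct coset leaders give distinct cosets, so these $\gamma$ index pairwise disjoint cosets.
\item[(b)] If $q\mid\gamma$, write $\gamma=q^{j}\gamma_0$ with $q\nmid\gamma_0$. Then $\gamma\in C_{\gamma_0}$ and $1\le\gamma_0<\gamma\le\delta-2$, so $C_\gamma$ already appears among the cosets in (a).
\end{itemize}
Hence
$$T=\{0\}\ \sqcup \bigsqcup_{\substack{1\le\gamma\le\delta-2\\ q\nmid\gamma}} C_\gamma .$$
The number of cosets in this union is $(\delta-2)-\lfloor\frac{\delta-2}{q}\rfloor$.

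\textbf{Step 2: insert the sizes from Theorem \ref{th12}.}

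In case (1), every one of these cosets has size $4$. Therefore
$$|T|=1+4\Bigl(\delta-2-\Bigl\lfloor\tfrac{\delta-2}{q}\Bigr\rfloor\Bigr),$$
and so $\mathrm{dim}=\lambda(q^2+1)+7-4\delta+4\lfloor\frac{\delta-2}{q}\rfloor$, as claimed.

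In case (2), the only exception is the coset $C_{\frac{q^2+1}{2}}$, which has size $2$. I need two checks on $\frac{q^2+1}{2}$.
\begin{itemize}
\item[(i)] It is not divisible by $q$, since $q^2+1\equiv1\pmod q$.
\item[(ii)] It lies in the admissible range: $\frac{q^2+1}{2}\le\frac{q(q+1)}{2}-1\le\lambda q-1$ because $\lambda\ge\frac{q+1}{2}$ and $q\ge3$.
\end{itemize}
So this coset occurs in the union exactly when $\frac{q^2+1}{2}\le\delta-2$, that is, when $\delta\ge\frac{q^2+5}{2}$.
\begin{itemize}
\item For $3\le\delta\le\frac{q^2+3}{2}$, the formula from case (1) is unchanged.
\item For $\delta\ge\frac{q^2+5}{2}$, $|T|$ drops by $4-2=2$, giving $\mathrm{dim}=\lambda(q^2+1)+9-4\delta+4\lfloor\frac{\delta-2}{q}\rfloor$.
\end{itemize}

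\textbf{Expected obstacle: a sign in the statement.} The counting itself is routine, and the only real dependence is on Theorem \ref{th12}. The main point of care is that Step 2 yields $+4\lfloor\frac{\delta-2}{q}\rfloor$ in the second line of case (2), whereas the statement prints $-4\lfloor\frac{\delta-2}{q}\rfloor$. I would first recheck this sign by computing one small example directly, for instance with $q=5$, $\lambda=4$ and $\delta$ just above $\frac{q^2+5}{2}=15$. My expectation is that the printed minus sign is a typo and the correct sign is $+$.
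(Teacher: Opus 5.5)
Your counting scheme is the argument the paper intends; it only says the formula follows directly from Theorem \ref{th12}. Your sign correction is also right. For $q=3$, $\lambda=2$, $\delta=7$ one has $T=C_0\cup C_1\cup C_2\cup C_4\cup C_5$ and $|T|=15$, so $\dim=5$, which matches $+4\lfloor\frac{\delta-2}{q}\rfloor$; the printed sign gives $-3$.

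The genuine gap is Step 1(a). Theorem \ref{th12}, which you import, is false when $\lambda=q-1$ and $q\ge 4$. The hypothesis of case (2) ($\lambda\ge\frac{q+1}{2}$ with $\lambda\mid q-1$) forces $\lambda=q-1$, and case (1) with $q$ even also allows $\lambda=q-1$. By Lemma \ref{lem2} with $a=1$, every $\gamma\equiv 1\pmod{\lambda}$ with $\gamma\le q^2+1$ satisfies $q^2+1-\gamma\in C_\gamma$. Hence no such $\gamma>\frac{q^2+1}{2}$ is a coset leader. For $\lambda<q-1$ (so $\lambda\le\frac{q-1}{2}$) this never bites, because $\lambda q-1<\frac{q^2+1}{2}$. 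For $\lambda=q-1$, however, such $\gamma$ do lie in $(\frac{q^2+1}{2},\lambda q-1]$: the first is $\frac{q^2+2q-1}{2}$ for odd $q\ge5$ and $\frac{q^2+q}{2}$ for even $q\ge4$. None of them is divisible by $q$.

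Two concrete counterexamples:
\begin{itemize}
\item[(i)] $q=5$, $\lambda=4$, $n=104$. Here $9\cdot 5^2=225\equiv 17\pmod{104}$, so $17\in C_9$. For $\delta=19$ the true values are $|T|=51$ and $\dim=53$, whereas your formula gives $49$ and the printed one gives $25$. Your planned check at $\delta=16,17,18$ would pass and miss this.
\item[(ii)] $q=4$, $\lambda=3$, $n=51$. Here $C_7=\{7,28,10,40\}$, so $10$ is not a leader. For $\delta=12$ one gets $\dim=22$, against $18$ from (1).
\end{itemize}
So the statement, even with the sign repaired, fails at the top of the $\delta$-range in these subcases, and your proof inherits the error from Theorem \ref{th12}.

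The fix stays inside your framework. In Step 1, also discard each $\gamma\equiv1\pmod{q-1}$ with $\frac{q^2+1}{2}<\gamma\le\delta-2$: its coset is $C_{q^2+1-\gamma}$, which is already counted. Each such $\gamma$ raises the dimension by $4$ relative to your formula. To see these are the only exceptions, write $C_\gamma=\{\gamma,\ \gamma q,\ a(q^2+1)-\gamma,\ \gamma q^3\bmod n\}$ for $1\le\gamma\le\lambda q-1$ with $q\nmid\gamma$, and check the elements directly. With this correction your argument is complete, and when $\lambda<q-1$ or $q\le 3$ no correction is needed.
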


\begin{theorem}\label{th16}
Let $m\ge 4$ be even. Then for $3\le \delta\le \lambda q^{\frac{m}{2}}+1$, we have 
$$\mathrm{dim}(\mathcal{C}_{(q,n,\delta,0)})=\lambda q^m+\lambda-1-2m(\delta-2-\lfloor\frac{\delta-2}{q}\rfloor).$$
\end{theorem}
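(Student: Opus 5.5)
The plan is to compute $|T|$ for the defining set $T=C_0\cup C_1\cup\cdots\cup C_{\delta-2}$ and then use $\dim\mathcal{C}_{(q,n,\delta,0)}=n-|T|$. Since $T$ is a union of whole cyclotomic cosets, $|T|$ is the sum of $|C_\gamma|$ over the coset leaders $\gamma$ of the distinct cosets that meet $\{0,1,\dots,\delta-2\}$. Every integer in $[0,\delta-2]$ lies in the coset of some leader $\le$ itself. So the distinct cosets meeting this interval are exactly $C_\gamma$ with $\gamma\in\Gamma\cap[0,\delta-2]$.

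The first step is to describe $\Gamma\cap[0,\delta-2]$ using Theorem \ref{th6}. Because $\delta\le\lambda q^{m/2}+1$, we have $\delta-2\le\lambda q^{m/2}-1$. Theorem \ref{th6} then says every $\gamma$ with $1\le\gamma\le\delta-2$ and $q\nmid\gamma$ is a coset leader with $|C_\gamma|=2m$. Conversely, a nonzero $\gamma$ with $q\mid\gamma$ is never a coset leader. Indeed, $\gamma/q$ lies in the same coset and is smaller, since $\gcd(q,n)=1$ and $\gamma/q\cdot q\equiv\gamma$. Finally, $\gamma=0$ is a leader with $|C_0|=1$. The number of integers in $[1,\delta-2]$ divisible by $q$ is $\lfloor\frac{\delta-2}{q}\rfloor$, so there are $\delta-2-\lfloor\frac{\delta-2}{q}\rfloor$ nonzero leaders in the range.

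Next I check that these cosets are pairwise distinct. This is automatic, because distinct coset leaders index distinct cosets. The count then gives
\begin{align*}
|T|=1+2m\Bigl(\delta-2-\Bigl\lfloor\tfrac{\delta-2}{q}\Bigr\rfloor\Bigr),
\end{align*}
and hence $\dim=\lambda q^m+\lambda-1-2m(\delta-2-\lfloor\frac{\delta-2}{q}\rfloor)$, as stated.

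The only delicate point is confirming that the hypothesis range of Theorem \ref{th6} covers all of $[1,\delta-2]$, including the endpoint $\delta=\lambda q^{m/2}+1$. This is exactly where the bound $\delta-2\le\lambda q^{m/2}-1$ is used. With that granted, there is no real obstacle, and the argument is pure bookkeeping on top of Theorem \ref{th6}.
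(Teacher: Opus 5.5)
Your proposal is correct and takes the same approach the paper intends. The paper omits this proof, saying only that it follows directly from Theorem~\ref{th6}. Your count supplies that step: $|T| = 1 + 2m\bigl(\delta-2-\lfloor\frac{\delta-2}{q}\rfloor\bigr)$, from $C_0$ plus the leaders in $[1,\delta-2]$ not divisible by $q$, each of size $2m$ because $\delta-2\le\lambda q^{m/2}-1$.
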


\begin{theorem}\label{th17}
Let $m=3$. Then for $3\le \delta\le 2\lambda q+1$, we have
\begin{itemize}
\item[(1)] If $1\le \lambda<\lceil\frac{q}{2}\rceil$, then
$$\mathrm{dim}(\mathcal{C}_{(q,n,\delta,0)})=\lambda(q^3+1)-6\delta+6\lfloor\frac{\delta-2}{q}\rfloor+11;$$
\item[(2)] If $\lambda\ge \lceil\frac{q}{2}\rceil$, then
\begin{equation*}
	\mathrm{dim}(\mathcal{C}_{(q,n,\delta,0)})=	\left\{
	\begin{array}{lcl}
\lambda(q^3+1)-6\delta+6\lfloor\frac{\delta-2}{q}\rfloor+11	, \ \mathrm{if}~ 3\le \delta\le q^2-q+2;\\
\lambda(q^3+1)-6\delta+6\lfloor\frac{\delta-2}{q}\rfloor+15, \ \mathrm{if}~ q^2-q+3\le \delta\le 2\lambda q+1.
	\end{array} \right.
\end{equation*}
\end{itemize}
\end{theorem}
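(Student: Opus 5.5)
We determine $\dim\mathcal{C}_{(q,n,\delta,0)}=n-|T|$, where $T=C_0\cup C_1\cup\cdots\cup C_{\delta-2}$. We compute $|T|$ by listing which integers in $[0,\delta-2]$ are coset leaders and adding up their coset sizes. Since $n=\lambda(q^3+1)$, the range $\delta\le 2\lambda q+1$ gives $\delta-2\le 2\lambda q-1$, which is exactly the range covered by Theorem \ref{th13}. So the computation reduces to three steps: (a) decide which $\gamma\in[1,\delta-2]$ with $q\mid\gamma$ give new cosets, (b) add in $C_0=\{0\}$, and (c) handle the exceptional coset $C_{q^2-q+1}$.

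For step (a), the claim is that every $\gamma=q\gamma_0$ with $1\le\gamma\le \delta-2$ lies in $C_{\gamma_0}$, where $\gamma_0=\gamma/q<\gamma$ and $\gamma_0\ge 1$. So $\gamma$ contributes nothing new. Next, all $\gamma\not\equiv 0\pmod q$ in $[1,\delta-2]$ are coset leaders by Theorem \ref{th13}, so they lie in pairwise distinct cosets. By the same theorem each such coset has size $6$, except $C_{q^2-q+1}$ when $\lambda\ge\lceil q/2\rceil$. The number of such $\gamma$ is $(\delta-2)-\lfloor\frac{\delta-2}{q}\rfloor$. Including $C_0$, this gives in case (1)
$$|T|=1+6\Bigl(\delta-2-\Bigl\lfloor\frac{\delta-2}{q}\Bigr\rfloor\Bigr).$$
Hence $\dim=\lambda(q^3+1)-1-6\delta+12+6\lfloor\frac{\delta-2}{q}\rfloor$, which matches the stated $+11$.

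Case (2) is the same computation with one adjustment. The value $q^2-q+1$ lies in $[1,2\lambda q-1]$ exactly when $\lambda\ge\lceil q/2\rceil$, and it is not divisible by $q$. Once $\delta-2\ge q^2-q+1$, i.e.\ $\delta\ge q^2-q+3$, this coset leader enters $T$ with size $2$ instead of $6$. That reduces $|T|$ by $4$, and the constant changes from $11$ to $15$. For $\delta\le q^2-q+2$ it does not appear, and the formula from case (1) holds unchanged.

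The only real checks are these. First, $0$ forms its own coset and no other element of $[1,\delta-2]$ lies in $C_0$, since all such elements are nonzero. Second, we must confirm $\lceil q/2\rceil\le\lambda$ if and only if $q^2-q+1\le 2\lambda q-1$, which is clear because $q^2-q+2\le 2\lambda q$ holds exactly when $\lambda\ge\frac{q-1}{2}+\frac1q$. Both follow directly from Theorem \ref{th13} together with the multiplication-by-$q$ argument. The main obstacle is simply the bookkeeping: making sure the exceptional size-$2$ coset is counted exactly once, and only in the stated $\delta$-range.
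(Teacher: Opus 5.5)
Your proposal is correct and is essentially the argument the paper intends. The paper omits the proof, saying it follows directly from Theorem \ref{th13}. That is exactly your count: $|T|=1+\sum|C_\gamma|$ over the non-multiples $\gamma$ of $q$ in $[1,\delta-2]$, each of size $6$, except that the size-$2$ coset $C_{q^2-q+1}$ enters once $\delta\ge q^2-q+3$ and shifts the constant from $11$ to $15$.
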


\begin{theorem}\label{th18}
Let $m\ge 5$ be odd. Then for $3\le \delta\le \lambda(q^{\frac{m-1}{2}}+q)+1$, we have
$$\mathrm{dim}(\mathcal{C}_{(q,n,\delta,0)})=\lambda q^m+\lambda-1-2m(\delta-2-\lfloor\frac{\delta-2}{q}\rfloor).$$
\end{theorem}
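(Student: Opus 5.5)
The plan is to compute $|T|$ for the defining set $T=C_0\cup C_1\cup\cdots\cup C_{\delta-2}$ of $\mathcal{C}_{(q,n,\delta,0)}$. Then $\mathrm{dim}=n-|T|$ with $n=\lambda q^m+\lambda$. The tool is Theorem \ref{th7}, which applies because $\delta-2\le \lambda(q^{\frac{m-1}{2}}+q)-1$. It says every $\gamma$ with $1\le\gamma\le\delta-2$ and $q\nmid\gamma$ is a coset leader with $|C_\gamma|=2m$.

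The steps, in order:

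\begin{itemize}
\item[(1)] \emph{Distinctness.} Distinct coset leaders give disjoint cosets. So the cosets $C_\gamma$ with $1\le \gamma\le\delta-2$ and $q\nmid\gamma$ are pairwise disjoint, and each has $2m$ elements.
\item[(2)] \emph{Multiples of $q$ add nothing.} If $\gamma=q^s\gamma_0$ with $q\nmid\gamma_0$ and $1\le\gamma\le\delta-2$, then $\gamma_0<\gamma$, so $\gamma_0$ is also in the range. Hence $C_\gamma=C_{\gamma_0}$ is already counted.
\item[(3)] \emph{$C_0$.} Clearly $C_0=\{0\}$ contributes one element.
\item[(4)] \emph{Counting.} The integers in $[1,\delta-2]$ not divisible by $q$ number $\delta-2-\lfloor\frac{\delta-2}{q}\rfloor$. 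Therefore
$$|T|=1+2m\Bigl(\delta-2-\Bigl\lfloor\frac{\delta-2}{q}\Bigr\rfloor\Bigr).$$
\item[(5)] \emph{Conclusion.} Subtracting $|T|$ from $n=\lambda q^m+\lambda$ gives exactly the stated dimension formula.
\end{itemize}

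The main obstacle is the input from Theorem \ref{th7} itself. Its proof is omitted in the paper and is said to be analogous to that of Theorem \ref{th6}. If needed, I would verify it with the same case split: for each $1\le t\le m-1$, write $\gamma=k\lambda+a$ with $k=Aq^t+B\frac{q^t-1}{2\lambda}+C$, and check that $\overline{\gamma q^{m-t}}>\gamma$. The bound must now reach $\lambda(q^{\frac{m-1}{2}}+q)$ instead of $\lambda q^{\frac m2}$.

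The delicate values of $t$ are those near $\frac{m\pm1}{2}$. There I would compare $\gamma q^{m-t}\bmod n$ directly with the two reflection points $a(q^m+1)$ and $(\lambda+a)(q^m+1)$ from Lemma \ref{lem2}.

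The size claim $|C_\gamma|=2m$ follows from $\mathrm{ord}_{n/\gcd(n,\gamma)}(q)=2m$. This holds because $\gcd(n,\gamma)$ is small compared with $q^m+1$, so $q^j\not\equiv\pm1$ modulo $n/\gcd(n,\gamma)$ for $j<m$.
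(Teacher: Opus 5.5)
Your proposal is correct and is the argument the paper intends (the paper omits the proof, saying it follows directly from Theorem \ref{th7}). You count the defining set $T=C_0\cup C_1\cup\cdots\cup C_{\delta-2}$ via Theorem \ref{th7} to get $|T|=1+2m\bigl(\delta-2-\lfloor\frac{\delta-2}{q}\rfloor\bigr)$, and then subtract from $n=\lambda q^m+\lambda$. Your side justification of $|C_\gamma|=2m$ is loosely phrased, since it does not exclude $q^m\equiv 1$ modulo $n/\gcd(n,\gamma)$. This does not matter, because the size is part of the Theorem \ref{th7} you are citing.
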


For the distances of BCH codes with length 
$n=\lambda(q^m+1)$, we provide the following bound, which improves the lower bound on the minimum distance of the code $\mathcal{C}_{(q,n,2\delta+1,n-\delta+1)}$.

​ 
\begin{theorem}\label{th8}
Let $n=\lambda(q^m+1)$, where $\lambda\mid \mathrm{gcd}(q-1,\delta)$. Then the code $\mathcal{C}_{(q,n,2\delta+1,n-\delta+1)}$ has minimal distance $d\ge 2(\delta+1)$.
\end{theorem}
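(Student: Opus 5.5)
The plan is to combine the BCH bound with the extra symmetry that the divisibility hypothesis $\lambda\mid\gcd(q-1,\delta)$ provides.

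\emph{Step 1: the baseline from the BCH bound.} The defining set of $\mathcal{C}_{(q,n,2\delta+1,n-\delta+1)}$ is
$$T=C_{n-\delta+1}\cup\cdots\cup C_{n+\delta-1},$$
so modulo $n$ it contains the $2\delta-1$ consecutive residues $-\delta+1,\dots,\delta-1$. The BCH bound therefore gives $d\ge 2\delta$. It remains to gain two more units, and the cleanest route is to show that $T$ actually contains the $2\delta+1$ consecutive residues $-\delta,\dots,\delta$. This in turn reduces to showing that $\delta$ or $-\delta$ already lies in $T$.

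\emph{Step 2: $T$ is symmetric.} Every $\gamma$ in $T$ satisfies $\gamma\equiv a\pmod\lambda$ for some $1\le a\le\lambda$. By Lemma \ref{lem2}, the element $a(q^m+1)-\gamma$ or $(\lambda+a)(q^m+1)-\gamma$ lies in $C_\gamma$. Both are congruent to $-\gamma\pmod{n/\lambda}$. Using $q^m\cdot\gamma$, the orbit contains $-\gamma+j(q^m+1)$ for a suitable $j$.

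\emph{Step 3: put $\pm\delta$ into $T$ (main obstacle).} Here the hypothesis is used. Since $\lambda\mid\delta$, Lemma \ref{lem2}(1) with $a=\lambda$ gives $n-\delta\in C_\delta$, so $C_\delta=C_{-\delta}$. I would first try to locate some $j$ with $|j|\le\delta-1$ and $j\in C_\delta$. The approach is to write $\delta=\lambda k$ and use $\lambda\mid q-1$. Then $\delta q^t\equiv\delta+\lambda k(q^t-1)$, and one checks whether a suitable power $q^t$ brings $\delta q^t \bmod n$ into the window $[-\delta+1,\delta-1]$.

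If that fails, I would fall back on the Roos bound or the Hartmann--Tzeng bound. One takes $N=\{-\delta+1,\dots,\delta-1\}$ together with a shift set $M=\{0,c\}$, with $c$ chosen so that $N+c\subseteq T$ by the $q$-invariance of $T$ and $\gcd(c,n)=1$. This would yield $d\ge|N|+|M|=2\delta+1$. To close the remaining gap from $2\delta+1$ to $2(\delta+1)$, I would use a parity argument: $0\in T$, so every codeword has coordinate sum zero. Combined with the self-reciprocity of $T$ from Step 2 and Lemma 2.3, this should exclude codewords of weight exactly $2\delta+1$.

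I expect Step 3 to be the crux, because the conclusion $d\ge 2(\delta+1)$ can only hold if $\lambda\mid\delta$ forces additional consecutive zeros or an extra Roos-type shift. The first thing I would verify is, for small cases such as $q=5$, $m=2$, $\lambda=2$, which of these two mechanisms actually occurs.
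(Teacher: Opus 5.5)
\textbf{There is a genuine gap, and it comes from an off-by-one error in Step 1.}

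By the paper's definition, $\mathcal{C}_{(q,n,2\delta+1,b)}$ has defining set $C_b\cup\cdots\cup C_{b+(2\delta+1)-2}$. With $b=n-\delta+1$, the last coset is therefore $C_{n+\delta}=C_{\delta}$, not $C_{n+\delta-1}$. So $T$ already contains the $2\delta$ consecutive residues $-\delta+1,\dots,\delta$, and the BCH bound gives $d\ge 2\delta+1$. It must: a BCH code always meets its designed distance, so getting only $2\delta$ from the BCH bound should have been a warning sign.

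Because of the miscount, you believe $\delta$ is not yet in $T$, and Step 3 becomes a search for some $j\in C_\delta$ with $|j|\le\delta-1$. That search fails in general. Whenever $\delta$ is a coset leader (for instance $\delta=\lambda$), no element of $C_\delta$ lies in $\{1,\dots,\delta-1\}$. And since $C_\delta=C_{-\delta}$ is closed under negation modulo $n$, no element lies in $\{n-\delta+1,\dots,n-1\}$ either.

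The fallback is also not a proof. The shift $c$ with $N+c\subseteq T$ and $\gcd(c,n)=1$ is never produced. The parity argument does not work over $\mathbb{F}_q$ with $q>2$: having $0\in T$ only says the coordinates of each codeword sum to $0$, which forces nothing about the parity of the weight outside the binary case. Separately, Step 2 does not show that $T$ is closed under $\gamma\mapsto-\gamma \pmod n$. Lemma \ref{lem2} yields $a(q^m+1)-\gamma$, which is $-\gamma$ only modulo $q^m+1$ unless $a=\lambda$.

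\textbf{The fix is already in your Step 3.} With the correct defining set, $\delta\in T$. Since $\lambda\mid\delta$,
\[
\delta q^m=\delta(q^m+1)-\delta=\tfrac{\delta}{\lambda}\,n-\delta\equiv-\delta \pmod n,
\]
which is Lemma \ref{lem2}(1) with $a=\lambda$. Hence $n-\delta\in C_\delta\subseteq T$. Then $T\supseteq\{-\delta,\dots,\delta\}$, which is $2\delta+1$ consecutive residues, and the BCH bound gives $d\ge 2(\delta+1)$. This is exactly the paper's proof. Step 2 and the Roos/Hartmann--Tzeng/parity machinery are unnecessary.
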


\begin{proof}
Let $\zeta_n$ be the $n$-th primitive root of $\mathbb{F}_{q^{2m}}^{\ast}$. According to the definition, the generator polynomial $f_{(q,n,2\delta+1,n-\delta+1)}$ of this code has the roots $\zeta_n^i$ for $i$ in the set
\begin{align*}
\{n-\delta+1, n-\delta+2,\cdots ,n-1, 0, 1, \cdots ,n+\delta\}.
\end{align*}
It follows from Lemma \ref{lem2} that $\zeta_n^{n-\delta}$ is also the root of the generator polynomial $f_{(q,n,2\delta+1,n-\delta+1)}$. Therefore, by the BCH bound, we deduce that $d\ge 2(\delta+1)$.
\end{proof}

Furthermore, we obtain the following theorem. Throughout, we assume $m\ge 2$ is an integer. If $m=2$ or $m=3$, we require that $q$ and $\lambda$ satisfy the following conditions:
\begin{itemize}
\item[(1)] If $m=2$, then $q$ is even or $q$ is odd and $\lambda\le \frac{q-1}{2};$
\item[(2)] If $m=3$, then $1\le \lambda\le \lceil \frac{q}{2}\rceil$.
\end{itemize}

\begin{theorem}\label{th5}
Let $\lambda\mid \delta$ and $\lambda\le \delta\le \lambda q^{\frac{m}{2}}-1$ for even $m$, $\lambda\le \delta\le \lambda(q^{\frac{m-1}{2}}+q)-1$ for odd $m$, then the parameters of $\mathcal{C}_{(q,n,2\delta+1,n-\delta+1)}$ is 
\begin{align*}
\left[\lambda(q^m+1), \lambda q^m-2m(2\delta-\frac{
\delta}{\lambda}-\lfloor\frac{\delta-1}{q}\rfloor-\lfloor\frac{\delta}{q}\rfloor+\lfloor\frac{\delta-1}{\lambda q}\rfloor)+\lambda-1, d\ge 2(\delta+1)\right]
\end{align*}
and generator polynomial is
\begin{align*}
f_{(q,n,2\delta+1,n-\delta+1)}(x)=
(x-1)\cdot \prod\limits_{\substack{\gamma=1\\\gamma\not\equiv 0\pmod{q}\\\gamma\not\equiv 0\pmod{\lambda}}}^{\delta}f_{\gamma}(x)f_{-\gamma}(x)\cdot \prod\limits_{\substack{\gamma=1\\\gamma\not\equiv 0\pmod{q}\\\gamma\equiv0\pmod{\lambda}}}^{\delta}f_{\gamma}(x),
\end{align*}
where $f_{\gamma}(x)$ is the minimal polynomial of $\zeta_n^{\gamma}$ over $\mathbb{F}_q$.
\end{theorem}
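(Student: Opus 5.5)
The defining set of $\mathcal{C}_{(q,n,2\delta+1,n-\delta+1)}$ is
$$T=C_0\cup\bigcup_{\gamma=1}^{\delta}\bigl(C_\gamma\cup C_{-\gamma}\bigr),$$
since the consecutive exponents run from $n-\delta+1$ through $n-1$, then $0$, then $1,\dots,\delta$. The distance bound $d\ge 2(\delta+1)$ is exactly Theorem \ref{th8}, so the work is to compute $|T|$ and hence the dimension $n-|T|$. This amounts to deciding which of these cosets are distinct, which coincide, and what their sizes are.

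\textbf{Step 1: distinct leaders on the positive side.} Every $\gamma\le\delta$ with $\gamma\equiv0\pmod q$ lies in $C_{\gamma/q}$, and $\gamma/q\le\delta$, so it adds nothing new. The remaining $\gamma$ satisfy $1\le\gamma\le\delta<\lambda q^{m/2}$ (respectively $<\lambda(q^{\frac{m-1}{2}}+q)$) and $q\nmid\gamma$. By Theorems \ref{th12}, \ref{th6}, \ref{th13} and \ref{th7}, together with the stated restrictions on $q,\lambda$ when $m=2,3$, each such $\gamma$ is a coset leader with $|C_\gamma|=2m$. They therefore give pairwise distinct cosets. For $m=3$ with $\lambda=\lceil q/2\rceil$, I will check that the exceptional value $q^2-q+1$ is not of the form $\gamma\le\delta$, or otherwise treat it separately.

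\textbf{Step 2: when $C_{-\gamma}=C_\gamma$.} Write $\gamma\equiv a\pmod\lambda$. By Lemma \ref{lem2}, if $a=\lambda$ (so $\lambda\mid\gamma$), then $\lambda(q^m+1)-\gamma=n-\gamma\in C_\gamma$, hence $C_{-\gamma}=C_\gamma$. This yields the single factor $f_\gamma$ in the second product.

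For $a\neq\lambda$ I need $C_{-\gamma}\neq C_\gamma$ and also $C_{-\gamma}\neq C_{\pm\gamma'}$ for every other admissible $\gamma'\le\delta$. By Lemma \ref{lem2}, $C_{-\gamma}$ contains $n-(\lambda+a)(q^m+1)+\gamma$, which, reducing modulo $n$, lies in the residue class $\lambda-a\pmod\lambda$. Since residues modulo $\lambda$ are constant on a coset (the Remark after Lemma \ref{lem2}), $C_{-\gamma}$ consists of elements $\equiv-a\pmod\lambda$. It can therefore coincide with $C_{\gamma'}$ only if $\gamma'\equiv-a\pmod\lambda$.

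To exclude such coincidences, I will show that the leader of $C_{-\gamma}$ is $(\lambda-a)(q^m+1)+\gamma$, which exceeds $q^m>\delta$. One way is to check directly via the $\overline{\ \cdot\ }$ criterion, following the computation in Lemma \ref{lem5}. Alternatively, $C_{-\gamma}$ contains $(\lambda-a)(q^m+1)+\gamma'$ for $\gamma'\in C_\gamma$, and all of these exceed $\delta$. Either way, $C_{-\gamma}$ is disjoint from the positive-side cosets and from $C_{-\gamma'}$ for $\gamma'\ne\gamma$, because negation is a bijection on cosets. This step, controlling the leader of $C_{-\gamma}$ (equivalently, showing $C_{-\gamma}\cap[1,\delta]=\emptyset$ when $\lambda\nmid\gamma$), is the one I expect to be the main obstacle.

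\textbf{Step 3: counting.} Let $N_1$ be the number of $1\le\gamma\le\delta$ with $q\nmid\gamma$ and $\lambda\nmid\gamma$, and $N_2$ the number with $q\nmid\gamma$ and $\lambda\mid\gamma$. Then
$$|T|=1+2m(2N_1+N_2).$$
Since $\lambda\mid q-1$, we have $\gcd(\lambda,q)=1$, so the number of multiples of $\lambda q$ up to $\delta$ is $\lfloor\delta/(\lambda q)\rfloor$. Inclusion--exclusion gives
$$N_2=\frac{\delta}{\lambda}-\Bigl\lfloor\frac{\delta}{\lambda q}\Bigr\rfloor,\qquad N_1=\delta-\Bigl\lfloor\frac{\delta}{q}\Bigr\rfloor-N_2 .$$
Hence
$$2N_1+N_2=2\delta-\frac{\delta}{\lambda}-2\Bigl\lfloor\frac{\delta}{q}\Bigr\rfloor+\Bigl\lfloor\frac{\delta}{\lambda q}\Bigr\rfloor.$$
I will reconcile this with the stated expression using $\lambda\mid\delta$: this gives $\lfloor(\delta-1)/(\lambda q)\rfloor$ versus $\lfloor\delta/(\lambda q)\rfloor$, and $\lfloor(\delta-1)/q\rfloor$ versus $\lfloor\delta/q\rfloor$, which differ only when $q\mid\delta$. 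If a discrepancy remains in that boundary case, I will recheck the count there. Finally, $\dim=n-|T|$ and the generator polynomial is the product of the distinct minimal polynomials identified in Steps 1 and 2, together with $x-1$ from $C_0$.
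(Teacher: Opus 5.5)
Your plan matches the paper's: decompose $T$, take leaders and sizes from Theorems~\ref{th12}--\ref{th7}, use Lemma~\ref{lem2} when $\lambda\mid\gamma$, count, and quote Theorem~\ref{th8}. The paper's proof is only that citation, and it addresses neither of the issues below. Your Step~3 is fine: if $q\mid\delta$ then $\lambda q\mid\delta$, the two unit shifts cancel, and your count equals the stated one. Note also that $C_{-\delta}\subseteq T$ only because $\lambda\mid\delta$ gives $C_{-\delta}=C_\delta$; the BCH interval itself stops at $-(\delta-1)$. The two points you defer, however, are genuine gaps.

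\emph{Step 2.} Neither of your justifications works. For every integer $x\equiv a\pmod\lambda$ one has $xq^m\equiv a(q^m+1)-x\pmod n$. Hence $C_{-\gamma}=\{\gamma'+(\lambda-a)(q^m+1)\bmod n:\gamma'\in C_\gamma\}$. For $\gamma'>a(q^m+1)$ the reduced element is $\gamma'-a(q^m+1)$, which can be small.
\begin{itemize}
\item For $q=3$, $m=3$, $\lambda=2$ ($n=56$): $C_5=\{5,13,15,23,39,45\}$ and $C_{-5}=\{11,17,33,41,43,51\}$. The leader of $C_{-5}$ is $11$, not $33$.
\item For $q=3$, $m=4$, $\lambda=2$ ($n=164$): $C_{-11}=C_{17}$. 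So $C_{-\gamma}\cap[1,\delta]=\emptyset$ fails at $\delta=17$, which meets every size bound but not $\lambda\mid\delta$.
\end{itemize}
Your argument never uses $\lambda\mid\delta$, yet that hypothesis is exactly what makes disjointness true. A correct argument writes a coincidence as either $\gamma q^t+\gamma_0=kn$ or $\gamma q^t=\gamma_0+(a+k\lambda)(q^m+1)$, with $1\le t<m$ and $\gamma,\gamma_0\le\delta$. It then reduces modulo $q^t$ and compares sizes. The outcomes are:
\begin{itemize}
\item For even $m$, the only surviving pairs are $\{\lambda q^{m/2}-j,\ \lambda+jq^{m/2}\}$ with $1\le j<\lambda$. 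These are excluded because $\lambda\mid\delta$ forces $\delta\le\lambda q^{m/2}-\lambda$.
\item For $m=3$, the only surviving pair is $\{\lambda q-1,\ q^2+\lambda\}$. It is excluded since $\delta\le 2\lambda q-\lambda<q^2+\lambda$ when $\lambda\le\lceil q/2\rceil$.
\item For odd $m\ge5$, nothing survives.
\end{itemize}

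\emph{Step 1.} The check you defer for $m=3$, $\lambda=\lceil q/2\rceil$ cannot succeed, because the statement is false there. Since $\lambda\mid q-1$, this case means $(q,\lambda)\in\{(2,1),(3,2)\}$, and admissible $\delta$ reach $q^2-q+1$.
\begin{itemize}
\item For $q=3$, $\lambda=2$, $\delta=8$: $T=C_0\cup C_1\cup C_2\cup C_4\cup C_5\cup C_7\cup C_8\cup C_{11}\cup C_{29}\cup C_{35}$ with $|C_7|=|C_{35}|=2$. So $|T|=47$ and the dimension is $9$, while the formula gives $1$.
\item For the same $q,\lambda$ with $\delta=10$, the formula gives $-5$. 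For $q=2$, $\delta=3$, it gives $-4$.
\end{itemize}
The hypothesis must be $\lambda<\lceil q/2\rceil$, as in Theorem~\ref{th13}(1).

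Citing Theorem~\ref{th12}(1) for $m=2$ with $q$ even is also unsafe. By Lemma~\ref{lem2}, $a(q^2+1)-\gamma\in C_\gamma$. For $q=8$, $\lambda=7$ this gives $36\in C_{29}$, so the stated dimension is wrong for $\delta\in\{42,49\}$ as well.
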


\begin{proof}
The defining set of $\mathcal{C}_{(q,n,2\delta+1,n-\delta+1)}$ is 
\begin{align*}
T=C_{n-\delta+1}\cup \cdots C_{n-1}\cup C_0\cup C_2\cup \cdots C_{\delta}.
\end{align*}
The conclusion on the dimension and generator polynomial then follow from the definition of BCH codes and Theorem \ref{th12}--\ref{th7}. The lower bound on the minimal distance follows from Theorem \ref{th8}.
\end{proof}

\begin{example}
Take $q=3$ and $m=3$, then $n=56$. By Theorem \ref{th5}, the code $\mathcal{C}_{(3,56,9,53)}$ has parameters $[56,31,d\ge 10]$. By Magma program, we have the minimal distance of $\mathcal{C}_{(3,56,9,53)}$ is $10$. Further, it is the best possible for cyclic code by \cite{dingcunsheng}.
\end{example}

\subsection{The dual codes of BCH codes of length $\lambda(q^m+1)$}

 Let $T$ and $T^{\perp}$ denote the defining sets of $\mathcal{C}_{(q,n,\delta,0)}$ and $\mathcal{C}_{(q,n,\delta,0)}^{\perp}$ with respect to $\beta$ respectively. By definition, $T=C_0\cup C_1\cup\cdots \cup C_{\delta-2}$, and it is well-known that $T^{\perp}=\mathbb{Z}_n\setminus T^{-1}$, where $T^{-1}=\{-t\pmod{n}\mid t\in T\}$. Recall  that $\mathcal{C}_{(q,n,\delta,0)}$ is a dually-BCH code if and only if $T^{\perp}$ is a union of consecutive $q$-cyclotomic cosets. We next establish a sufficient and necessary condition on $\delta$ for $\mathcal{C}_{(q,n,\delta,0)}$ to be a  dually-BCH code. For $\lambda=1$, this case was investigated in \cite{fuyuqing}, so we focus on the case $\lambda>1$.

\begin{proposition}
Let $n=(q-1)(q^m+1)$, where $m$ is an odd integer. If $\gamma$ satisfies $\frac{q^m+1}{q+1}(q^2-2q-1)<\gamma<(q-2)(q^m+1)$, then $\gamma$ is not a coset leader.
\end{proposition}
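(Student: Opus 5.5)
The plan is to work with the parameters specialised to $\lambda=q-1$. Write $M=\frac{q^m+1}{q+1}$, which is an integer since $m$ is odd. Then $\frac{q-1}{\lambda}=1$ is odd, and by Theorem \ref{th9} we get $\delta_1=M(q^2-q-1)$. The interval in the statement is
$$M(q^2-2q-1)<\gamma<M(q^2-q-2)=(q-2)(q^m+1),$$
so it sits just below the block in which $\delta_1$ lives. Since $M\equiv 1\pmod{q-1}$ (the alternating sum of powers of $q\equiv 1$), the left endpoint satisfies $M(q^2-2q-1)\equiv q-3\pmod{\lambda}$.

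\textbf{Step 1: reduce to two residue classes.} I would use condition (1) of Theorem \ref{th4}. Let $a\in\{1,\dots,\lambda\}$ be the residue of $\gamma$. A coset leader must satisfy either $\gamma\le \frac a2(q^m+1)$ or $a(q^m+1)<\gamma\le\frac{\lambda+a}{2}(q^m+1)$. The interval lies inside $\big((q-3+\tfrac{2}{q+1})(q^m+1),(q-2)(q^m+1)\big)$. The first option would force $a\ge 2q-5$, which is impossible for $q\ge 5$. The second option forces $a\in\{q-4,q-3\}$. All other residues are therefore excluded at once. The small values $q\le 4$ I would check directly; there the interval is very short.

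\textbf{Step 2: the class $a=q-3$, modelled on the proof of Theorem \ref{th9}.} Here the reflection centre is $\frac{\lambda+a}{2}(q^m+1)=(q-2)(q^m+1)$, which is exactly the right end of the interval. Set $\gamma_0=M(q^2-2q-1)$. One checks that $\gamma_0q\equiv M(q^2-2q-1+2\lambda)\pmod n$. Hence $C_{\gamma_0}$ is a coset of size $2$ placed symmetrically about $(q-2)(q^m+1)$, playing the role that $\delta_1$ plays in Theorem \ref{th9}. Now write $\gamma=\gamma_0+k\lambda$ with $1\le k<M$. Then
$$\gamma q\equiv \gamma_0 q+k\lambda q\pmod n.$$
I would show that
$$(2q-4)(q^m+1)-\gamma<\gamma q \bmod n<n+\gamma .$$
This gives $\overline{\gamma q}<\gamma$, so $\gamma\in E$ (sets $E_2$ or $E_3$ with $t=m-1$). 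The left inequality should reduce to $k\ge1$, and the right one to $k<\frac{n(c+1)-\gamma_0(q-1)}{\lambda q}=M$, exactly as in Theorem \ref{th9}.

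\textbf{Step 3: the class $a=q-4$.} Here the admissible part of the interval is $\big(M(q^2-2q-1),\,(q-\tfrac52)(q^m+1)\big]$, with reflection centre $(q-\frac52)(q^m+1)$. Again I would multiply by $q$ (so $t=m-1$, because $q\equiv1\pmod\lambda$ preserves the residue). Writing $\gamma=(q-4)(q^m+1)+s$, with $s$ large relative to $q^m+1$, gives
$$\gamma q\equiv (q-4)(q^m+1)+sq-c\,\lambda(q^m+1).$$
The aim is to show that this value lands either below $\gamma$ or beyond the reflected point $(2q-5)(q^m+1)-\gamma$; if $t=1$ fails for some $s$, I would try $t=m-1$ (multiplication by $q^{-1}$). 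This step is the main obstacle. There is no symmetric size-$2$ coset anchoring the interval, so the range of $s$ may have to be split according to the value of $c$. I would first test numerically (for example $q=5,7$, $m=3$) which $t$ works, and then prove the corresponding inequalities, which should be linear in $s$.
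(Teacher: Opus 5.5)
Your Steps 1 and 2 are correct, and Step 2 is exactly the paper's proof. The paper also writes $\gamma=\gamma_0+k(q-1)$ with $\gamma_0=M(q^2-2q-1)$, uses $C_{\gamma_0}=\{\gamma_0,M(q^2-3)\}$, and checks $M(q^2-3)<\gamma q-cn<n+\gamma$ for $1\le k<M$. Two small slips, both inherited from the proof of Theorem \ref{th9}. The middle term must be $\gamma q-cn$, with $c$ fixed by $\gamma_0q=M(q^2-3)+cn$, not $\gamma q \bmod n$. The denominator in the bound should be $\lambda(q-1)$, not $\lambda q$: $\frac{(c+1)n-\gamma_0(q-1)}{\lambda(q-1)}=\frac{M(q-1)^2}{(q-1)^2}=M$.

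The genuine gap is Step 3, which you leave as a plan. You are right that the class $\gamma\equiv q-4\pmod{q-1}$ also survives condition (1) of Theorem \ref{th4}, on $\bigl(M(q^2-2q-1),\frac{2q-5}{2}(q^m+1)\bigr]$, a range of length $\frac{q-3}{2}M$. Without it the proposition is not proved, and the paper's own proof treats only the class $q-3$. However, no anchoring coset is needed: a single multiplication by $q$ settles it. Write $\gamma=(q-4)(q^m+1)+s$ with $(q+3)M<s\le\frac{3}{2}(q+1)M$. Since $(q-4)(q^m+1)(q-1)=(q-4)n$, we get $\gamma q\equiv N\pmod n$, where $N=(q-4)(q^m+1)+qs$ and $(2q^2-4)M<N\le\frac{5q^2-3q-8}{2}M$. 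In particular $2n-2M<N<3n$.

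If $N<2n$, then $y=N-n\in(n-2M,n)$ lies above the centre $\frac{2q-5}{2}(q^m+1)$. Lemma \ref{lem2} then puts the following element into $C_\gamma$:
\[
(2q-5)(q^m+1)-y<(2q^2-3q-5)M-(q^2-3)M=(q^2-3q-2)M<\gamma .
\]
If $N\ge 2n$, then $y=N-2n\le\frac{q-4}{2}(q^m+1)<\gamma$. Either way $C_\gamma$ contains an element below $\gamma$, so $\gamma$ is not a leader.

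Your treatment of $q\le 4$ is also invalid as written. The interval has length $(q-1)M$, which grows with $m$, so it cannot be checked directly for all odd $m$. It is covered by the same two computations, reading residues in $\{1,\dots,\lambda\}$. For $q=3$, only even $\gamma$ (residue $\lambda=2\equiv q-3$) pass Theorem \ref{th4}(1). Step 2 then applies verbatim, since $(2q-4)(q^m+1)=n$ makes the reflection $x\mapsto n-x$. For $q=4$, the surviving classes are $1\equiv q-3$, handled by Step 2, and $3\equiv q-4$ on $\bigl(7M,\frac{15}{2}M\bigr]$. For the latter, the computation above goes through unchanged with $(2q-5)(q^m+1)=n$.
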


\begin{proof}
It is not difficult to see that an integer $\gamma$ satisfying $$\frac{q^m+1}{q+1}(q^2-2q-1)<\gamma<(q-2)(q^m+1)$$ 
is a coset leader if and only if one of the following conditions holds:
\begin{itemize}
\item[(1)] $q=3$ and $\gamma$ is even;
\item[(2)] $q=4$, $\gamma\equiv 0\pmod{3}$ or $\gamma\equiv 2\pmod{3}$ ($\gamma\equiv 0\pmod{3}$ can only in the range $\frac{7}{5}(4^m+1)+2\le \gamma\le \frac{3}{2}\cdot 4^m)$;
\item[(3)] $q>4$ and $\gamma\equiv q-4\pmod{q-1}$.
\end{itemize}
We only prove the case $q>4$, the cases $q=3$ and $q=4$ are similar.

Let $q>4$, for any $\gamma$ satisfying  $\frac{q^m+1}{q+1}(q^2-2q-1)<\gamma<(q-2)(q^m+1)$ and $\gamma\equiv q-3\pmod{q-1}$, we now show that $\overline{\gamma q}<\gamma$, which implies that $\gamma$ is not a coset leader. We write $\gamma=\frac{q^m+1}{q+1}(q^2-2q-1)+k(q-1)$ for some integer $k$ with $1\le k<\frac{q^m+1}{q+1}$. Assume that 
$$\frac{q^m+1}{q+1}(q^2-3)=\frac{q^m+1}{q+1}(q^2-2q-1)\cdot q-an,$$
where $a$ is a positive integer. It is straightforward to verify that
$$\frac{q^m+1}{q+1}(q^2-3)<\gamma q-an<n+\gamma.$$
This implies $\overline{\gamma q}<\gamma$, completing the proof of the proposition.
\end{proof}

\begin{theorem}\label{th11}
Let $n=\lambda(q^m+1)$, where $m\ge 3$ is odd and $\lambda\ge 2$. Then for $3\le \delta\le \delta_1+1$, $\mathcal{C}_{(q,n,\delta,0)}$ is a dually-BCH code if and only if 
\begin{align*}
\delta_2+2\le \delta\le \delta_1+1.
\end{align*}
\end{theorem}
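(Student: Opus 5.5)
Since $T^{\perp}=\mathbb{Z}_n\setminus T^{-1}$, the code $\mathcal{C}_{(q,n,\delta,0)}$ is dually-BCH exactly when $T^{-1}$ has a complement that is a union of consecutive cyclotomic cosets. The approach is to describe $T^{-1}$ using Lemma \ref{lem2}: reflecting a coset $C_\gamma$ with $\gamma\equiv a\pmod{\lambda}$ sends it to a coset whose elements are congruent to $-a$ modulo $\lambda$. So $T^{-1}$ is a union of cosets, and $\gamma\in T^{-1}$ if and only if $-\gamma\in T$, i.e. $C_{-\gamma}$ has coset leader at most $\delta-2$.

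\textbf{Sufficiency.} Take $\delta_2+2\le\delta\le\delta_1+1$. Then $T=C_0\cup\cdots\cup C_{\delta-2}$ contains every coset whose leader is at most $\delta_2$. Hence $T=\mathbb{Z}_n$ when $\delta=\delta_1+2$, and for $\delta$ in our range, $T=\mathbb{Z}_n\setminus C_{\delta_1}$. It follows that $T^{\perp}=C_{-\delta_1}$, a single coset. I will then check that $C_{-\delta_1}=C_{\delta_1}$ using Theorem \ref{th9}:
\begin{itemize}
\item if $\frac{q-1}{\lambda}$ is even, $\delta_1=\frac{2\lambda-1}{2}(q^m+1)$ is fixed by negation modulo $n$;
\item if $\frac{q-1}{\lambda}$ is odd, the explicit two-element coset is closed under negation, because the two elements sum to $(2\lambda-1)(q^m+1)+\ldots$, which I expect to be $\equiv 0\pmod n$ after simplification.
\end{itemize}
A single coset $C_{\delta_1}$ is trivially of the form $C_b$, so the dual is a BCH code with designed distance $2$.

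\textbf{Necessity.} Take $3\le\delta\le\delta_2+1$. Then $T^{-1}$ contains $C_0,\ldots$, and its complement $T^{\perp}$ contains $C_{-\delta_1}$ and $C_{-\delta_2}$, but not $0$ and not $\pm1$. The plan is to show that $T^{\perp}$ cannot be written as $C_b\cup C_{b+1}\cup\cdots\cup C_{b+\delta'-2}$, for which it suffices to exhibit one coset missing from a would-be consecutive run. Two facts drive this:
\begin{itemize}
\item By Lemma \ref{lem5} and the residue structure modulo $\lambda$, consecutive integers run through all residues modulo $\lambda$. Since $\lambda\ge 2$, a consecutive block of length at least $\lambda$ starting at some $b$ forces $T^{\perp}$ to contain cosets of every residue class.
\item $1\in T$ as soon as $\delta\ge3$. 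So $C_{-1}\not\subseteq T^{\perp}$, and $-1\equiv\lambda-1$ is the residue class of $\delta_1$. Symmetrically, $C_1\not\subseteq T^{\perp}$ when $C_{-1}\subseteq T$.
\end{itemize}
The argument then splits by the length of the candidate run:
\begin{itemize}
\item If the run has length $1$, then $T^{\perp}$ would be a single coset. This contradicts $T^{\perp}\supseteq C_{-\delta_1}\cup C_{-\delta_2}$, provided these two cosets are distinct. That holds because their leaders differ, using Theorem \ref{th14} and the fact that negation permutes coset leaders consistently with Lemma \ref{lem2}.
\item If the run has length at least $2$, it contains $b$ and $b+1$. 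I will show that any run containing both $-\delta_1$ and $-\delta_2$ must contain some integer $x$ with $-x\in T$, i.e. $x\in T^{-1}$. The natural candidate is $x=\pm1$ or a small multiple near $0$ modulo $n$. The key is that the elements of $C_{-\delta_1}$ and $C_{-\delta_2}$ are spread out: their images in $[0,n)$ include values near $\delta_1$-type positions and near $n-\delta_1$. So any interval modulo $n$ covering them either passes through $0$, or passes through a small coset leader such as $1$ or $\lambda$ reflected.
\end{itemize}

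\textbf{Main obstacle.} The delicate step is the last one: locating an explicit element of $T^{-1}$ inside every candidate consecutive run. I would handle it by choosing a concrete representative pair, such as $\delta_1$ and $(2\lambda-1)(q^m+1)-\delta_1$ (and similarly for $\delta_2$), and computing the gap between them. I would then argue that this gap contains $(\lambda-1)(q^m+1)+1$ or its negative. That element is a coset leader by Lemma \ref{lem5}, it is less than $\delta_2$ (so it lies in $T$ for $\delta$ large), and for small $\delta$ I would instead use $1$, whose coset has size $2m$ with elements $\pm q^i$ spread across $[0,n)$. Whether a single witness suffices for all $\delta$ in $[3,\delta_2+1]$ is where I expect the case analysis to become fiddly, possibly requiring separate treatment of the parity of $\frac{q-1}{\lambda}$.
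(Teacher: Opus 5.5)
Your sufficiency argument is correct and is all that direction needs. For $\delta_2+2\le\delta\le\delta_1+1$ one has $T=\mathbb{Z}_n\setminus C_{\delta_1}$, so $T^{\perp}=C_{-\delta_1}$ is a single coset, namely the defining set of $\mathcal{C}_{(q,n,2,n-\delta_1)}$; the paper leaves this direction implicit. Drop the side claim $C_{-\delta_1}=C_{\delta_1}$, though. By Lemma \ref{lem1} it is false for every $\lambda\ge 2$, because $\delta_1\equiv\lambda-1\not\equiv 0\pmod{\lambda}$. In the even case $-\delta_1\equiv\frac{q^m+1}{2}$. In the odd case the two elements of $C_{\delta_1}$ sum to $(2\lambda-1)(q^m+1)\equiv(\lambda-1)(q^m+1)\not\equiv 0\pmod{n}$.

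The genuine gap is in necessity. Your central claim is that every run of consecutive integers meeting both $C_{-\delta_1}$ and $C_{-\delta_2}$ must meet $T^{-1}$. This is false for small $\delta$. Take $q=3$, $\lambda=2$, $m=3$, so $n=56$, $\delta_1=35$, $\delta_2=29$, and let $\delta=3$. Then $T^{-1}=\{0\}\cup C_{55}=\{0,29,31,37,47,53,55\}$, while $C_{-\delta_1}=\{7,21\}$ and $C_{-\delta_2}=C_{27}=C_1\ni 9$. So the run $\{7,8,9\}$ meets both cosets and avoids $T^{-1}$.

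Your planned witnesses also rest on misidentified cosets:
\begin{itemize}
\item For $\lambda\ge 2$ one has $C_1=\{q^j,\ q^m+1-q^j : 0\le j<m\}\subseteq[1,q^m]$, not $\{\pm q^i\}$. The elements $-q^j$ form the different coset $C_{-1}$, whose leader is $n-q^m=(\lambda-1)(q^m+1)+1$.
\item Hence $n-q^m$ lies in $T^{-1}$ for \emph{every} $\delta\ge 3$, not only for large $\delta$.
\item It does not lie between $\delta_1$ and $(2\lambda-1)(q^m+1)-\delta_1$, since both of these exceed it.
\item Likewise $1\in T^{\perp}$ whenever $\delta-2<n-q^m$, contrary to your ``not $\pm1$''.
\end{itemize}

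What is missing is a split of $[3,\delta_2+1]$ according to which cosets lie in $T^{\perp}$, with a separate obstruction in each range. First note that $0\in T^{-1}$, so an admissible run cannot wrap around modulo $n$.
\begin{itemize}
\item[(i)] While $\delta-2$ is below the leader of $C_{-\delta_1}$, both $C_1\subseteq[1,q^m]$ and $C_{\delta_1}\subseteq(n-q^m,n)$ lie in $T^{\perp}$. The element $n-q^m\in C_{-1}\subseteq T^{-1}$ separates them. This is the paper's first subcase.
\item[(ii)] Once $\delta-2\ge(\lambda-1)(q^m+1)$, every coset leader outside $T$ is $\equiv\lambda-1\pmod{\lambda}$ by Theorem \ref{th4}(1). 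So $T^{\perp}$ lies in one residue class yet contains the two cosets $C_{-\delta_1}\ne C_{-\delta_2}$. Any run with $\delta'\ge 3$ meets two classes, so no run works. This is your residue observation, but it is only available in this range: for $\delta\le(\lambda-1)(q^m+1)+1$ the singleton coset $\{q^m+1\}$, with $q^m+1\equiv 2\pmod{\lambda}$, also lies in $T^{\perp}$.
\item[(iii)] The intermediate range, from the leader of $C_{-\delta_1}$ plus $2$ up to $(\lambda-1)(q^m+1)+1$, needs yet another separating pair. For $\lambda<q-1$ use $\{q^m+1\}$ and $C_{-\delta_1}$, both in $T^{\perp}$. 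Every element of $C_{-\delta_1}$ is below $(q-1)q^{m-1}<q^m+1$. Moreover $(q-1)q^{m-1}\in C_{q-1}\subseteq T^{-1}$, because $C_{q-1}$ is self-reciprocal ($\lambda\mid q-1$) and $q-1\le\delta-2$ here. For $\lambda=q-1$ this fails, since $\frac{q(q^m+1)}{q+1}\in C_{-\delta_1}$ exceeds $(q-1)q^{m-1}$; the paper works with $C_1$ and $C_{2\frac{q^m+1}{q+1}}$ instead.
\end{itemize}
Your proposal gestures at the witness of (i) but pairs it with the wrong cosets, and it offers nothing for (iii), which is where most of the work lies.
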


\begin{proof}
We split the proof into two cases depending on whether $\frac{q-1}{\lambda}$ is odd or even.
\begin{itemize}
\item[$\mathbf{Case~1}:$] If $\frac{q-1}{\lambda}$ is even. Notice that $q$ is must power of an odd prime and $\lambda\le \frac{q-1}{2}$. Reviewing that $C_{\delta_1}=\{\frac{2\lambda-1}{2}(q^m+1)\}$, $C_{\delta_2}=\{\frac{q^m+1}{q+1}(\lambda q-\frac{q+1}{2}),\frac{q^m+1}{q+1}(\lambda q+2\lambda-\frac{q+1}{2})\}$, it follows that $C_{n-\delta_1}=\{\frac{q^m+1}{2}\}$, $C_{n-\delta_2}=C_{\frac{q^m+1}{q+1}(\frac{q+1}{2}-\lambda)}=\{\frac{q^m+1}{q+1}(\frac{q+1}{2}-\lambda), \frac{q^m+1}{q+1}(\frac{q+1}{2}+\lambda)\}$.

When $3\le \delta\le \frac{q^m+1}{2}$. We have $C_1, C_{\delta_1}\subset T^{\perp}$ and $(\lambda-1)(q^m+1)+1\not\in T^{\perp}$. Since the maxmum element in $C_1$ is $q^m$ and 
$$q^m<(\lambda-1)(q^m+1)+1<\delta_1,$$
then we can conclude that $\mathcal{C}_{(q,n,\delta,0)}$ is not a dually-BCH code.

When $\frac{q^m+1}{2}\le \delta\le \delta_2+1$. Consider the $q$-cyclotomic coset $C_{\frac{q-1+2\lambda}{2}}=C_{n-\frac{q-1+2\lambda}{2}}\nsubseteq T^{\perp}$, we have 
$$\frac{q^m+1}{2}<\frac{q-1+2\lambda}{2}q^{m-1}<\frac{q^m+1}{q+1}(\frac{q+1}{2}+\lambda).$$
Consider the $q$-cyclotomic coset $C_{\frac{(q-1)(q^m+1)}{2(q+1)}}=C_{n-\frac{(q-1)(q^m+1)}{2(q+1)}}$, we have 
$$\frac{q^m+1}{q+1}(\frac{q-1}{2}-\lambda)<\frac{(q-1)(q^m+1)}{2(q+1)}<\frac{q^m+1}{2}.$$
It implies that $\mathcal{C}_{(q,n,\delta,0)}$ is not a dually-BCH code.

\item[$\mathbf{Case~2}:$] If $\frac{q-1}{\lambda}$ is odd, we split the proof into two subcases:  $1<\lambda<q-1$ and $\lambda=q-1$. Reviewing that $C_{\delta_1}=\{\frac{q^m+1}{q+1}(\frac{2\lambda q+\lambda-q-1}{2}), \frac{q^m+1}{q+1}(\frac{2\lambda q+3\lambda-q-1}{2})\}$, it follows that $C_{n-\delta_1}=\{\frac{q^m+1}{q+1}(\frac{q+1-\lambda}{2}), \frac{q^m+1}{q+1}(\frac{q+1+\lambda}{2})\}$.
\begin{itemize}
\item[$\mathbf{Case~2.1}:$] $1<\lambda<q-1$. When $3\le \delta\le \frac{q^m+1}{q+1}(\frac{q+1-\lambda}{2})+1$, we have $C_1, C_{\delta_1}\subset T^{\perp}$ and $(\lambda-1)(q^m+1)+1\not\in T^{\perp}$. Since the maxmum element in $C_1$ is $q^m$ and 
$$q^m<(\lambda-1)(q^m+1)+1<\delta_1, $$ 
then we can conclude that $\mathcal{C}_{(q,n\delta,0)}$ is not a dually-BCH code.

When $\frac{q^m+1}{q+1}(\frac{q+1-\lambda}{2})+2\le \delta\le (\lambda-1)(q^m+1)+1$, we have $C_{q-1}\nsubseteq T^{\perp}$, $C_{q^m+1}$, $C_{n-\delta_1}\subseteq T^{\perp}$. According to 
$$\frac{q^m+1}{q+1}(\frac{q+1+\lambda}{2})<(q-1)q^{m-1}<q^m+1,$$
then we have $\mathcal{C}_{(q,n,\delta,0)}$ is not a dually-BCH code.

When $(\lambda-1)(q^m+1)+2\le \delta\le \delta_2+1$. Since the coset leaders $\gamma$ satisfying $\gamma>(\lambda-1)(q^m+1)$ all satisfy $\gamma\equiv \lambda-1\pmod{\lambda}$, then according to $T^{\perp}={\mathbb{Z}_n\setminus T}^{-1}$, we have all elements $\gamma^{\prime}\in T^{\perp}$ satisfies $\lambda-1\pmod{\lambda}$ and $\mid T^{\perp}\mid>2$, which implies that $\mathcal{C}_{(q,n,\delta,0)}$ is not a dually-BCH code.
\item[$\mathbf{Case~2.2}:$] $\lambda=q-1$. When $3\le \delta\le \frac{q^m+1}{q+1}+1$, we have $C_1,C_{\delta_1}\subset T^{\perp}$ and $(q-2)(q^m+1)+1\not\in T^{\perp}$. Since tha maxmum element in $C_1$ is $q^m$ and 
$$q^m<(q-2)(q^m+1)+1<\delta_1,$$
then we can conclude that $\mathcal{C}_{(q,n,\delta,0)}$ is not a dually-BCH code.

When $\frac{q^m+1}{q+1}+1\le \delta\le \frac{q^m+1}{q+1}(\frac{q^2-2q-1}{2})+1$, we have $C_1, C_{2\frac{q^m+1}{q+1}}\subseteq T^{\perp}$, $C_{2(q-1)}\nsubseteq T^{\perp}$, notide that $C_{2\frac{q^m+1}{q+1}}=\{2\frac{q^m+1}{q+1}, 2q\frac{q^m+1}{q+1}\}$. The smallest integer in $C_1$ that is larger than $2\frac{q^m+1}{q+1}$ is $(q-1)q^{m-1}+1$, the largest integer in $C_1$ that is smaller than $2\frac{q^m+1}{q+1}$ is $q^{m-1}$. For $2\frac{q^m+1}{q+1}$, we have 
$$q^{m-1}<2(q-1)q^{m-2}<2\frac{q^m+1}{q+1}$$ 
and 
$$2\frac{q^m+1}{q+1}<(q-1)q^{m-1}<(q-1)q^{m-1}+1.$$
For $2q\frac{q^m+1}{q+1}$, we have 
$$q^m<2(q-1)q^{m-1}<2q\frac{q^m+1}{q+1}.$$
From this, we can conclude that $\mathcal{C}_{(q,n,\delta,0)}$ is not a dually-BCH code.

When $\frac{q^m+1}{q+1}(q^2-2q-1)+2\le \delta\delta_2+1$. Since the coset leaders $\gamma$ satisfying $\gamma>(\lambda-1)(q^m+1)$ all satisfy $\gamma\equiv \lambda-1\pmod{\lambda}$, then according to $T^{\perp}={\mathbb{Z}_n\setminus T}^{-1}$, we have all elements $\gamma^{\prime}\in T^{\perp}$ satisfies $\lambda-1\pmod{\lambda}$ and $\mid T^{\perp}\mid>2$, which implies that $\mathcal{C}_{(q,n,\delta,0)}$ is not a dually-BCH code.
\end{itemize}
\end{itemize}
This completes the proof of the theorem.
\end{proof}

\section{The Enumeration of LCD cyclic codes of length $ \lambda(q^{m}+1)$ over $\mathbb{F}_q$}
This section is devoted to the enumeration of all the LCD cyclic codes of length $n = \lambda(q^{m}+1)$ over $\mathbb{F}_{q}$, where $\lambda \mid q-1$. As a preparation, in Subsection $3.1$ we present a formula counting the number of the $q$-cyclotomic cosets modulo an arbitrary positive integer with a fixed size. In Subsection $3.2$, we give the exact enumeration of the LCD cyclic codes of length $n$, based on the results about $q$-cyclotomic cosets modulo $n$.

\subsection{Enumeration of cyclotomic cosets}
In this subsection we let $n$ be a positive integer coprime to $q$. For a positive integer $\tau$, we denote by $N_{\tau}$ the number of the $q$-cyclotomic cosets modulo $n$ with size $\tau$. It is clear that there are only finitely many $\tau$ such that $N_{\tau} > 0$, which are given by the lemma below.

\begin{lemma}\label{lem4}
	The values of $\tau$ for which $N_{\tau} > 0$ are
	$$\tau = \mathrm{ord}_{d}(q),$$
	where $d$ ranges over the positive factor of $n$.
\end{lemma}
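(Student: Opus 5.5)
The plan is to compute, for each $\gamma\in\mathbb{Z}/n\mathbb{Z}$, the size of $C_\gamma$ from the order of $\gamma$ in the additive group, and then check that every divisor of $n$ is actually realised as such an order.

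First fix $\gamma$ with $0\le\gamma<n$. Put $d=n/\gcd(n,\gamma)$, which is the additive order of $\gamma$ in $\mathbb{Z}/n\mathbb{Z}$. The size $\tau$ of $C_\gamma$ is the least positive integer with $\gamma q^{\tau}\equiv\gamma\pmod n$. This congruence is equivalent to $n\mid\gamma(q^{\tau}-1)$. Dividing through by $\gcd(n,\gamma)$, it is in turn equivalent to
\[
d \mid \tfrac{\gamma}{\gcd(n,\gamma)}\,(q^{\tau}-1).
\]
Since $\gamma/\gcd(n,\gamma)$ is coprime to $d$, this reduces to $q^{\tau}\equiv1\pmod d$. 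We have $\gcd(q,d)=1$ because $d\mid n$ and $\gcd(q,n)=1$. Hence $\tau=\mathrm{ord}_d(q)$, where $d$ is a positive divisor of $n$. So every $\tau$ with $N_\tau>0$ has the stated form. (For $\gamma=0$ we get $d=1$ and $\mathrm{ord}_1(q)=1$, which is consistent.)

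For the converse, let $d$ be any positive divisor of $n$ and take $\gamma=n/d$. Then $\gcd(n,\gamma)=n/d$, so the additive order of $\gamma$ is exactly $d$. By the computation above, $|C_{n/d}|=\mathrm{ord}_d(q)$, and therefore $N_{\mathrm{ord}_d(q)}\ge1$.

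Combining the two directions, the set of $\tau$ with $N_\tau>0$ is exactly $\{\mathrm{ord}_d(q) : d\mid n,\ d>0\}$.

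There is no real obstacle here. The only point that needs care is the cancellation step, which uses the coprimality of $\gamma/\gcd(n,\gamma)$ and $d$.
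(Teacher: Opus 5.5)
Your proposal is correct and follows essentially the same route as the paper's proof: you identify $|C_\gamma|=\mathrm{ord}_d(q)$ with $d=n/\gcd(n,\gamma)$, and you realise every divisor $d$ via $\gamma=n/d$. You also supply the cancellation step and the coprimality $\gcd(q,d)=1$, which the paper leaves implicit.
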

	
\begin{proof}
	Let $\gamma \in \mathbb{Z}/n\mathbb{Z}$, and $c_{n/q}(\gamma)$ be the $q$-cyclotomic coset modulo $n$ containing $\gamma$. The size $\tau$ of $c_{n/q}(\gamma)$ is the smallest  positive integer such that 
	$$\gamma q^{\tau} \equiv \gamma \pmod{n}.$$
	Equivalently, $\tau = \mathrm{ord}_{d}(q)$ where $d = \frac{n}{\mathrm{gcd}(\gamma,n)}$.
	
	Conversely, for any positive factor $d$ of $n$, it is easy to check that $\mathrm{ord}_{d}(q)$ is the size of the coset $c_{n/q}(\frac{n}{d})$.
\end{proof}

From now on, we assume that $\tau$ is of the form $\tau = \mathrm{ord}_{d}(q)$, where $d$ is a positive factor of $n$. The number $N_{\tau}$ of the $q$-cyclotomic cosets modulo $n$ with size $\tau$ is given in Theorem \ref{thm 1}. Recall that the M\"{o}bius function is defined as
\begin{equation*}
	\mu(m)=	\left\{
	\begin{array}{lcl}
		1, \ \mathrm{if} \ m=1;\\
		(-1)^{r}, \ \mathrm{if} \ m=p_{1}\cdots p_{r} \ \mathrm{where} \ p_{1},\cdots,p_{r} \ \mathrm{are} \ \mathrm{distinct} \ \mathrm{primes};\\
		0, \ \mathrm{otherwise}.
	\end{array} \right.
\end{equation*}

\begin{theorem}\label{thm 1}
	Let the notations be defined as above. Then 
	$$N_{\tau}=\frac{1}{\tau}\sum\limits_{\epsilon\mid \tau}\mu(\epsilon)\cdot \mathrm{gcd}(n,q^{\frac{\tau}{\epsilon}}-1).$$
\end{theorem}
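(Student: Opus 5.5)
We prove Theorem \ref{thm 1} by a Möbius inversion argument on the set of elements of $\mathbb{Z}/n\mathbb{Z}$ fixed by multiplication by powers of $q$. For each positive integer $s$, let
$$F(s)=\#\{\gamma\in\mathbb{Z}/n\mathbb{Z} \mid \gamma q^{s}\equiv \gamma \pmod{n}\}.$$
The plan is to compute $F(s)$ in two ways and then invert.

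First, a direct count. The condition $\gamma(q^{s}-1)\equiv 0 \pmod{n}$ describes the kernel of multiplication by $q^{s}-1$ on the cyclic group $\mathbb{Z}/n\mathbb{Z}$. This kernel has exactly $\gcd(n,q^{s}-1)$ elements. Hence $F(s)=\gcd(n,q^{s}-1)$.

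Second, a count by coset sizes. An element $\gamma$ satisfies $\gamma q^{s}\equiv\gamma \pmod{n}$ if and only if the size $\tau_\gamma$ of its coset $C_\gamma$ divides $s$. This holds because $\tau_\gamma$ is the multiplicative order governing the stabilizer, exactly as in the proof of Lemma \ref{lem4}, where $\tau_\gamma=\mathrm{ord}_{d}(q)$ with $d=n/\gcd(\gamma,n)$. The cosets partition $\mathbb{Z}/n\mathbb{Z}$, and each coset of size $\tau'$ contributes $\tau'$ elements. Therefore
$$F(s)=\sum_{\tau'\mid s}\tau' N_{\tau'}.$$

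Combining the two expressions gives $\sum_{\tau'\mid s}\tau' N_{\tau'}=\gcd(n,q^{s}-1)$ for all $s\ge 1$. Set $g(\tau')=\tau' N_{\tau'}$; this is defined for all positive integers and equals zero unless $\tau'$ is of the form in Lemma \ref{lem4}. Möbius inversion on the divisor lattice then yields
$$\tau N_{\tau}=\sum_{\epsilon\mid\tau}\mu(\epsilon)\gcd(n,q^{\tau/\epsilon}-1).$$
Dividing by $\tau$ gives the claimed formula.

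The only step needing care is the equivalence "$\gamma q^s\equiv\gamma$ iff $\tau_\gamma\mid s$". It follows from the fact that the set of $s$ with $\gamma q^s\equiv\gamma \pmod n$ is closed under subtraction of nonnegative differences, since $q$ is invertible modulo $n$. That set is therefore the set of multiples of its least positive element $\tau_\gamma$. With this in place the rest is the standard inversion, so I expect no real obstacle.
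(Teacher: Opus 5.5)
Your proposal is correct and takes essentially the same route as the paper. Both count the elements fixed by multiplication by $q^{\tau}$ as $\gcd(n,q^{\tau}-1)$, remove those with smaller coset size by M\"{o}bius inversion, and divide by $\tau$; the paper phrases these steps as $\sum_{d}\varphi(d)$ over the divisors of $\gcd(n,q^{\tau}-1)$ and as explicit inclusion--exclusion over the prime divisors of $\tau$, and your argument that $\gamma q^{s}\equiv\gamma \pmod{n}$ iff $|C_\gamma|\mid s$ makes explicit a step the paper leaves implicit.
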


\begin{proof}
	We first count the elements $\gamma \in \mathbb{Z}/n\mathbb{Z}$ such that $|c_{n/q}(\gamma)| = \tau$. A $q$-cyclotomic coset $c_{n/q}(\gamma)$ modulo $n$ has size dividing $\tau$ if and only if 
	$$\gamma q^{\tau} \equiv \gamma \pmod{n},$$
	which is equivalent to
	\begin{equation}\label{eq 2}
		\frac{n}{\mathrm{gcd}(n,\gamma)} \mid \mathrm{gcd}(n,q^{\tau}-1).
	\end{equation}
	On the other hand, for any positive integer $d \mid n$, the number of the integers $0 \leq \gamma \leq n-1$ such that $\mathrm{gcd}(\gamma,n)=d$ is $\varphi(\frac{n}{d})$. Therefore the number of the elements $\gamma \in \mathbb{Z}/n\mathbb{Z}$ satisfying \eqref{eq 2} is
	$$\sum_{\frac{n}{d} \mid \mathrm{gcd}(n,q^{\tau}-1)}\varphi(\frac{n}{d}) = \sum_{d \mid \mathrm{gcd}(n,q^{\tau}-1)}\varphi(d) = \mathrm{gcd}(n,q^{\tau}-1).$$
	Furthermore, assume that $\mathcal{P} = \{p_{1},\cdots,p_{s}\}$ is the set of all distinct prime factors of $\tau$. For any nonempty set $\mathcal{U} \subseteq \mathcal{P}$, set 
	$$\tau_{\mathcal{U}} = \frac{\tau}{\prod\limits_{u \in \mathcal{U}}u}.$$
	By the same argument as above, the number of the elements $\gamma \in \mathbb{Z}/n\mathbb{Z}$ such that $|c_{n/q}(\gamma)| \mid \tau_{\mathcal{U}}$ is $\mathrm{gcd}(n,q^{\tau_{\mathcal{U}}}-1)$. Hence applying the inclusion-exclusion formula yields that the number of the $\gamma \in \mathbb{Z}/n\mathbb{Z}$ such that $|c_{n/q}(\gamma)| = \tau$ is given by
    \begin{align*}\label{eq 3}
		&\mathrm{gcd}(n,q^{\tau}-1) - \sum_{\substack{\mathcal{U}\subset \mathcal{P}\\\mid \mathcal{U}\mid=1}}\mathrm{gcd}(n,q^{\tau_{\mathcal{U}}}-1) + \sum_{\substack{\mathcal{U}\subset \mathcal{P}\\\mid \mathcal{U}\mid=2}}\mathrm{gcd}(n,q^{\tau_{\mathcal{U}}}-1) - \cdots + (-1)^{s}\mathrm{gcd}(n,q^{\tau_{\mathcal{P}}}-1)\\
		&= \sum_{\epsilon \mid \tau} \mu(\epsilon)\cdot \mathrm{gcd}(n,q^{\frac{\tau}{\epsilon}}-1).
	\end{align*}

    Notice that different $q$-cyclotomic cosets modulo $n$ are disjoint. Then the number of the $q$-cyclotomic cosets modulo $n$ with size $\tau$ is
    $$N_{\tau} = \frac{1}{\tau}\sum_{\epsilon \mid \tau} \mu(\epsilon)\cdot \mathrm{gcd}(n,q^{\frac{\tau}{\epsilon}}-1).$$
\end{proof}

In the following part of this subsection, we focus on the case that $n = \lambda(q^{m}+1)$, where $\lambda$ is a divisor of $q-1$.

\begin{theorem}\label{th2}
	Let $n = \lambda(q^{m}+1)$, where $\lambda \mid q-1$ and $m$ is a positive integer. Write $m = 2^{v}m_{0}$, where $v = v_{2}(m) \geq 0$, $2\nmid m_0$, and $h = \mathrm{gcd}(2,\frac{q-1}{\lambda})$.
	\begin{itemize}
		\item[(1)]
		If both $q$ and $m$ are odd, the possible values of $\tau$ for which $N_{\tau} > 0$ are $\tau =1$, $2$, or $2\tau_0$, where $\tau_{0} > 1$ is a divisor of $m_{0}$. Moreover, we have
		\begin{equation*}
			N_{\tau}=	\left\{
			\begin{array}{lcl}
				\lambda h, \ \tau=1;\\
				\frac{\lambda}{2}(q-h+1), \ \tau=2;\\
				\frac{\lambda}{\tau}\sum\limits_{\epsilon\mid \tau_0}\mu(\epsilon)\cdot q^{\frac{\tau}{2\epsilon}}, \ \tau=2\tau_0 \ \mathrm{for} \ 1<\tau_0\mid m_0 .
			\end{array} \right.
		\end{equation*}
		\item[(2)]
		If $q$ is odd and $m$ is even, the possible values of $\tau$ for which $N_{\tau} > 0$ are $\tau =1$, $2$, $2^{v+1}$, or $2^{v+1}\tau_0$, where $\tau_{0} > 1$ is a divisor of $m_{0}$. Moreover, we have
		\begin{equation*}
			N_{\tau}=	\left\{
			\begin{array}{lcl}
				\lambda h, \ \tau=1;\\
				\frac{\lambda}{2}(2-h), \ \tau=2;\\
				\frac{\lambda}{\tau}(q^{\frac{\tau}{2}}-1), \ \tau=2^{v+1};\\
				\frac{\lambda}{\tau}\sum\limits_{\epsilon\mid \tau_0}\mu(\epsilon)\cdot q^{\frac{\tau}{2\epsilon}}, \ \tau=2^{v+1}\tau_0 \ \mathrm{for} \ 1<\tau_0\mid m_0.
			\end{array} \right.
		\end{equation*}
		\item[(3)]
		If $q$ is a power of $2$, the possible values of $\tau$ for which $N_{\tau} > 0$ are $\tau =1$ or $2^{v+1}\tau_0$, where $\tau_{0} > 1$ is a divisor of $m_{0}$. Moreover, we have
		\begin{equation*}
			N_{\tau}=	\left\{
			\begin{array}{lcl}
				\lambda , \ \tau=1;\\
				\frac{\lambda}{\tau}\sum\limits_{\epsilon\mid \tau_0}\mu(\epsilon)\cdot q^{\frac{\tau}{2\epsilon}}, \ \tau=2^{v+1}\tau_0 \ \mathrm{for} \ \tau_0\mid m_0.
			\end{array} \right.
		\end{equation*}
	\end{itemize} 
\end{theorem}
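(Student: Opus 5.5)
The plan is to apply Theorem \ref{thm 1} directly, with $n=\lambda(q^m+1)$. The only arithmetic input needed is the value of $\gcd(n,q^j-1)$ for every $j\ge1$. The possible sizes $\tau$ come from Lemma \ref{lem4}. The order of $q$ modulo $q^m+1$ is $2m$, and $\lambda\mid q-1$ gives $\mathrm{ord}_\lambda(q)=1$. So every coset size divides $2m$; moreover, a divisor $d\mid n$ has $\mathrm{ord}_d(q)=j$ only when $d\mid\gcd(n,q^j-1)$. Hence we need
$$g(j):=\gcd\bigl(\lambda(q^m+1),q^j-1\bigr)$$
for $j\mid 2m$.

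First we compute $\gcd(q^m+1,q^j-1)$. We use $\gcd(q^a-1,q^b-1)=q^{\gcd(a,b)}-1$ together with $q^m+1=(q^{2m}-1)/(q^m-1)$. If $v_2(j)\le v_2(m)$, i.e.\ $2m/j$ is even, then $q^j-1\mid q^m-1$, so $q^m\equiv1$ and the gcd is $\gcd(2,q^j-1)$: this is $2$ for odd $q$ and $1$ for even $q$. If $v_2(j)=v=v_2(m)+1$, i.e.\ $j=2^{v+1}\tau_0$ with $\tau_0\mid m_0$, then $q^{j/2}\equiv-1 \pmod{q^m+1}$ after reduction, and the gcd equals $q^{j/2}+1$.

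Next we restore the factor $\lambda$. Since $\lambda\mid q-1\mid q^j-1$, we need the $2$-adic contribution precisely. This is where the lifting-the-exponent lemma and $h=\gcd(2,\frac{q-1}{\lambda})$ enter. For odd $q$ we get $g(j)=2\lambda$ or $\lambda$ according as $\frac{q^j-1}{\lambda}$ is even or odd. In the first regime this gives $g(1)=\lambda h$, and $g(j)=2\lambda$ when $j$ is even or when $h=2$. In the second regime, with odd $q$, $g(j)=\lambda(q^{j/2}+1)$. We check this by computing $v_2$ of $\lambda(q^m+1)$ and of $q^j-1$ via Lemma (LTE), splitting into $q\equiv1$ or $3\pmod 4$. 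The worry is whether the extra factor $2$ from $q^{j/2}+1$ collides with that of $\lambda$. I would handle this using the observation that $\gcd(\lambda(q^m+1),q^j-1)$ divides $\lambda\cdot\gcd(q^m+1,q^j-1)\cdot\gcd(\lambda,\cdot)$, and verify it carefully. For even $q$ everything is odd, and $g(j)=\lambda$ or $\lambda(q^{j/2}+1)$.

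Finally we substitute into $N_\tau=\frac1\tau\sum_{\epsilon\mid\tau}\mu(\epsilon)g(\tau/\epsilon)$ and case-split.

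\textbf{$\tau=1$.} We get $N_1=g(1)$.

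\textbf{$\tau=2$.} We get $N_2=\frac12(g(2)-g(1))$. For odd $m$ this is $g(2)=\lambda(q+1)$, giving $\frac\lambda2(q+1-h)$. For even $m$ this is $g(2)=2\lambda$, giving $\frac\lambda2(2-h)$.

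\textbf{$\tau=2^{v+1}$, $m$ even.} Here $\mu(2)$ pairs $\lambda(q^{\tau/2}+1)$ against $g(2^{v})=2\lambda$, giving $\frac\lambda\tau(q^{\tau/2}-1)$.

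\textbf{$\tau=2^{v+1}\tau_0$, $\tau_0>1$.} Only $\epsilon$ with $\mu(\epsilon)\ne0$ matter. Write $\epsilon=\epsilon_0$ or $2\epsilon_0$ with $\epsilon_0\mid\tau_0$. The terms with $\epsilon=2\epsilon_0$ have $g=2\lambda$ (or $\lambda$) independent of $\epsilon_0$, so they cancel, because $\sum_{\epsilon_0\mid\tau_0}\mu(\epsilon_0)=0$. The terms with $\epsilon=\epsilon_0$ give $\lambda(q^{\tau/2\epsilon_0}+1)$; the constant parts cancel the same way, leaving $\frac\lambda\tau\sum\mu(\epsilon)q^{\tau/2\epsilon}$. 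For even $q$ with $\tau_0=1$, the same computation gives $\frac\lambda\tau(q^{\tau/2}+1-1)$, consistent with the stated formula.

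Sizes not of these forms get $N_\tau=0$: either all relevant $g$ values are equal and cancel, or $\mathrm{ord}$ is excluded by the evenness of coset sizes shown in Section 3. The main obstacle is the exact $2$-adic bookkeeping in $g(j)$ for odd $q$.
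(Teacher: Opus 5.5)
Your proposal is correct and follows essentially the paper's route: you apply Theorem \ref{thm 1}, evaluate $\gcd(n,q^j-1)$ separately for $v_2(j)\le v$ and $v_2(j)=v+1$ (using LTE for the $2$-adic part), and let the Möbius sums over $\epsilon_0$ and $2\epsilon_0$ cancel the constant terms. Two small points. The $2$-adic ``worry'' you flag goes away once you write, as the paper does, $\gcd(n,q^j-1)=\lambda\gcd\bigl(q^m+1,\frac{q^j-1}{\lambda}\bigr)$ and note that $\lambda\mid q^{j/2}-1$, so $q^{j/2}+1$ divides $\frac{q^j-1}{\lambda}$. Also, for odd $q$, even $m$ and $h=1$, the vanishing of $N_{2\tau_0}$ relies on that same $\epsilon_0$ versus $2\epsilon_0$ grouping, not on all $g$-values being equal, since they are $2\lambda$ and $\lambda$ there.
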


\begin{proof}
 Applying Lemma \ref{lem4}, we know that all possible values of $\tau$ are divisors of $\mathrm{ord}_n(q)=2m$. First, we assume that $q$ is odd, then we treat the following cases seperately. 

\begin{itemize}
\item[$\mathbf{Case~1: }$]For $\tau=1$, it is straightforward to obtain
\begin{align*}
N_{1}&=\mathrm{gcd}(n,q-1)\\
&=\lambda\cdot \mathrm{gcd}(q^m+1,\frac{q-1}{\lambda})=\lambda h.
\end{align*}

\item[$\mathbf{Case~2:}$]

For $1<\tau\mid m$, assume $\tau=2^u\cdot \tau_0$, where $u=v_2(\tau)\ge 0$ and $\mathrm{gcd}(2,\tau_0)=1$, then 
\begin{align*}
N_{\tau}&=\frac{1}{\tau}\sum\limits_{\epsilon\mid \tau}\mu(\epsilon)\cdot \mathrm{gcd}(n, q^{\frac{\tau}{\epsilon}}-1)\\
&=\frac{\lambda}{\tau}\sum\limits_{\epsilon\mid \tau}\mu(\epsilon)\cdot \mathrm{gcd}(q^m+1,\frac{q^{\frac{\tau}{\epsilon}}-1}{\lambda})\\
&=\frac{\lambda}{\tau}\sum\limits_{\epsilon\mid \tau}\mu(\epsilon)\cdot \mathrm{gcd}(2,\frac{q^{\frac{\tau}{\epsilon}}-1}{\lambda}).
\end{align*}

\begin{itemize}
\item[$\mathbf{Case~2.1:}$]

If $\tau>1$ is odd, for any $\epsilon\mid \tau$, $\frac{\tau}{\epsilon}$ is odd, then the lifting-the-exponent lemmas imply that
\begin{align*}
N_{\tau}&=\frac{\lambda}{\tau}\sum\limits_{\epsilon\mid \tau}\mu(\epsilon)\cdot \mathrm{gcd}(2,\frac{q-1}{\lambda})\\
&=\frac{\lambda h}{\tau}\sum\limits_{\epsilon\mid \tau}\mu(\epsilon)=0.
\end{align*}

\item[$\mathbf{Case~2.2:}$]
If $4\mid \tau$, then
\begin{align*}
N_{\tau}&=\frac{\lambda}{\tau}(\sum\limits_{\epsilon\mid \tau_0}\mu(\epsilon)+\sum\limits_{\substack{\epsilon\mid 2\tau_0\\ 2\mid \epsilon}}\mu(\epsilon))\cdot \mathrm{gcd}(2,\frac{q^{\frac{\tau}{\epsilon}}-1}{\lambda})\\
&=\frac{\lambda}{\tau}\sum\limits_{\epsilon\mid \tau_0}\mu(\epsilon)\cdot (\mathrm{gcd}(2,\frac{q^{\frac{\tau}{\epsilon}}-1}{\lambda})-\mathrm{gcd}(2,\frac{q^{\frac{\tau}{2\epsilon}}-1}{\lambda})).
\end{align*}
Remembering that for any $\epsilon\mid \tau_0$, $\frac{\tau}{\epsilon}$ and $\frac{\tau}{2\epsilon}$ are even, then the lifting-the-exponent lemmas imply that 
\begin{align*}
N_{\tau}=\frac{\lambda}{\tau}\sum\limits_{\epsilon\mid \tau_0}\mu(\epsilon)\cdot (2-2)=0.
\end{align*}

\item[$\mathbf{Case~2.3:}$]
If $\tau=2\tau_0$, then

\begin{align*}
N_{\tau}&=\frac{\lambda}{\tau}\sum\limits_{\epsilon\mid \tau_0}\mu(\epsilon)\cdot (\mathrm{gcd}(2,\frac{q^{\frac{\tau}{\epsilon}}-1}{\lambda})-\mathrm{gcd}(2,\frac{q^{\frac{\tau}{2\epsilon}}-1}{\lambda})).
\end{align*}

Notice that for any $\epsilon\mid \tau_0$, $\frac{\tau}{\epsilon}$ is even and $\frac{\tau}{2\epsilon}$ is odd, then the lifting-the-exponent lemmas imply that 

\begin{align*}
N_{\tau}=\frac{\lambda}{\tau}\sum\limits_{\epsilon\mid \tau_0}\mu(\epsilon)(2-h).
\end{align*}

It follows that when $m$ is even, we have 

\begin{align*}
N_2=\frac{\lambda}{2}(2-h)
\end{align*}

and 

\begin{align*}
N_{\tau}=\frac{\lambda}{\tau}(2-h)\sum\limits_{\epsilon\mid \tau_0}\mu(\epsilon)=0
\end{align*}
for $\tau=2\tau_0$, where $1<\tau_0\mid m_0$.
\end{itemize}

\item[$\mathbf{Case~3:}$]

For $\tau=2^{v+1}\tau_0$, where $1\le \tau_0\mid m_0$, then

\begin{align*}
N_{\tau}&=\frac{1}{\tau}\sum\limits_{\epsilon\mid \tau}\mu(\epsilon)\cdot \mathrm{gcd}(n,q^{\frac{\tau}{\epsilon}}-1)\\
&=\frac{1}{\tau}(\sum\limits_{\epsilon\mid \tau_0}\mu(\epsilon)+\sum\limits_{\substack{\epsilon\mid 2\tau_0\\2\mid \epsilon}}\mu(\epsilon))\cdot \mathrm{gcd}(n,q^{\frac{\tau}{\epsilon}}-1)\\
&=\frac{1}{\tau}\sum\limits_{\epsilon\mid \tau_0}\mu(\epsilon)(\mathrm{gcd}(n,q^{\frac{\tau}{\epsilon}}-1)-\mathrm{gcd}(n,q^{\frac{\tau}{2\epsilon}}-1))\\
&=\frac{1}{\tau}\sum\limits_{\epsilon\mid \tau_0}\mu(\epsilon)(\lambda(q^{\frac{\tau}{2\epsilon}}+1)\cdot \mathrm{gcd}(\frac{q^m+1}{q^{\frac{\tau}{2\epsilon}}+1},\frac{q^{\frac{\tau}{2\epsilon}}-1}{\lambda})-\lambda\mathrm{gcd}(q^m+1,\frac{q^{\frac{\tau}{2\epsilon}}-1}{\lambda})).
\end{align*}

\begin{itemize}
\item[$\mathbf{Case~3.1:}$]
When $m$ is odd, that is $\tau=2\tau_0$, where $1\le \tau_0\mid m_0$. By the lifting-the-exponent lemmas, we find that for any $\epsilon\mid \tau_0$, 
\begin{align*}
v_2(q^m+1)=v_2(q+1)=v_2(q^{\frac{\tau}{2\epsilon}}+1)
\end{align*}
and 
\begin{align*}
v_2(q^{\frac{\tau}{2\epsilon}}-1)=v_2(q-1)
\end{align*}
hold. Then it follows that 
\begin{align*}
N_{\tau}=\frac{\lambda}{\tau}\sum\limits_{\epsilon\mid \tau_0}\mu(\epsilon)\cdot q^{\frac{\tau}{2\epsilon}}+\frac{\lambda}{\tau}(1-h)\sum\limits_{\epsilon\mid \tau_0}\mu(\epsilon).
\end{align*}
That means 
$$N_2=\frac{\lambda}{2}q+\frac{\lambda}{2}(1-h)=\frac{\lambda}{2}(q+1-h)$$
and 
$$N_{\tau}=\frac{\lambda}{\tau}\sum\limits_{\epsilon\mid \tau_0}\mu(\epsilon)\cdot q^{\frac{\tau}{2\epsilon}}$$
for $\tau=2\tau_0$, where $1<\tau_0\mid m_0$.

\item[$\mathbf{Case~3.2:}$]
When $m$ is even, that is $\tau=2^{v+1}\tau_0$, where $1\le \tau_0\mid m_0$ and $v\ge 1$. For any $\epsilon\mid \tau_0$, also by the lift-the-exponent lemmas, we have 
$$v_2(q^m+1)=v_2(q^{\frac{\tau}{2\epsilon}}+1)=1$$
and 
$$v_2(q^{\frac{\tau}{2\epsilon}}-1)=v_2(q-1)+v_2(\frac{\tau}{2\epsilon})\ge v_2(q-1)+1.$$
Then it follows that
\begin{align*}
N_{\tau}&=\frac{1}{\tau}\sum\limits_{\epsilon\mid \tau_0}\mu(\epsilon)(\lambda(q^{\frac{\tau}{2\epsilon}}+1)-2\lambda)\\
&=\frac{\lambda}{\tau}\sum\limits_{\epsilon\mid \tau_0}\mu(\epsilon)\cdot q^{\frac{\tau}{2\epsilon}}-\frac{\lambda}{\tau}\sum\limits_{\epsilon\mid \tau_0}\mu(\epsilon).
\end{align*}
That means, 
$$N_{2^{v+1}}=\frac{\lambda}{\tau}\sum\limits_{\epsilon\mid \tau_0}\mu(\epsilon)(q^{\frac{\tau}{2\epsilon}}-1)$$
and 
$$N_{\tau}=\frac{\lambda}{\tau}\sum\limits_{\epsilon\mid \tau_0}\mu(\epsilon)\cdot q^{\frac{\tau}{2\epsilon}}$$
for $\tau=2^{v+1}\tau_0$, where $1<\tau_0\mid m_0$.
\end{itemize}
\end{itemize}

Next, we turn to the even $q$ case. For $\tau=1$, it is clear that
\begin{align*}
N_{1}=\mathrm{gcd}(n,q-1)=\lambda\cdot \mathrm{gcd}(q^m+1,\frac{q-1}{\lambda})=\lambda. 
\end{align*}
For $1<\tau\mid m$, we have 
\begin{align*}
N_{\tau}&=\frac{1}{\tau}\sum\limits_{\epsilon\mid \tau}\mu(\epsilon)\cdot \mathrm{gcd}(n,q^{\frac{\tau}{\epsilon}}-1)\\
&=\frac{\lambda}{\tau}\sum\limits_{\epsilon\mid \tau}\mu(\epsilon)\cdot \mathrm{gcd}(q^m+1,\frac{q^{\frac{\tau}{\epsilon}}-1}{\lambda})\\
&=\frac{\lambda}{\tau}\sum\limits_{\epsilon\mid \tau}\mu(\epsilon)=0.
\end{align*}
For $\tau=2^{v+1}\tau_0$, where $1\le \tau_0\mid m_0$, then 
\begin{align*}
N_{\tau}&=\frac{1}{\tau}\sum\limits_{\epsilon\mid \tau}\mu(\epsilon)\cdot \mathrm{gcd}(n,q^{\frac{\tau}{\epsilon}}-1)\\
&=\frac{\lambda}{\tau}(\sum\limits_{\epsilon\mid \tau_0}\mu(\epsilon)+\sum\limits_{\substack{\epsilon\mid 2\tau_0\\2\mid \epsilon}}\mu(\epsilon))\cdot \mathrm{gcd}(q^m+1,\frac{q^{\frac{\tau}{\epsilon}}-1}{\lambda})\\
&=\frac{\lambda}{\tau}\sum\limits_{\epsilon\mid \tau_0}\mu(\epsilon)(\mathrm{gcd}(q^m+1,\frac{q^{\frac{\tau}{\epsilon}}-1}{\lambda})-\mathrm{gcd}(q^m+1,\frac{q^{\frac{\tau}{2\epsilon}}-1}{\lambda}))\\
&=\frac{\lambda}{\tau}\sum\limits_{\epsilon\mid \tau_0}\mu(\epsilon)((q^{\frac{\tau}{2\epsilon}}+1)\cdot \mathrm{gcd}(\frac{q^m+1}{q^{\frac{\tau}{2\epsilon}}+1},\frac{q^{\frac{\tau}{2\epsilon}}-1}{\lambda})-1)\\
&=\frac{\lambda}{\tau}\sum\limits_{\epsilon\mid \tau_0}\mu(\epsilon)\cdot q^{\frac{\tau}{2\epsilon}}.
\end{align*}
Here we complete the proof.

\end{proof}

\begin{corollary}\label{coro1}
Let the notations be defined as Theorem \ref{th2}. Assume further that $\lambda=q-1$.
	\begin{itemize}
		\item[(1)]
		If both $q$ and $m$ are odd, then
		\begin{equation*}
			N_{\tau}=	\left\{
			\begin{array}{lcl}
				q-1 , \ \tau=1;\\
				\frac{q(q-1)}{2}, \ \tau=2;\\
				\frac{q-1}{\tau}\sum\limits_{\epsilon\mid \tau_0}\mu(\epsilon)\cdot q^{\frac{\tau}{2\epsilon}}, \ \tau=2\tau_0 \ \mathrm{for} \ 1<\tau_0\mid m_0.
			\end{array} \right.
		\end{equation*}
		\item[(2)]
		If $q$ is odd and $m$ is even, then
		\begin{equation*}
			N_{\tau}=	\left\{
			\begin{array}{lcl}
				q-1 , \ \tau=1;\\
				\frac{q-1}{2}, \ \tau=2;\\
				\frac{q-1}{\tau}(q^{\frac{\tau}{2}}-1),\tau=2^{v+1};\\
				\frac{q-1}{\tau}\sum\limits_{\epsilon\mid \tau_0}\mu(\epsilon)\cdot q^{\frac{\tau}{2\epsilon}}, \ \tau=2^{v+1}\tau_0 \ \mathrm{for} \ 1<\tau_0\mid m_0.
			\end{array} \right.
		\end{equation*}
		\item[(3)]
		If $q$ is even, then
		\begin{equation*}
			N_{\tau}=	\left\{
			\begin{array}{lcl}
				q-1 , \ \tau=1;\\
				\frac{q-1}{\tau}\sum\limits_{\epsilon\mid \tau_0}\mu(\epsilon)\cdot q^{\frac{\tau}{2\epsilon}}, \ \tau=2^{v+1}\tau_0 \ \mathrm{for} \ \tau_0\mid m_0.
			\end{array} \right.
		\end{equation*}
	\end{itemize}
\end{corollary}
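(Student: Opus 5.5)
This is a direct specialization of Theorem \ref{th2} to $\lambda=q-1$. The only new input is the value of $h=\gcd(2,\frac{q-1}{\lambda})=\gcd(2,1)=1$; after that, each case of Theorem \ref{th2} is substituted in turn. The plan is therefore to compute $h$ first and then go through the three cases $(1)$--$(3)$, simplifying each formula.

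\textbf{Case (1): $q$ and $m$ both odd.} Theorem \ref{th2}(1) gives $N_1=\lambda h=q-1$. It also gives
$$N_2=\frac{\lambda}{2}(q-h+1)=\frac{(q-1)q}{2}.$$
For $\tau=2\tau_0$ with $1<\tau_0\mid m_0$, the formula $\frac{\lambda}{\tau}\sum_{\epsilon\mid\tau_0}\mu(\epsilon)q^{\frac{\tau}{2\epsilon}}$ carries over with $\lambda$ replaced by $q-1$.

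\textbf{Case (2): $q$ odd and $m$ even.} Theorem \ref{th2}(2) gives $N_1=q-1$ and
$$N_2=\frac{\lambda}{2}(2-h)=\frac{q-1}{2}.$$
The formulas for $\tau=2^{v+1}$ and for $\tau=2^{v+1}\tau_0$ are copied with $\lambda=q-1$.

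\textbf{Case (3): $q$ even.} Here $h$ plays no role. Theorem \ref{th2}(3) gives $N_1=\lambda=q-1$, and the remaining formula again holds with $\lambda=q-1$.

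There is no genuine obstacle, since everything is substitution. The one point to confirm is that the hypothesis $\lambda\mid q-1$ of Theorem \ref{th2} is met by $\lambda=q-1$, which it trivially is. It is also worth checking that $\frac{q(q-1)}{2}$ and $\frac{q-1}{2}$ are integers when $q$ is odd, as a sanity check on the formulas.
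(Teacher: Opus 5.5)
Your proposal is correct and matches how the paper obtains this corollary. The paper gives no separate proof and simply reads it off from Theorem \ref{th2} with $\lambda=q-1$. Then $h=\gcd(2,\frac{q-1}{\lambda})=1$ turns $\lambda h$, $\frac{\lambda}{2}(q-h+1)$ and $\frac{\lambda}{2}(2-h)$ into $q-1$, $\frac{q(q-1)}{2}$ and $\frac{q-1}{2}$, and the remaining formulas carry over verbatim, exactly as in your substitution.
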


\begin{corollary}\label{coro2}
Let the notations be defined as Theorem \ref{th2}. Assume further that $\lambda=1$.
	\begin{itemize}
		\item[(1)]
		If $q$ is odd, then
		\begin{equation*}
			N_{\tau}=	\left\{
			\begin{array}{lcl}
				2, \ \tau=1;\\
				\frac{1}{\tau}(q^{2^v}-1), \ \tau=2^{v+1};\\
				\frac{1}{\tau}\sum\limits_{\epsilon\mid \tau_0}\mu(\epsilon)\cdot q^{\frac{\tau}{2\epsilon}}, \ \tau=2^{v+1}\tau_0 \ \mathrm{for} \ 1<\tau_0\mid m_0 .
			\end{array} \right.
		\end{equation*}
		\item[(2)]
		If $q$ is even, then
		\begin{equation*}
			N_{\tau}= \left\{
			\begin{array}{lcl}
				1, \ \tau=1;\\
				\frac{1}{\tau}\sum\limits_{\epsilon\mid \tau_0}\mu(\epsilon)\cdot q^{\frac{\tau}{2\epsilon}}, \ \tau=2^{v+1}\tau_0 \ \mathrm{for} \ \tau_0\mid m_0 .
			\end{array} \right.
		\end{equation*}
	\end{itemize}
\end{corollary}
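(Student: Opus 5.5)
We obtain Corollary \ref{coro2} by specializing Theorem \ref{th2} to $\lambda=1$, so that $n=q^m+1$. Since $\lambda=1$, we have $h=\gcd(2,q-1)$, which equals $2$ for odd $q$ and $1$ for even $q$. For even $q$, part (3) of Theorem \ref{th2} with $\lambda=1$ already gives $N_1=1$ and $N_\tau=\frac{1}{\tau}\sum_{\epsilon\mid\tau_0}\mu(\epsilon)q^{\frac{\tau}{2\epsilon}}$ for $\tau=2^{v+1}\tau_0$. That is exactly part (2) of the corollary, so nothing more is needed there.

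For odd $q$ the corollary is stated uniformly in $m$, so we treat the two parities of $m$ separately.

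\emph{Case $m$ even ($v\ge 1$).} Part (2) of Theorem \ref{th2} with $\lambda=1$ and $h=2$ gives:
\begin{itemize}
\item $N_1=2$;
\item $N_2=\frac{1}{2}(2-h)=0$, so size $2$ does not occur;
\item $N_{2^{v+1}}=\frac{1}{\tau}(q^{\tau/2}-1)=\frac{1}{\tau}(q^{2^v}-1)$;
\item the stated formula for $\tau=2^{v+1}\tau_0$ with $\tau_0>1$.
\end{itemize}
This matches the corollary.

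\emph{Case $m$ odd ($v=0$).} Part (1) of Theorem \ref{th2} with $\lambda=1$ and $h=2$ gives:
\begin{itemize}
\item $N_1=2$;
\item $N_2=\frac{1}{2}(q-2+1)=\frac{q-1}{2}$;
\item the stated formula for $\tau=2\tau_0$ with $\tau_0>1$.
\end{itemize}
Since $2^{v+1}=2$ and $q^{2^v}=q$, the corollary's expression for $\tau=2^{v+1}$ is $\frac{1}{2}(q-1)$, which agrees with $N_2$. So the two cases merge into the single formula claimed.

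The only delicate point is this parity check for odd $q$: showing that $\frac{\lambda}{2}(2-h)=0$ for even $m$, and that $\frac{\lambda}{2}(q-h+1)=\frac{1}{2}(q^{2^v}-1)$ for odd $m$. As an independent sanity check, I would also confirm these values directly from Theorem \ref{thm 1}. For odd $m$ and odd $q$, $\gcd(q^m+1,q-1)=2$ and $\gcd(q^m+1,q^2-1)=q+1$, so $N_2=\frac{1}{2}\bigl((q+1)-2\bigr)=\frac{q-1}{2}$. For even $m$, $\gcd(q^m+1,q^2-1)=2$ when $m\ge 2$, so $N_2=\frac{1}{2}(2-2)=0$. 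Both agree with the specialization above, which completes the argument.
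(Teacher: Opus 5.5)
Your proposal is correct and follows the paper's implicit route: the corollary is Theorem \ref{th2} specialized to $\lambda=1$, with $h=\gcd(2,q-1)$. Your check that parts (1) and (2) of Theorem \ref{th2} merge into part (1) of the corollary is the one step the paper leaves unstated: $N_2=\frac{1}{2}(2-h)=0$ for even $m$, and $\frac{1}{2}(q+1-h)=\frac{1}{2}(q^{2^v}-1)$ with $v=0$ for odd $m$.
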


\subsection{Enumeration of LCD cyclic codes of length $\lambda(q^m+1)$ over $\mathbb{F}_q$}
Let $n$ be a positive integer coprime to $q$, and $\gamma \in \mathbb{Z}/n\mathbb{Z}$, with the $q$-cyclotomic coset $C_{\gamma}$ modulo $n$ containing $\gamma$. Denote by $\mathrm{Ld}(\gamma)$ the coset leader of $C_{\gamma}$. The following result, proved by Li and Ding \cite{lichengju}, presents a general formula for the number of LCD cyclic codes of length $n$ over $\mathbb{F}_{q}$.

\begin{lemma}\cite{lichengju}
	The total number of LCD cyclic codes over $\mathbb{F}_q$ of length $n$ is equal to $2^{|\Pi|}-1$, where
	\begin{align*}
		\Pi=\Gamma\setminus \{\mathrm{max}\{\gamma,\mathrm{Ld}(n-\gamma)\}\mid \gamma
		\in \Gamma, n-\gamma\notin C_{\gamma}\}.
	\end{align*}
\end{lemma}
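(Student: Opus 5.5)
The plan is to translate LCD cyclic codes into self-reciprocal generator polynomials and then count.

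By the lemma of Yang--Massey stated in Section 2, a cyclic code $(f(x))$ with $f\mid x^n-1$ is LCD if and only if $f$ is self-reciprocal. Equivalently, the defining set $T$ satisfies $T=-T$. The defining set $T$ is a union of $q$-cyclotomic cosets. The map $\gamma\mapsto n-\gamma$ permutes the cosets, sending $C_\gamma$ to $C_{n-\gamma}$, so it induces an involution $\iota$ on the set $\Gamma$ of coset leaders, $\gamma\mapsto \mathrm{Ld}(n-\gamma)$. A union of cosets is closed under negation precisely when it is a union of orbits of $\iota$. These orbits are of two kinds: singletons $\{C_\gamma\}$ with $n-\gamma\in C_\gamma$ (self-reciprocal $f_\gamma$, by the first lemma of Section 2), and pairs $\{C_\gamma,C_{n-\gamma}\}$ with $n-\gamma\notin C_\gamma$ (giving the self-reciprocal $\mathrm{lcm}(f_\gamma,f_{n-\gamma})$, by the second lemma).

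Hence the LCD cyclic codes correspond bijectively to subsets of the set of $\iota$-orbits. The next step is to identify that set of orbits with $\Pi$. From each pair orbit we delete the larger leader $\max\{\gamma,\mathrm{Ld}(n-\gamma)\}$ and keep the smaller one. Singleton orbits are never deleted, since the condition $n-\gamma\notin C_\gamma$ fails for them. Each pair loses exactly one element, because both $\gamma$ and $\mathrm{Ld}(n-\gamma)$ produce the same maximum. So $\Pi$ contains exactly one representative per orbit, and the number of admissible defining sets is $2^{|\Pi|}$.

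It remains to fix the convention behind the ``$-1$''. One of the $2^{|\Pi|}$ subsets must be excluded, namely the one excluded by the convention in \cite{lichengju}: either $T=\mathbb{Z}_n$ (the zero code, generator $x^n-1$) or the empty $T$. I would check their convention and then exclude the corresponding trivial code, which gives $2^{|\Pi|}-1$.

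The only delicate point is verifying that $\iota$ is well defined and involutive on leaders. This amounts to checking that $\mathrm{Ld}(n-\mathrm{Ld}(n-\gamma))=\gamma$ for $\gamma\in\Gamma$, which holds because $-C_\gamma$ is itself a coset. Everything else is bookkeeping.
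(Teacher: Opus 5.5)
Your argument is correct and is the standard one behind the cited result; the paper itself gives no proof and only quotes Li--Ding. LCD is equivalent to $T=-T$, negation-closed unions of cosets correspond to subsets of the orbits of $C_\gamma\mapsto C_{n-\gamma}$, and $\Pi$ is a transversal of those orbits, giving $2^{|\Pi|}$ such sets. (Every deleted element is the leader of a coset with $C_{n-\gamma}\neq C_\gamma$, so self-paired leaders always survive.)

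The $-1$ cannot come out of the mathematics, since both trivial codes are LCD. It only reflects the convention of leaving one of them out of the count, customarily the zero code with generator $x^n-1$.
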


From now on, we assume that $n=\lambda(q^{m}+1)$, where $\lambda \mid q-1$ and $m$ is a positive integer. To calculate explicitly the number of LCD cyclic codes of length $n$ over $\mathbb{F}_{q}$, we begin with the following lemma.

\begin{lemma}\label{lem1}
	For any $\gamma \in \mathbb{Z}/n\mathbb{Z}$, $\gamma$ and $n-\gamma$ lie in the same $q$-cyclotomic coset modulo $n$ if and only if one of the conditions below holds:
	\begin{itemize}
		\item[(1)] $\lambda \mid \gamma$;
		\item[(2)] $q \equiv 3 \pmod{4}$, both $\lambda$ and $m$ are even, and $\frac{n}{4} \mid \gamma$.
	\end{itemize}
\end{lemma}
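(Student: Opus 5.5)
The plan is to turn "$n-\gamma\in C_\gamma$" into a divisibility condition and analyse it one prime at a time. Specifically, $n-\gamma\in C_\gamma$ holds iff $\gamma q^j\equiv-\gamma\pmod n$ for some $0\le j<2m$, i.e.
\[
\lambda(q^m+1)\mid \gamma(q^j+1).
\]
The proof splits into the sufficiency direction and the necessity direction.

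\textbf{Sufficiency.} If $\lambda\mid\gamma$, take $j=m$; this is the computation behind Lemma \ref{lem2}. Writing $\gamma=\lambda\gamma'$ gives $\gamma(q^m+1)=\gamma' n\equiv0$. In case (2), $\gamma=\frac{n}{4}c$. Since $q\equiv3\pmod4$, we have $4\mid q+1$. Take $j=1$: then $\gamma(q+1)=\frac{q+1}{4}\,c\,n\equiv 0\pmod n$.

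\textbf{Necessity, reduction modulo $\lambda$.} Assume $n\mid\gamma(q^j+1)$. Since $\lambda\mid q-1$, we have $q^j+1\equiv2\pmod\lambda$, so $\gcd(\lambda,q^j+1)\mid 2$. From $\lambda\mid\gamma(q^j+1)$ we get $\lambda\mid 2\gamma$. Hence either $\lambda\mid\gamma$, which is condition (1), or $\lambda$ is even and $\gamma=\frac{\lambda}{2}u$ with $u$ odd. In the second case $q$ is odd, because $2\mid\lambda\mid q-1$. The hypothesis then becomes
\[
2(q^m+1)\mid u(q^j+1),\qquad u \text{ odd}.
\]

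\textbf{Necessity, the $2$-adic part (main step).} Because $u$ is odd, we need $v_2(q^j+1)\ge v_2(q^m+1)+1$. By the lifting-the-exponent lemma recorded in Section 2, $v_2(q^j+1)$ equals $1$ if $j$ is even and $v_2(q+1)$ if $j$ is odd. The same rule with $j$ replaced by $m$ gives $v_2(q^m+1)$. So the inequality forces $m$ even, $j$ odd and $v_2(q+1)\ge2$, i.e. $q\equiv3\pmod4$. This yields every parity condition in (2).

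\textbf{Necessity, the odd part.} It remains to show $\frac{q^m+1}{2}\mid u$. Use $\gcd(q^a+1,q^b+1)=q^{\gcd(a,b)}+1$ when $a/\gcd(a,b)$ and $b/\gcd(a,b)$ are both odd, and $=\gcd(2,q+1)$ otherwise. Here $m$ is even and $j$ is odd, so $m/\gcd(m,j)$ is even and $\gcd(q^j+1,q^m+1)=2$. Hence the odd part of $q^m+1$ is coprime to $q^j+1$ and must divide $u$. Since $q\equiv 3\pmod 4$ and $m$ is even, $v_2(q^m+1)=1$, so the odd part of $q^m+1$ is exactly $\frac{q^m+1}{2}$. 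Thus $u$ is a multiple of $\frac{q^m+1}{2}$, and
\[
\gamma=\tfrac{\lambda}{2}u\in \tfrac{\lambda(q^m+1)}{4}\mathbb{Z}=\tfrac n4\mathbb{Z}.
\]
This gives $\frac n4\mid\gamma$, completing case (2).

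I expect the main obstacle to be the bookkeeping between the $\lambda$-part and the $(q^m+1)$-part. These need not be coprime, since both may be even; that is why I pass to the reduced form $2(q^m+1)\mid u(q^j+1)$ before invoking the lifting-the-exponent lemma.
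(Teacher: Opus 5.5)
Your proof is correct, but it is organized differently from the paper's. The paper splits on the parameters. When $q$ is even, $q\equiv 1\pmod 4$, or $q\equiv 3\pmod 4$ with $m$ odd, it uses the lifting-the-exponent lemma to get $\gcd(n,q^t+1)=\gcd(q^m+1,q^t+1)\mid q^m+1$, so $\frac{n}{\gcd(n,q^t+1)}\mid\gamma$ forces $\lambda\mid\gamma$. When $q\equiv 3\pmod 4$ and $m$ is even, it treats even $t$ the same way and computes $\gcd(n,q^t+1)=2^{v_2(\lambda)+1}$ for odd $t$.

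You split on $\gamma$ instead. The congruence $\gamma(q^j+1)\equiv 2\gamma\pmod\lambda$ gives $\lambda\mid 2\gamma$ immediately, so the only way (1) can fail is $\gamma=\frac{\lambda}{2}u$ with $u$ odd. The $2$-adic comparison and $\gcd(q^j+1,q^m+1)=2$ then yield every hypothesis of (2), including that $\lambda$ is even, as a consequence. The tools are the same, but your route avoids computing $\gcd(n,q^t+1)$ in the easy parameter cases. It also makes explicit two points the paper glosses over. First, the formula $\gcd(n,q^t+1)=2^{v_2(\lambda)+1}$ silently uses $v_2(\lambda)\le v_2(q-1)=1$. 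Second, for odd $\lambda$ that formula yields $\frac{n}{2}\mid\gamma$ (which collapses into $\lambda\mid\gamma$), not $\frac{n}{4}\mid\gamma$.

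What the paper's version gives in return is the exact value of $\gcd(n,q^t+1)$ for every exponent $t$, i.e.\ precisely which powers $q^t$ send a given $\gamma$ to $-\gamma$. Your argument only records that such an exponent must be odd in the exceptional case.
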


\begin{proof}
The sufficiency is obvious, we only need to consider the necessity. For any $\gamma\in \mathbb{Z}/n\mathbb{Z}$, if $n-\gamma$ and $\gamma$ are in the same $q$-cyclotomic coset, then there exists an integer $t$ satisfies 
$$n-\gamma\equiv \gamma q^t\pmod{n}\Leftrightarrow \frac{n}{\mathrm{gcd}(n,q^t+1)}\mid \gamma.$$
When $q$ is even or $q\equiv 1\pmod{4}$ or $q\equiv 3\pmod{4}$ and $m$ is odd, for any positive integer $t$, combine the lift-the-exponent lemmas, we all have 
$$\mathrm{gcd}(n,q^t+1)=\mathrm{gcd}(q^m+1,q^t+1)\mid q^m+1$$
and then
$$\lambda=\frac{n}{\mathrm{gcd}(n,q^m+1)}\mid \frac{n}{\mathrm{gcd}(n,q^t+1)}\mid \gamma.$$
That means in this case, $n-\gamma$ and $\gamma$ are in the same $q$-cyclotomic coset if and only if $\lambda\mid \gamma$.

When $q\equiv 3\pmod{4}$ and $m$ is even. For any odd $t$, it is obvious that 
$$\mathrm{gcd}(n,q^t+1)=2^{v_2(\lambda)+1}.$$
For any even $t$, we have 
$$\mathrm{gcd}(n,q^t+1)=\mathrm{gcd}(q^m+1,q^t+1),$$
and then 
$$\lambda=\frac{n}{\mathrm{gcd}(n,q^m+1)}\mid \frac{n}{\mathrm{gcd}(n,q^t+1)}\mid \gamma.$$
That means in this case, $n-\gamma$ and $\gamma$ are in the same $q$-cyclotomic coset if and only if $\lambda\mid \gamma$ or $\frac{n}{4}\mid \gamma$.

Thus, we complete the proof of this lemma.
\end{proof}

Now we give the enumeration of LCD cyclic codes of length $n = \lambda(q^{m}+1)$ over $\mathbb{F}_{q}$.

\begin{theorem}\label{th3}
	Let $n=\lambda(q^m+1)$, where $\lambda\mid q-1$, $m$ is a positive integer. Set $m=2^vm_0$, where $v = v_{2}(m) \geq 0$ and $2 \nmid m_{0}$. Then the total number of LCD cyclic codes of length $n$ over $\mathbb{F}_{q}$ is $2^{\mid \Pi\mid}-1$, where $|\Pi|$ is given as follows:
	\begin{itemize}
		\item[(1)] Assume that $\mathrm{gcd}(2,\frac{q-1}{\lambda})=1$.
		\begin{itemize}
			\item[(1,1)] If both $q$ and $m$ are odd, then
			\begin{align*}
					|\Pi|=\frac{\lambda(q+2)+q+3}{4}+\sum\limits_{1<\tau_0\mid m}(\frac{\lambda+1}{4\tau_0}\sum\limits_{\epsilon\mid \tau_0}\mu(\epsilon)\cdot q^{\frac{\tau_0}{\epsilon}});
			\end{align*}
			\item[(1.2)]If $q$ is odd and $m$ is even, then
			\begin{equation*}
				|\Pi|= \left\{
				\begin{array}{lcl}
					\frac{3\lambda+4}{4}+\frac{\lambda+1}{2^{v+2}}(q^{2^v}-1)+\sum\limits_{1<\tau_0\mid m_0}(\frac{\lambda+1}{2^{v+2}\tau_0}\sum\limits_{\epsilon\mid \tau_0}\mu(\epsilon)\cdot q^{\frac{2^v\tau_0}{\epsilon}}), \ q\equiv 1\pmod{4};\\
					\frac{3(\lambda+2)}{4}+\frac{\lambda+1}{2^{v+2}}(q^{2^v}-1)+\sum\limits_{1<\tau_0\mid m_0}(\frac{\lambda+1}{2^{v+2}\tau_0}\sum\limits_{\epsilon\mid \tau_0}\mu(\epsilon)\cdot q^{\frac{2^v\tau_0}{\epsilon}}), \ q\equiv 3\pmod{4}.
				\end{array} \right.
			\end{equation*}
		\end{itemize}
		\item[(2)] Assume that $\mathrm{gcd}(2,\frac{q-1}{\lambda})=2$.
		\begin{itemize}
			\item[(2.1)]If both $q$ and $m$ are odd, then
			\begin{align*}
				|\Pi|=\frac{(\lambda+1)(q+3)}{4}+\sum\limits_{1<\tau_0\mid m}(\frac{\lambda+1}{4\tau_0}\sum\limits_{\epsilon\mid \tau_0}\mu(\epsilon)\cdot q^{\frac{\tau_0}{\epsilon}});
			\end{align*}
			\item[(2.2)]If $q$ is odd and $m$ is even, then
			\begin{align*}
				|\Pi|=\lambda+1+\frac{\lambda+1}{2^{v+2}}(q^{\frac{2^{v+1}}{\epsilon}}-1)+\sum\limits_{1<\tau_0\mid m_0}(\frac{\lambda+1}{2^{v+2}\tau_0}\sum\limits_{\epsilon\mid \tau_0}\mu(\epsilon)\cdot q^{\frac{2^v\tau_0}{\epsilon}}).
			\end{align*}
		\end{itemize} 
		\item[(3)]If $q$ is even, then
		\begin{align*}
			|\Pi|=\frac{\lambda+1}{2}+\sum\limits_{\tau_0\mid m_0}(\frac{\lambda+1}{2^{v+2}\tau_0}\sum\limits_{\epsilon\mid \tau_0}\mu(\epsilon)\cdot q^{\frac{2^v\tau_0}{\epsilon}}).
		\end{align*}
	\end{itemize}
\end{theorem}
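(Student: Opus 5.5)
We compute $|\Pi|$ by counting cosets, starting from the lemma of Li and Ding, which gives the number of LCD cyclic codes as $2^{|\Pi|}-1$. The set $\Pi$ keeps every coset leader whose coset is closed under $\gamma\mapsto n-\gamma$ (a self-reciprocal coset). Every remaining coset $C_\gamma\neq C_{n-\gamma}$ is paired with $C_{n-\gamma}$, and $\Pi$ keeps exactly one leader from each such pair. Hence
$$|\Pi| = S + \frac{|\Gamma|-S}{2} = \frac{|\Gamma|+S}{2},$$
where $S$ is the number of self-reciprocal cosets. So it suffices to compute $|\Gamma|=\sum_\tau N_\tau$ and $S$ separately in each case.

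\textbf{Step 1: computing $|\Gamma|$.} Sum the values $N_\tau$ from Theorem~\ref{th2} over all admissible $\tau$. For the $\tau$ of the form $2\tau_0$ (for odd $m$) or $2^{v+1}\tau_0$ (for even $m$), the terms are $\frac{\lambda}{\tau}\sum_{\epsilon\mid\tau_0}\mu(\epsilon)q^{\tau/(2\epsilon)}$. These will produce the sums over $\tau_0$ appearing in the statement. For even $m$, the exponent is rewritten as $q^{2^v\tau_0/\epsilon}$, and the $\tau=2^{v+1}$ term gives $\frac{\lambda}{2^{v+1}}(q^{2^v}-1)$. The values $N_1=\lambda h$ and $N_2$ supply the constant parts.

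\textbf{Step 2: computing $S$.} By Lemma~\ref{lem1}, a coset is self-reciprocal iff its elements are divisible by $\lambda$, or (in the exceptional case $q\equiv 3\pmod 4$ with $\lambda,m$ even) are multiples of $n/4$. Multiplication by $q$ preserves $\lambda\mid\gamma$, and the multiples of $\lambda$ modulo $n$ are in bijection with $\mathbb{Z}/(q^m+1)$ via $\gamma=\lambda\gamma'$, compatibly with multiplication by $q$. Therefore the number of self-reciprocal cosets coming from condition (1) is the number of $q$-cyclotomic cosets modulo $q^m+1$. This is read off from Corollary~\ref{coro2}, i.e.\ the $\lambda=1$ case of Theorem~\ref{th2}, summed over $\tau$. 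In the exceptional case we add the cosets of $n/4$ and $3n/4$ that are not multiples of $\lambda$; these are handled by a short direct check, since $v_2(q^m+1)=1$.

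\textbf{Step 3: combining.} Adding the two counts and halving, each $\tau_0$-sum acquires the coefficient $\frac{\lambda}{2\tau}+\frac{1}{2\tau}=\frac{\lambda+1}{2\tau}$. With $\tau=2\tau_0$ this gives $\frac{\lambda+1}{4\tau_0}$, and with $\tau=2^{v+1}\tau_0$ it gives $\frac{\lambda+1}{2^{v+2}\tau_0}$, matching the statement.

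The main obstacle is the bookkeeping of the constant terms. These come from $N_1$ and $N_2$, which depend on $h=\gcd(2,\frac{q-1}{\lambda})$, together with the $\tau=1,2$ contributions from Corollary~\ref{coro2}. They have to be tracked case by case to reach the stated constants such as $\frac{\lambda(q+2)+q+3}{4}$ or $\frac{3(\lambda+2)}{4}$. The modification for $q\equiv3\pmod4$ also needs care, since it both adds self-reciprocal cosets and shifts the constant. I would verify each constant against small examples, such as $q=5$, $m=2$, where the coset leaders are listed in the paper's example.
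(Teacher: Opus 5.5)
Your proposal is correct and is essentially the paper's argument: Li--Ding's lemma plus Lemma~\ref{lem1} identify the self-reciprocal cosets (those inside $\lambda\mathbb{Z}/n\mathbb{Z}$, plus $\{n/4,3n/4\}$ in the exceptional case), the remaining cosets are paired, and the totals come from Theorem~\ref{th2}. Your identity $|\Pi|=(|\Gamma|+S)/2$, with $S$ taken from the $\lambda=1$ count modulo $q^m+1$ (the extra coset adds $1$ to $S$ and so turns $\frac{3\lambda+4}{4}$ into $\frac{3(\lambda+2)}{4}$), just packages the paper's size-by-size count more compactly; carrying out the arithmetic reproduces every stated constant, except that in (2.2) the statement's $q^{2^{v+1}/\epsilon}$ is a typo for $q^{2^v}$, which is what your computation gives.
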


\begin{proof}
Assume $\mathrm{gcd}(2,\frac{q-1}{\lambda})=1$ and $q,m$ are odd. Notice that $\lambda$ is must be even in this case and the possible values of $\tau$ for $N_{\tau}>0$ are $\tau=1,2$ or $2\tau_0$, where $\tau_0>1$ is a divisor of $m_0$. Remembering that for any $\gamma\in \mathbb{Z}/n\mathbb{Z}$, $\gamma$ and $n-\gamma$ in the same $q$-cyclotomic coset if and only if $\lambda\mid \gamma$ in this case.

For any $\gamma\in \mathbb{Z}/n\mathbb{Z}$ satisfies $\tau=\mid C_{\gamma}\mid=1$, it is obvious to have

\begin{align*}
\gamma q\equiv\gamma\pmod{n}
&\Leftrightarrow \frac{\lambda(q^m+1)}{\lambda\mathrm{gcd}(q^m+1,\frac{q-1}{\lambda})}\mid \gamma\\
&\Leftrightarrow q^m+1\mid \lambda.
\end{align*}
Since $q^m+1\equiv 2\pmod{\lambda}$, then for any $\gamma\in \mathbb{Z}/n\mathbb{Z}$, $\mid C_{\gamma}\mid=1$,  $C_{\gamma}=C_{n-\gamma}$ if and only if $\gamma=\frac{\lambda}{2}(q^m+1)$ or $\gamma=0$. 

For any $\gamma\in \mathbb{Z}/n\mathbb{Z}$ satisfies $\tau=\mid C_{\gamma}\mid=2$, we have 
\begin{align*}
\gamma q^2\equiv \gamma\pmod{n}
&\Leftrightarrow \frac{\lambda(q^m+1)}{\lambda(q+1)\mathrm{gcd}(\frac{q^m+1}{q+1},\frac{q-1}{\lambda})}\mid \gamma\\
&\Leftrightarrow \frac{q^m+1}{q+1}\mid \gamma.
\end{align*}
Since $\frac{q^m+1}{q+1}\equiv 1\pmod{\lambda}$, then for any $\gamma\in \mathbb{Z}/n\mathbb{Z}$, $\mid C_{\gamma}\mid =2$, $C_{\gamma}=C_{n-\gamma}$ if and only if 
$$\gamma\in \{\lambda\frac{q^m+1}{q+1} i\mid 1\le i\le q, i\ne \frac{q+1}{2}\}.$$
Moreover, the elements in this set form $\frac{q-1}{2}$ cyclotomic cosets pairwise.

For any $\gamma\in \mathbb{Z}/n\mathbb{Z}$ satisfies $\tau=\mid C_{\gamma}\mid=2\tau_0$, where $1<\tau_0\mid m_0$, we have 
\begin{align*}
\gamma q^{2\tau_0}\equiv \gamma\pmod{n}&\Leftrightarrow \frac{\lambda(q^m+1)}{\lambda(q^{\tau_0}+1)\cdot \mathrm{gcd}(\frac{q^m+1}{q^{\tau_0}+1},\frac{q^{\tau_0}-1}{\lambda})}\mid \gamma\\
&\Leftrightarrow \frac{q^m+1}{q^{\tau_0}+1}\mid \gamma.
\end{align*}
Since $\frac{q^m+1}{q^{\tau_0}+1}\equiv \pmod{\lambda}$, then for any $\gamma\in \mathbb{Z}/n\mathbb{Z}$, $\mid C_{\gamma}\mid =2\tau_0$, $C_{\gamma}=C_{n-\gamma}$ if and only if 
$$\gamma\in \{\lambda\frac{q^m+1}{q+1}i\mid 1\le i\le q^{\tau_0}, \lambda\frac{q^m+1}{q+1}\nmid \gamma\}.$$
Furthermore, every $2\tau_0$ elements in this set form a cyclotomic coset and there are $ \frac{1}{2\tau_0}\sum\limits_{\epsilon\mid \tau_0}\mu(\epsilon)\cdot q^{\frac{\tau_0}{\epsilon}}$ such cyclotomic cosets in total.

Combining the above discussion with Theorem \ref{th2}, we conclude that
\begin{align*}
\mid\Pi\mid=&2+\frac{\lambda-2}{2}+\frac{q-1}{2}+\frac{\lambda q-(q-1)}{4}+\sum\limits_{1<\tau_0\mid m_0} \frac{1}{2\tau_0}\sum\limits_{\epsilon\mid \tau_0}\mu(\epsilon)\cdot q^{\frac{\tau_0}{\epsilon}}+\sum\limits_{1<\tau_0\mid m_0} \frac{\lambda-1}{4\tau_0}\sum\limits_{\epsilon\mid \tau_0}\mu(\epsilon)\cdot q^{\frac{\tau_0}{\epsilon}}\\
=&\frac{\lambda(q+2)+q+3}{4}+\sum\limits_{1<\tau_0\mid m_0}(\frac{\lambda+1}{4\tau_0}\sum\limits_{\epsilon\mid \tau_0}\mu(\epsilon)\cdot q^{\frac{\tau_0}{\epsilon}}).
\end{align*}
The other cases are similar and we omit it.
\end{proof}

\begin{corollary}\label{coro3}
	Let $n=(q-1)(q^m+1)$, where $q$ is odd and $m$ is an odd prime. Then the total number of LCD cyclic codes of length $n$ over $\mathbb{F}_q$ is equal to 
	$$2^{\frac{{(q+1)}^2}{4}+\frac{q^2(q^{m-1}-1)}{4m}}-1.$$
\end{corollary}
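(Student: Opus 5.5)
This corollary is the case $\lambda=q-1$, $q$ odd, $m=p$ an odd prime of Theorem~\ref{th3}. My plan is to evaluate $|\Pi|$ from that theorem and then check the result against a direct count using Corollary~\ref{coro1} and the Li--Ding lemma. First I determine which branch applies. Since $\frac{q-1}{\lambda}=1$, we have $h=\gcd(2,\frac{q-1}{\lambda})=1$, so we are in case (1.1) with $q$ and $m$ odd. Because $m$ is prime, the only divisor $\tau_0>1$ of $m_0=m$ is $\tau_0=m$. The Möbius sum then reduces to $\sum_{\epsilon\mid m}\mu(\epsilon)q^{m/\epsilon}=q^m-q$.

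Substituting $\lambda=q-1$ into the formula of (1.1) gives
\begin{align*}
|\Pi|=\frac{(q-1)(q+2)+q+3}{4}+\frac{q}{4m}(q^m-q)=\frac{q^2+2q+1}{4}+\frac{q^2(q^{m-1}-1)}{4m}=\frac{(q+1)^2}{4}+\frac{q^2(q^{m-1}-1)}{4m}.
\end{align*}
This is exactly the exponent in the statement, so the count $2^{|\Pi|}-1$ follows from the Li--Ding lemma.

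Since the proof of Theorem~\ref{th3} is sketchy, I would also recount $|\Pi|$ directly as a consistency check. By Corollary~\ref{coro1}(1), the coset counts are $N_1=q-1$, $N_2=\frac{q(q-1)}{2}$ and $N_{2m}=\frac{q-1}{2m}(q^m-q)$. By Lemma~\ref{lem1}, since $q$ and $m$ are odd, $C_\gamma$ is self-reciprocal exactly when $\lambda\mid\gamma$. The value of $|\Pi|$ is the number of self-reciprocal cosets plus half the number of the remaining ones, since those pair up as $C_\gamma\leftrightarrow C_{-\gamma}$.

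To count the self-reciprocal cosets of each size, I look at multiples of $\lambda$, i.e. $\gamma=\lambda j$ with $j\in\mathbb{Z}/(q^m+1)$. Under $q$-multiplication these behave like the cosets modulo $q^m+1$ (the case $\lambda=1$), counted by Corollary~\ref{coro2}: there are $2$ of size $1$, $\frac{q-1}{2}$ of size $2$, and $\frac{q^m-q}{2m}$ of size $2m$. The non-self-reciprocal contributions are then
\begin{itemize}
\item[(i)] size 1: $\frac{q-3}{2}$;
\item[(ii)] size 2: $\frac{q(q-1)-(q-1)}{4}$;
\item[(iii)] size $2m$: $\frac{(q-2)(q^m-q)}{4m}$.
\end{itemize}
Adding everything should reproduce the formula above. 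The main thing to watch is this bookkeeping, in particular that the number of self-reciprocal cosets of each size matches the $\lambda=1$ counts. The total indeed simplifies to $\frac{(q+1)^2}{4}+\frac{q(q^m-q)}{4m}$, which confirms the result.
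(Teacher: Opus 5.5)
Your proposal is correct and follows the paper's route. The corollary is Theorem~\ref{th3}(1.1) specialized to $\lambda=q-1$ (so $h=1$), with $\tau_0=m$ as the only nontrivial divisor, and your arithmetic $\frac{(q-1)(q+2)+q+3}{4}=\frac{(q+1)^2}{4}$ and $\frac{q}{4m}(q^m-q)=\frac{q^2(q^{m-1}-1)}{4m}$ is right. Your direct recount is exactly the bookkeeping the paper uses to prove Theorem~\ref{th3}(1.1): the self-reciprocal cosets are identified with cosets modulo $q^m+1$ via Lemma~\ref{lem1}, and the remaining cosets are paired off. It also checks out.
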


\begin{corollary}\label{coro4}
	Let $n=(q-1)(q^m+1)$, where $q$ is even and $m$ is an odd prime. Then the total number of LCD cyclic codes of length $n$ over $\mathbb{F}_q$ is equal to  
	$$2^{\frac{q^2(q^{m-1}+m-1)+2mq}{4m}}-1.$$
\end{corollary}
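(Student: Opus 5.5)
The corollary is the specialization of Theorem \ref{th3} to $\lambda=q-1$, $q$ even, $m$ an odd prime. So I would plug in and simplify. The only case of Theorem \ref{th3} that applies is case (3), since $q$ is a power of $2$. Here $m$ is odd, so $v=0$ and $m_0=m$, and the divisors $\tau_0\mid m_0$ are $\tau_0=1$ and $\tau_0=m$. Then
\begin{align*}
|\Pi|=\frac{\lambda+1}{2}+\frac{\lambda+1}{4}\,q+\frac{\lambda+1}{4m}\left(q^{m}-q\right),
\end{align*}
where I use $\sum_{\epsilon\mid 1}\mu(\epsilon)q^{1/\epsilon}=q$ and $\sum_{\epsilon\mid m}\mu(\epsilon)q^{m/\epsilon}=q^m-q$ because $m$ is prime.

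Next I substitute $\lambda+1=q$. This gives
\begin{align*}
|\Pi|=\frac{q}{2}+\frac{q^2}{4}+\frac{q^2(q^{m-1}-1)}{4m}
=\frac{2mq+mq^2+q^{m+1}-q^2}{4m}
=\frac{q^2(q^{m-1}+m-1)+2mq}{4m},
\end{align*}
which is the exponent claimed. The count of LCD cyclic codes is then $2^{|\Pi|}-1$ by the lemma of \cite{lichengju} as packaged in Theorem \ref{th3}.

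The main obstacle is bookkeeping rather than mathematics. First, I must check that case (3) of Theorem \ref{th3} really includes the $\tau_0=1$ term, that is, the sum runs over all $\tau_0\mid m_0$ and not only over $\tau_0>1$. Second, I must check that the leading constant $\frac{\lambda+1}{2}$ corresponds to the fixed points: there are $N_1=\lambda$ cosets of size $1$, those with $\lambda\mid\gamma$ are self-paired, and the remaining ones pair up. Both the statement of Theorem \ref{th3} and Lemma \ref{lem1} support this reading.

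As a cross-check I would recount $|\Pi|$ directly from Theorem \ref{th2}(3) and Lemma \ref{lem1}. Under Lemma \ref{lem1}, self-reciprocal cosets are exactly those with $\lambda\mid\gamma$, and these number $1+\sum_{\tau_0\mid m}\frac{1}{2\tau_0}\sum_{\epsilon\mid\tau_0}\mu(\epsilon)q^{\tau_0/\epsilon}$. Each remaining coset pairs with a distinct partner and contributes $\tfrac12$. The result is $|\Pi|=\tfrac12(\#\text{cosets}+\#\text{self-reciprocal cosets})$, which should reproduce the same expression.
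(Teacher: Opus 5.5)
Your proposal is correct and follows the paper's route: the corollary is the specialization of case (3) of Theorem \ref{th3} with $v=0$, $m_0=m$, $\tau_0\in\{1,m\}$ and $\lambda+1=q$, and your simplification to $\frac{q^2(q^{m-1}+m-1)+2mq}{4m}$ is right. Your cross-check via Theorem \ref{th2}(3) and Lemma \ref{lem1} is also sound: the self-reciprocal cosets are those with $\lambda\mid\gamma$, the remaining cosets pair up, and this gives $|\Pi|=\frac{\lambda+1}{2}\bigl(1+\frac{q}{2}+\frac{q^m-q}{2m}\bigr)$, the same value. It is worth keeping, since the paper writes out only case (1.1) of Theorem \ref{th3}.
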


\begin{example}
Let $(n,q)=(56,3)$. The number of LCD cyclic codes of length $n=56$ is 
\begin{align*}
2^{\frac{{(q+1)}^2}{4}+\frac{q^2(q^{m-1}-1)}{4m}}-1=2^{10}-1.
\end{align*}
By Magma, we have all $3$-cyclotomic coset modulo $56$ are
\begin{align*}
C_0, C_1, C_2, C_4, C_5, C_7, C_8, C_{10}, C_{11}, C_{14}, C_{28}, C_{29}, C_{35}.
\end{align*}
Among them, only $\gamma=0,2,4,8,10,14,28$ satisfy $C_{\gamma}=C_{n-\gamma}$. Then the number of LCD cyclic codes is 
\begin{align*}
2^{7+\frac{13-7}{2}}-1=2^{10}-1.
\end{align*}
\end{example}

\section{Concluding remarks}
Until now, the study of cyclic codes for lengths of the form $\lambda(q^m+1)$ has been confined to the special case $\lambda=1$. This paper extends these results to arbitrary divisors $\lambda$ of $q-1$. For BCH codes, the dimensions of several families are determined, and-more significantly-the minimum distance lower bound for $\mathcal{C}_{(q,n,2\delta+1,n-\delta+1)}$ is raised from $2\delta+1$ to $2(\delta+1)$.
Some of these codes are shown to be optimal. In addition, a necessary and sufficient condition for the dually‑BCH property is established when $m$ is odd. For LCD cyclic codes, the exact count of LCD cyclic codes is given.

Two problems remain and seem to require genuinely new ideas:

The cyclotomic coset structure modulo $\lambda(q^m+1)$ is considerably more complex when $m$ is even. We conjecture that the largest coset leaders are governed by the $2$-adic valuation of $m$. Future work will determine the first few largest coset leaders, investigate the parameters of the resulting BCH codes, and establish necessary and sufficient conditions for them to be dually‑BCH.

The method developed in this paper for treating cyclotomic cosets relies crucially on the condition $\lambda\mid q-1$, which guarantees a certain reversible property.
This approach no longer applies to moduli of the form $(q+1)(q^m\pm1)$, since $q+1\nmid q-1$. Nevertheless, extensive computations suggest that the cyclotomic cosets modulo $(q+1)(q^m\pm1)$ also exhibit favourable properties. Based on extensive computational evidence, we propose the following conjectures:
\begin{itemize}
\item Let $n=(q+1)(q^m+1)$, where $m$ is odd. Then 
\begin{equation*}
\delta_1=	\left\{
	\begin{array}{lcl}
		\frac{3(q+1)(q^m+1)}{4}, ~q\equiv 1\pmod{4};\\
	\frac{(3q+1)(q^m+1)}{4}, ~q\equiv 3\pmod{4}.
	\end{array} \right.
\end{equation*}
and
\begin{equation*}
\delta_2=	\left\{
	\begin{array}{lcl}
		\frac{(3q-1)(q^m+1)}{4}, ~q\equiv 1\pmod{4};\\
	\frac{(3q^2+q-4)q^{m-1}-(q-1)}{4}, ~q\equiv 3\pmod{4}.
	\end{array} \right.
\end{equation*}
\item Let $n=(q+1)(q^m-1)$, where $m$ is odd. Then 
\begin{align*}
\delta_1=q^{m+1}-q^{m-1}-q-1.
\end{align*}
and
\begin{equation*}
\delta_2=	\left\{
	\begin{array}{lcl}
	(q^2-1)q^2-(2q+1), ~m=3;\\
	 (q^2-1)q^{m-1}-(q+1)(q^{\frac{m-1}{2}}+1),~m>3.
	\end{array} \right.
\end{equation*}
\end{itemize}

Then we conjecture that properties of these cyclotomic cosets may yield codes with good parameters. Future work will investigate the cyclotomic coset structure for these lengths and explore the parameters (dimension, minimum distance, weight distribution, etc.) of the corresponding cyclic codes. More broadly, one is led to ask whether codes of length $n=(q^a-1)(q^b+1)$ or $n=(q^a+1)(q^b+1)$ also possess desirable properties.

\section*{Acknowledgment}
The third author was supported by Basic Research Program Young Scientists Guidance Project of Guizhou Province(QN[2025]186).
%
%The authors are grateful to the editor and the anonymous referees for taking time to read and comment on the article very carefully and insightfully. Their nice comments and suggestions helped them to improve the article very much.

\section*{Data availability}
Data sharing not applicable to this article as no datasets were generated or analysed during the current study.

\section*{Declaration of competing interest}
The authors declare that we have no known competing financial interests or personal relationships that 
could have appeared to influence the work reported in this paper.

\end{document}